\documentclass[journal]{new-aiaa}

\usepackage{amsmath,amsfonts}
\usepackage{graphicx}
\usepackage{bm}
\usepackage{booktabs}
\usepackage{longtable,tabularx}

\usepackage{amssymb}

\usepackage{amsthm}
\newtheorem{theorem}{Theorem}
\newtheorem{lemma}{Lemma}
\newtheorem{corollary}{Corollary}
\newtheorem{proposition}{Proposition}
\theoremstyle{remark}
\newtheorem{remark}{Remark}

\newcommand{\Exp}{\operatorname{Exp}}

\newcommand{\ad}{\operatorname{ad}}

\newcommand{\SE}{\mathrm{SE}_2(3)}
\newcommand{\SO}{\mathrm{SO}(3)}
\newcommand{\skewmat}[1]{[#1]_\times}

\title{Two-impulse Rendezvous Planning about Thrusting Spacecraft on $\mathrm{SE}_2(3)$}

\author{Micah Condie\footnote{Graduate Research Assistant, School of Aeronautics and Astronautics, Purdue University, West Lafayette, Indiana 47907. Corresponding author: condiem@purdue.edu.} and Abigaile Woodbury\footnote{Graduate Research Assistant, School of Aeronautics and Astronautics, Purdue University, West Lafayette, Indiana 47907.} and James Goppert\footnote{Research Assistant Professor, School of Aeronautics and Astronautics, Purdue University, West Lafayette, Indiana 47907.}}
\affil{Purdue University, West Lafayette, Indiana 47907}

\begin{document}

\maketitle

\begin{abstract}
The classical Hill--Clohessy--Wiltshire equations assume an unforced Keplerian reference trajectory, an assumption that is violated by missions requiring continuous thrust. We address this limitation with a relative motion framework on the $\mathrm{SE}_2(3)$ Lie group that encodes position, velocity, and attitude in a unified geometric state. For computational tractability we linearize both the gravity mismatch and the body-frame control mismatch between the two vehicles, deriving tight analytic upper bounds on the neglected higher-order terms in each case. Under circular coasting Keplerian assumptions the framework recovers the Hill--Clohessy--Wiltshire equations exactly, establishing classical rendezvous theory as a special case rather than an independent linearization. For thrusting reference trajectories the state transition matrix acquires off-diagonal attitude--translation coupling blocks absent from classical formulations, and absorbing the control mismatch re-centers the linearization at the mean of the two vehicles' inputs. A two-impulse rendezvous planner derived directly from the state transition matrix accounts for both effects. Numerical simulations confirm recovery of the classical equations to machine precision, demonstrate successful rendezvous about a thrusting reference where classical planners fail, and validate the gravity and control mismatch linearization bounds throughout the transfer.
\end{abstract}

\section*{Nomenclature}

{\renewcommand\arraystretch{1.0}
\noindent\begin{longtable*}{@{}l @{\quad=\quad} l@{}}
$a,\,\bar{a}$        & actual and reference body-frame accelerations, m/s$^2$ \\
$a_{\mathrm m}$      & mean of actual and reference body-frame accelerations, m/s$^2$ \\
$A(t),\,\tilde{A}(t)$ & system matrices of the matched- and mismatched-control systems, s$^{-1}$ \\
$A_C$                & kinematic coupling matrix encoding $\dot{\xi}_p = \xi_v$ \\
$\mathrm{ad}_\xi$    & Lie algebra adjoint operator \\
$\mathrm{Ad}^\vee_X$ & group adjoint map on coordinate vectors \\
$b(T,t_0)$           & particular integral of the forced log-error system \\
$c(\theta)$          & second-order coefficient of the inverse left Jacobian on $\mathrm{SO}(3)$, rad$^{-2}$ \\
$c_{\mathbb C}$      & complex representation of the left Jacobian on the plane $\hat\xi^\perp$ \\
$d$                  & relative separation $\|\rho\|$, m \\
$\mathrm{Exp},\,\mathrm{Log}$ & group exponential and logarithmic maps \\
$G(t)$               & body-frame tidal tensor, s$^{-2}$ \\
$G_I(t)$             & inertial-frame tidal (gravity-gradient) tensor, s$^{-2}$ \\
$h(\theta)$          & auxiliary function establishing $m_a > 2m_p$ \\
$J_\ell,\,J_r$       & left and right Jacobians of $\mathrm{SE}_2(3)$ \\
$k_b$                & forcing contribution to the arrival velocity error, m/s \\
$K_p,\,K_R$          & arrival velocity gains on initial position and attitude error, s$^{-1}$ and m/(s$\cdot$rad) \\
$L(\tilde{\nu})$     & linear-in-$\xi$ control mismatch correction matrix, s$^{-1}$ \\
$m,\,\bar{m}$        & actual and reference inertial-frame gravity vectors in $\mathbb{R}^9$ \\
$m_a,\,m_p$          & axial and planar coefficients of $\mathcal{M}$, rad$^{-1}$ \\
$\mathcal{M}(u,\xi_R)$ & deviation of the $Q$-route from its zero-attitude limit \\
$n$                  & orbital mean motion, rad/s \\
$p,\,\bar{p}$        & actual and reference inertial-frame position vectors, m \\
$Q(u,\xi_R)$         & off-diagonal block generator of the $\mathrm{SE}_2(3)$ left Jacobian \\
$r$                  & reference orbital radius $\|\bar{p}\|$, m \\
$\hat{r}$            & body-frame radial unit vector \\
$R,\,\bar{R}$        & actual and reference body-to-inertial attitude matrices, $\in\mathrm{SO}(3)$ \\
$S$                  & skew matrix $[\xi_R]_\times$ of the attitude error, rad \\
$\mathrm{SE}_2(3)$   & group of double direct spatial isometries \\
$\mathfrak{se}_2(3)$ & Lie algebra of $\mathrm{SE}_2(3)$ \\
$\mathrm{SO}(3)$     & rotation group \\
$T$                  & terminal time of the transfer horizon, s \\
$T_c$                & similarity transform relating log and classical coordinates \\
$u$                  & generic translation argument of $Q$ and $\mathcal{M}$ \\
$v,\,\bar{v}$        & actual and reference inertial-frame velocity vectors, m/s \\
$X,\,\bar{X}$        & actual and reference spacecraft states in $\mathrm{SE}_2(3)$ \\
$[\,\cdot\,]_\times$ & skew-symmetric matrix, $[x]_\times y = x\times y$ \\*[4pt]
$\alpha(\theta),\,\beta(\theta)$ & control mismatch residual bound coefficients, rad$^{-1}$ and dimensionless \\
$\gamma_1,\,\gamma_2$ & coefficients of the $\mathrm{SO}(3)$ left Jacobian, rad$^{-2}$ and rad$^{-3}$ \\
$\delta_{\mathrm{att}}$ & attitude-induced gravity linearization error, m/s$^2$ \\
$\delta_{\mathrm{grav}}$ & higher-order gravity linearization error, m/s$^2$ \\
$\delta v$           & physical impulsive velocity increment, m/s \\
$\Delta a,\,\Delta\omega$ & acceleration and angular velocity mismatch, actual minus reference \\
$\Delta t$           & transfer duration $T - t_0$, s \\
$\Delta\xi_v$        & impulsive correction in log-velocity coordinates, m/s \\
$\varepsilon_p,\,\varepsilon_v,\,\varepsilon_R$ & control mismatch linearization residuals \\
$\eta$               & left-invariant configuration error in $\mathrm{SE}_2(3)$ \\
$\theta,\,\varphi$   & attitude-error magnitude $\|\xi_R\|$, rad, and its angle from $\hat r$, rad \\
$\kappa$             & coupling-to-tidal acceleration ratio \\
$\mu$                & Earth gravitational parameter, m$^3$/s$^2$ \\
$\nu,\,\bar{\nu}$    & actual and reference body-frame input vectors in $\mathbb{R}^9$ \\
$\nu_{\mathrm m}$    & mean of actual and reference body-frame input vectors \\
$\tilde{\nu},\,\tilde{m}$ & body-frame input and inertial-frame gravity mismatch vectors \\
$\xi=(\xi_p,\xi_v,\xi_R)$ & log-error coordinates in $\mathbb{R}^9$, (m, m/s, rad) \\
$\rho$               & relative position vector $p-\bar p$, m \\
$\rho_{\mathrm{target}}$ & commanded standoff position, m \\
$\Phi(t,t_0),\,\tilde{\Phi}(t,t_0)$ & state transition matrices of the free and forced systems \\
$\Psi$               & attitude-induced commutator term in $\delta_{\mathrm{att}}$, s$^{-2}$ \\
$\Omega$             & local-vertical local-horizontal angular velocity vector, rad/s \\
$\omega,\,\bar{\omega}$ & actual and reference body-frame angular velocities, rad/s \\
$\omega_{\mathrm m}$ & mean of actual and reference body-frame angular velocities, rad/s \\
\end{longtable*}}

\setcounter{table}{0}

\section{Introduction}


\label{sec:intro}

Proximity operations, considered to be any space operation where two or more satellites are within 500 km~\cite{petersen2024}, include rendezvous missions, formation flying, inspection, and active debris removal. These operations underpin a growing class of space missions whose complexity continues to increase~\cite{woffinden2007,petersen2024}.
Accurate relative motion models are essential for efficient maneuver planning in these scenarios, where small errors in predicted trajectories translate directly into propellant penalties or mission failure.
The Hill--Clohessy--Wiltshire (HCW) equations~\cite{clohessy1960} remain the workhorse of proximity operations planning due to their analytical tractability: they admit closed-form state transition matrices and explicit two-impulse solutions that are fast enough for onboard implementation.
Their accuracy, however, rests on the critical assumption that the spacecraft follow an unforced Keplerian trajectory.
This assumption is increasingly violated in modern missions, including continuous station-keeping in geosynchronous orbit and low-thrust electric-propulsion transfers~\cite{garulli2011}.
 
Extensions to HCW have addressed many of its other limitations while retaining the unforced spacecraft assumption.
Curvilinear generalizations improve accuracy at large separations; the Schweighart--Sedwick equations incorporate $J_2$ perturbations~\cite{schweighart2002}; and the Yamanaka--Ankersen state transition matrix~\cite{yamanaka2002} extends HCW to eccentric reference orbits.
Relative orbital element (ROE) parameterizations~\cite{dAmico2010,gaias2015} provide geometrically meaningful state representations that have been validated in flight and extended to handle $J_2$-perturbed and eccentric orbits.
The comprehensive survey of Sullivan, Grimberg, and D'Amico~\cite{sullivan2017} assesses the full landscape of these models and their domains of validity.
Yet all of these formulations share the foundational requirement that the trajectory is unaccelerated, coasting under gravity alone.
When one, or both, of the spacecraft considered in a docking problem is thrusting, none of these models correctly captures the dynamics, and the resulting errors can be severe, as demonstrated in Section~\ref{sec:results} for the HCW case.
 
A parallel line of work has approached spacecraft relative motion through geometric and Lie group methods, motivated by the coordinate singularities and linearization inconsistencies that arise in Euclidean formulations.
Lee, Leok, and McClamroch~\cite{lee2007} developed Lie group variational integrators for the full two-body orbital mechanics problem on $\mathrm{SE}(3)$, providing geometrically exact simulations that preserve symplecticity and conserved quantities over long time spans without reprojection.
Filipe and Tsiotras~\cite{filipe2015} formulated coupled six-degree-of-freedom proximity operations dynamics using dual quaternions, unifying rotational and translational motion into a single kinematic model and demonstrating adaptive tracking controllers for satellite servicing. The invariant observer literature~\cite{barrau2015,barrau2017} introduced the notion of group-affine systems on Lie groups and showed that the resulting error dynamics admit a log-linear structure. Recent work on multirotor systems~\cite{lin2023} demonstrated that dynamics embedded on $\mathrm{SE}_2(3)$ are exactly group-affine when gravity is treated as
a constant, yielding exact log-linear error dynamics.
Building on this,~\cite{condie_loglinear} established that spacecraft orbital dynamics admit the same $\mathrm{SE}_2(3)$ formulation, with the position-dependent gravity term acting as a barrier to the system being group-affine. The resulting log-error dynamics were shown to be linear in the error state, with the exception of a nonlinear gravity mismatch term that is analytically bounded. The present paper is a direct sequel to~\cite{condie_loglinear} that linearizes and bounds the higher order terms of the error dynamics. It establishes its significance to the spacecraft docking problem by using the framework to develop a two-impulse rendezvous planner.

While existing geometric frameworks and their resulting rendezvous planners, such as HCW, excel at computational feasibility and simplicity, they fail to take into account body-frame acceleration or the required attitude--translation coupling that emerges. We provide a framework that accommodates thrusting spacecraft, can be computationally tractable with bounded error, and under the assumption of an unforced Keplerian reference orbit, recovers HCW exactly.

Concretely, the contributions are:
\begin{enumerate}
\item \textit{Gravity linearization with analytic error bounds} (Theorem~\ref{thm:gravity_linearization}). We linearize the gravity mismatch within the $\mathrm{SE}_2(3)$ log-error framework, decomposing the error into an attitude-induced component and a higher-order separation-dependent component. Tight analytic bounds are derived for each, and numerical experiments confirm both are small fractions of the dominant tidal term at typical proximity-operations separations.

\item \textit{Exact HCW recovery as a special case} (Theorem~\ref{thm:hcw}). Under circular coasting Keplerian assumptions, the linear time varying system reduces exactly to the classical HCW equations when converted to the LVLH frame. This establishes HCW as a special case of the $\mathrm{SE}_2(3)$ framework rather than an independent linearization.

\item \textit{Control mismatch linearization with analytic error bounds} (Theorem~\ref{thm:control_mismatch}). We linearize the body-frame control mismatch term---both the acceleration and the angular velocity components---in the $\mathrm{SE}_2(3)$ log-error framework to first order, much like the gravity term, and bound the residual in each block. Absorbing the linear part re-centers the system matrix at the mean of the two vehicles' inputs. This allows for computationally tractable solutions to docking problems where the two spacecraft have different body-frame control.

\item \textit{Two-impulse rendezvous planning for thrusting references}. A two-impulse planner derived directly from the state transition matrix (STM) that follows from the $\mathrm{SE}_2(3)$ log-error framework generalizes naturally to thrusting reference trajectories by accounting for attitude--translation coupling through off-diagonal STM blocks that are absent in all classical formulations. The planner is derived for cases with and without body-frame control mismatch. Simulations demonstrate successful rendezvous where HCW-based planners fail.
\end{enumerate}
 
The paper is organized as follows.
Section~\ref{sec:prelim} establishes notation and derives the log-error dynamics of~\cite{condie_loglinear}.
Section~\ref{sec:gravity_linearization} produces the gravity linearization and proves the error bounds of Theorem~\ref{thm:gravity_linearization}.
Section~\ref{sec:hcw_recovery} shows recovery of HCW as a special case (Theorem~\ref{thm:hcw}). Section~\ref{sec:control_mismatch} introduces the control mismatch linearization and the bounds on its higher order terms (Theorem~\ref{thm:control_mismatch}). Section~\ref{sec:dv_planning} develops the two-impulse planning method based on the log-error dynamics. Section~\ref{sec:results} presents numerical validation, and Section~\ref{sec:conclusions} concludes.


\section{Preliminaries}
\label{sec:prelim}

\subsection{Lie-Group Preliminaries}
\label{sec:lie-prelim}

The configuration of a rigid body, such as a spacecraft, does not live in a
vector space. While velocity and position can be added and subtracted
component-wise, attitude cannot. Lie-group theory provides the geometric
framework to handle this correctly. A Lie group $G$ is simultaneously a smooth
manifold and a group; for the matrix groups used in this work, elements compose
by matrix multiplication and invert by matrix inversion. The group product
supplies a consistent notion of difference in place of subtraction.

The rotation group $\mathrm{SO}(3)$ consists of the $3 \times 3$ orthogonal
matrices with determinant one. When representing rigid-body attitude, these are
commonly known as direction cosine matrices. The group $\mathrm{SE}_2(3)$ of
double direct spatial isometries~\cite{barrau2015} augments a rotation with two
translational components, allowing a single element
\begin{equation}
  X =
  \begin{bmatrix}
    R & v & p \\
    0_{1\times3} & 1 & 0 \\
    0_{1\times3} & 0 & 1
  \end{bmatrix}
  \in \mathrm{SE}_2(3),
  \label{eq:Xblock}
\end{equation}
to capture attitude $R \in \mathrm{SO}(3)$, velocity $v \in \mathbb{R}^3$, and
position $p \in \mathbb{R}^3$ together. We adopt $\mathrm{SE}_2(3)$ as the
configuration space for the spacecraft state.

Associated with every Lie group is its Lie algebra $\mathfrak{g}$, the tangent
space at the identity. Unlike the group, the algebra is a vector space. The
exponential and logarithm,
\begin{equation}
  \mathrm{Exp} : \mathfrak{g} \to G, \qquad \mathrm{Log} : G \to \mathfrak{g},
  \label{eq:explog}
\end{equation}
connect the two; for the matrix groups used here, $\mathrm{Exp}$ is the matrix
exponential. A configuration error is thus formed in the group and mapped by
$\mathrm{Log}$ into the algebra, where it becomes an ordinary vector in
$\mathbb{R}^9$ suitable for analysis. The algebra associated with the group $\mathrm{SE}_2(3)$ is $\mathfrak{se}_2(3)$.

The algebra carries additional structure used throughout. The hat map
$(\cdot)^\wedge : \mathbb{R}^9 \to \mathfrak{se}_2(3)$ sends a coordinate vector
$\xi = (\xi_p, \xi_v, \xi_R) \in \mathbb{R}^9$, with position, velocity, and
attitude blocks, to the algebra element
\begin{equation}
  \xi^\wedge =
  \begin{bmatrix}
    [\xi_R]_\times & \xi_v & \xi_p \\
    0_{1\times3} & 0 & 0 \\
    0_{1\times3} & 0 & 0
  \end{bmatrix}
  \in \mathfrak{se}_2(3),
  \label{eq:xiblock}
\end{equation}
with inverse the vee map $(\cdot)^\vee$; here $[\,\cdot\,]_\times$ is the
skew-symmetric operator mapping a vector to the matrix representing its cross
product. The algebra is closed under the matrix commutator $[A,B] = AB - BA$, where $A,B \in \mathfrak{se}_2(3)$.
This defines the algebra adjoint $\mathrm{ad}_A B = [A,B]$. Its group-level
counterpart, the group adjoint $\mathrm{Ad}_X B = X B X^{-1}$, transports
algebra elements between frames.
This is a linear operator, acting on algebra coordinate vectors through a $9 \times 9$ matrix (for the $\mathrm{SE}_2(3)$ group) $\mathrm{Ad}_X^\vee$.

\subsection{Reference Frames and Spacecraft State}
\label{sec:frames}

Three right-handed Cartesian frames are used throughout, illustrated in
Fig.~\ref{fig:frames}. The inertial frame $\mathcal{I}$ is an Earth-Centered
Inertial (ECI) frame with origin at the Earth's center and components $\hat e_1, \hat e_2$, and $\hat e_3$. Each spacecraft carries
a right-handed body frame $\mathcal{B} = \{\hat b_1, \hat b_2, \hat b_3\}$ and $\mathcal{\bar B} = \{\hat{\bar{b_1}}, \hat{\bar{b_2}}, \hat{\bar{b_3}}\}$ fixed to the vehicle, with origin at the
center of mass. The control acceleration $a$ and angular velocity $\omega$ are
expressed in the body frame.

The spacecraft configuration is the group element $X \in \mathrm{SE}_2(3)$ of
Eq.~\eqref{eq:Xblock}, with body-to-inertial attitude $R \in \mathrm{SO}(3)$,
inertial-frame velocity $v \in \mathbb{R}^3$, and inertial-frame position
$p \in \mathbb{R}^3$. A second trajectory $\bar X(t) \in \mathrm{SE}_2(3)$, with
components $(\bar R, \bar v, \bar p)$ in the same block form, is the reference trajectory; $X(t)$ is the deputy. For the purpose of the docking problem, we sometimes refer to $\bar X$ as the chief and $X$ as the deputy.
%

\begin{figure}
\centering
\includegraphics[width=0.85\columnwidth]{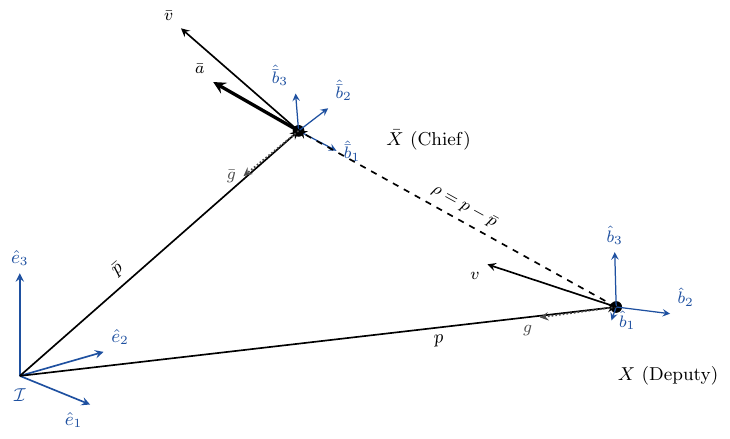}

\caption{Spacecraft state on $\mathrm{SE}_2(3)$ and chief--deputy geometry.
Coordinate frames are shown in blue. The inertial frame $\mathcal{I}$ is
Earth-Centered Inertial, with Earth at the origin. The chief (reference) state
$\bar{X}$ has body frame $\bar{\mathcal{B}}$ with axes $\{\hat{\bar{b}}_1,
\hat{\bar{b}}_2, \hat{\bar{b}}_3\}$, inertial position $\bar{p}$, inertial
velocity $\bar{v}$, body-frame acceleration $\bar{a}$, and experiences
gravitational acceleration $\bar{g} = g(\bar{p})$ directed toward Earth. The
deputy state $X$ has analogous quantities $\mathcal{B}$, $\{\hat{b}_1,
\hat{b}_2, \hat{b}_3\}$, $p$, $v$, and $g = g(p)$. The body frames are
generally distinct: the attitudes $\bar{R}, R \in \mathrm{SO}(3)$ rotate
vectors from $\bar{\mathcal{B}}$ and $\mathcal{B}$ respectively into
$\mathcal{I}$, so the two triads are oriented differently with respect to
$\mathcal{I}$. The relative position $\rho = p-\bar{p}$ enters the gravity
Taylor expansion of Section~\ref{sec:gravity_linearization}. The invariant
configuration error $\eta = \bar{X}^{-1}X$ is expressed in the chief's body
frame.}
\label{fig:frames}
\end{figure}

\subsection{Inputs and Equations of Motion}
\label{sec:eom}

Both spacecraft are driven by body-frame inputs. Let $a \in \mathbb{R}^3$
denote the body-frame acceleration and $\omega \in \mathbb{R}^3$ the
body-frame angular velocity of the deputy, with $\bar a$, $\bar\omega$ the
corresponding quantities for the chief. The equations of motion are
\begin{equation}
  \dot p \;=\; v, \qquad
  \dot v \;=\; R\, a \;+\; g(p), \qquad
  \dot R \;=\; R\, [\omega]_\times,
  \label{eq:eom-actual}
\end{equation}
with the analogous equations holding for $(\bar p, \bar v, \bar R)$ driven
by $(\bar a, \bar\omega)$. The matrix $R$ rotates the body-frame thrust $a$
into the inertial frame, and $[\,\cdot\,]_\times$ is the skew-symmetric map of
Sec.~\ref{sec:lie-prelim}. The gravitational acceleration
$g(p) = -\mu\, p / \|p\|^3$, with $\mu$ the Earth's gravitational parameter,
is defined in the inertial frame and directed from the spacecraft toward the
center of the Earth (Fig.~\ref{fig:frames}).

Following~\cite{condie_loglinear}, Eqs.~\eqref{eq:eom-actual} admit the
compact group form
\begin{equation}
  \dot X \;=\; (M-C)X \;+\; X(N+C),
  \label{eq:Xdot}
\end{equation}
where
\begin{equation}
  M \;=\;
  \begin{bmatrix}
    0_{3\times 3} & g(p) & 0_{3\times 1}\\
    0_{1\times 3} & 0 & 0\\
    0_{1\times 3} & 0 & 0
  \end{bmatrix},\quad
  N \;=\;
  \begin{bmatrix}
    [\omega]_\times & a & 0_{3\times 1}\\
    0_{1\times 3} & 0 & 0\\
    0_{1\times 3} & 0 & 0
  \end{bmatrix},\quad
  C \;=\;
  \begin{bmatrix}
    0_{3\times 3} & 0_{3\times 1} & 0_{3\times 1}\\
    0_{1\times 3} & 0 & 1\\
    0_{1\times 3} & 0 & 0
  \end{bmatrix},
  \label{eq:MNC}
\end{equation}
with $\bar M, \bar N$ defined analogously. Here $M$ carries the
inertial-frame gravitational acceleration and $N$ carries the body-frame
inputs; the constant matrix $C$ encodes the kinematic coupling
$\dot p = v$ but is not an element of $\mathfrak{se}_2(3)$.

The body-frame input vector and its reference are
\begin{equation}
  \nu = N^\vee \;=\;
  \begin{bmatrix} 0 \\ a \\ \omega \end{bmatrix},
  \qquad
  \bar \nu = \bar N^\vee \;=\;
  \begin{bmatrix} 0 \\ \bar a \\ \bar\omega \end{bmatrix},
  \label{eq:ndef}
\end{equation}
where $(\cdot)^\vee$ is the vee map of Sec.~\ref{sec:lie-prelim}. The leading
three-zero block reflects that position is not directly forced by a
body-frame input; the kinematic coupling $\dot p = v$ is carried by $C$.
\subsection{Configuration-Tracking Error}
\label{sec:error}

The configuration error compares the actual state to the reference state.
On $\mathrm{SE}_2(3)$, two natural choices are available~\cite{barrau2017}:
the left-invariant error $X^{-1}\bar X$ and its inverse $\bar X^{-1} X$. We
adopt
\begin{equation}
  \eta = \bar X^{-1} X \;\in\; \mathrm{SE}_2(3),
  \label{eq:etadef}
\end{equation}
which expresses the deputy state in the chief's frame. The position
component of the associated log-coordinates then represents the
deputy-minus-chief position resolved in the chief's frame, matching the
classical Hill--Clohessy--Wiltshire sign convention.\footnote{The prior
work~\cite{condie_loglinear} used the opposite convention
$\eta = X^{-1}\bar X$. The dynamics derived below in
Proposition~\ref{prop:loglinear} can be obtained from the corresponding
result of~\cite{condie_loglinear} by swapping actual and reference
quantities.}

The log-error state is
\begin{equation}
  \xi = \mathrm{Log}(\eta)^\vee
  \;=\;
  \begin{bmatrix} \xi_p \\ \xi_v \\ \xi_R \end{bmatrix}
  \;\in\; \mathbb{R}^9,
  \label{eq:xidef}
\end{equation}
where $\xi_p, \xi_v, \xi_R \in \mathbb{R}^3$ are the position, velocity,
and attitude error coordinates. The error $\eta$ is invariant under
simultaneous left-translation of both states by any group element, so
$\xi$ depends only on the relative configuration and not on the absolute
inertial pose. The representation~\eqref{eq:xidef} is well-defined for
$\|\xi_R\| < \pi$, which is benign for proximity operations.

\subsection{Log-Error Dynamics}
\label{sec:logerror}

\begin{proposition}[Log-error dynamics on $\mathrm{SE}_2(3)$]
\label{prop:loglinear}
Let $X(t)$ and $\bar X(t)$ evolve under~\eqref{eq:eom-actual} and its
overlined counterpart, and define the input and gravity mismatch vectors
\begin{equation}
  \tilde \nu = \nu - \bar \nu =
  \begin{bmatrix} 0 \\ a - \bar a \\ \omega - \bar \omega \end{bmatrix},
  \qquad
  \tilde m = m - \bar m =
  \begin{bmatrix} 0 \\ g(p) - g(\bar p) \\ 0 \end{bmatrix}.
  \label{eq:tildes}
\end{equation}
Then the log-error~\eqref{eq:xidef} satisfies
\begin{equation}
  \dot\xi
  \;=\;
  \bigl( -\mathrm{ad}_{\nu} + A_C \bigr)\, \xi
  + J_\ell^{-1}(\xi)\,\tilde \nu
  + J_r^{-1}(\xi)\,\mathrm{Ad}^\vee_{X^{-1}}\,\tilde m,
  \label{eq:loglinear}
\end{equation}
where $\mathrm{ad}_{\nu} : \mathbb{R}^9 \to \mathbb{R}^9$ is the matrix
representation of the Lie-algebra adjoint, $\mathrm{Ad}^\vee_{X^{-1}}$ is
the group-adjoint matrix induced by $X^{-1}$, $J_\ell(\xi)$ and $J_r(\xi)$
are the left and right Jacobians of $\mathrm{SE}_2(3)$ at $\xi$, and
\begin{equation}
  A_C
  \;=\;
  \begin{bmatrix}
    0_{3\times 3} & I_3 & 0_{3\times 3} \\
    0_{3\times 3} & 0_{3\times 3} & 0_{3\times 3} \\
    0_{3\times 3} & 0_{3\times 3} & 0_{3\times 3}
  \end{bmatrix}
  \label{eq:AC}
\end{equation}
encodes the kinematic coupling $\dot\xi_p = \xi_v$ in the log-coordinates.
\end{proposition}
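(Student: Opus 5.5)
The plan is to differentiate the group error $\eta=\bar X^{-1}X$ directly, using the compact form~\eqref{eq:Xdot}. I will then sort the resulting terms into three kinds: a commutator term, a left-multiplied input term, and a right-multiplied gravity term. Finally I will push each through the differential of $\mathrm{Log}$.

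\textbf{Step 1: differentiate $\eta$.} From $\frac{d}{dt}\bar X^{-1}=-\bar X^{-1}\dot{\bar X}\bar X^{-1}$ and~\eqref{eq:Xdot} for both trajectories,
\begin{equation*}
\dot\eta = -\bar X^{-1}(\bar M-C)X-(\bar N+C)\eta+\bar X^{-1}(M-C)X+\eta(N+C).
\end{equation*}
The two $C$ terms sandwiched between $\bar X^{-1}$ and $X$ cancel. Writing $\bar X^{-1}=\eta X^{-1}$ in the gravity term and setting $\tilde N=N-\bar N$, I regroup as
\begin{equation*}
\dot\eta = \bigl[\eta(N+C)-(N+C)\eta\bigr] + \tilde N\,\eta + \eta\,\bigl(X^{-1}\tilde M X\bigr),
\end{equation*}
with $\tilde M=M-\bar M$. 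The regrouping uses $-(\bar N+C)\eta=-(N+C)\eta+\tilde N\eta$. It deliberately places the commutator at the deputy input $N$, which is why $\mathrm{ad}_\nu$ (not $\mathrm{ad}_{\bar\nu}$) appears in~\eqref{eq:loglinear}. The last term equals $\eta\,(\mathrm{Ad}^\vee_{X^{-1}}\tilde m)^\wedge$. Note that $\tilde N=\tilde\nu^\wedge$ and $\tilde M=\tilde m^\wedge$ lie in $\mathfrak{se}_2(3)$.

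\textbf{Step 2: map the three pieces to $\dot\xi$.} With $\eta=\mathrm{Exp}(\xi^\wedge)$, I use the standard identities relating group perturbations to log-coordinates. A right perturbation $\dot\eta=\eta\,\delta^\wedge$ gives $\dot\xi=J_r^{-1}(\xi)\delta$, and a left perturbation $\dot\eta=\delta^\wedge\eta$ gives $\dot\xi=J_\ell^{-1}(\xi)\delta$. Since $d\,\mathrm{Log}$ is linear in the tangent vector, contributions superpose. The $\tilde N\eta$ term therefore yields $J_\ell^{-1}(\xi)\tilde\nu$, and the gravity term yields $J_r^{-1}(\xi)\mathrm{Ad}^\vee_{X^{-1}}\tilde m$.

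\textbf{Step 3: the commutator term (main obstacle).} This is the delicate step, because $C\notin\mathfrak{se}_2(3)$. I set $D=N+C$ and consider the one-parameter family $\phi_s(\eta)=e^{-sD}\eta\,e^{sD}$, whose $s$-derivative at $s=0$ is exactly $\eta D-D\eta$.

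I first verify that conjugation by $e^{sD}$ preserves $\mathrm{SE}_2(3)$, hence is a group automorphism. Direct block computation gives
\begin{equation*}
e^{sC}=\begin{bmatrix}I&0&0\\0&1&s\\0&0&1\end{bmatrix},
\end{equation*}
and conjugating a group element by it sends $p\mapsto p+sv$, which stays in the group. Conjugation by $e^{sN}\in\mathrm{SE}_2(3)$ is inner. Since $D$ normalizes the algebra and its exponential generates the automorphism flow, it suffices to check that $[\xi^\wedge,D]\in\mathfrak{se}_2(3)$ for every $\xi$.

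Because $\phi_s$ is an automorphism, it commutes with $\mathrm{Exp}$. Hence $\mathrm{Log}(\phi_s(\eta))=\phi_s(\xi^\wedge)$, and differentiating in $s$ gives the linear contribution $\dot\xi^\wedge=\xi^\wedge D-D\xi^\wedge$ with no Jacobian. It remains to evaluate this matrix:
\begin{equation*}
\xi^\wedge N-N\xi^\wedge=-[\nu^\wedge,\xi^\wedge], \qquad \xi^\wedge C-C\xi^\wedge.
\end{equation*}
The first gives $-\mathrm{ad}_\nu\xi$. For the second, $\xi^\wedge C$ copies the velocity column $\xi_v$ into the position column, while $C\xi^\wedge$ copies the (zero) fifth row of $\xi^\wedge$ into its fourth row and so vanishes. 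This gives exactly $A_C\xi$.

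Summing the three contributions yields~\eqref{eq:loglinear}. The restriction $\|\xi_R\|<\pi$ guarantees that $\mathrm{Log}$ is well-defined and that $J_\ell$ and $J_r$ are invertible along the trajectory.
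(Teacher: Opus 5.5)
Your proof is correct, and it reaches the result by a genuinely different route from the paper's. The paper left-trivializes all of $\dot\eta$ and applies $J_r^{-1}(\xi)$ once. It then relies on three facts. The first is the series identity $J_r^{-1}(\xi)(I-e^{-\mathrm{ad}_{\xi^\wedge}})=\mathrm{ad}_{\xi^\wedge}$, which turns $(I-\mathrm{Ad}_{\eta^{-1}})N$ into $-\mathrm{ad}_\nu\xi$. The second is the left/right identity $J_r^{-1}\mathrm{Ad}^\vee_{\eta^{-1}}=J_\ell^{-1}$, which converts the input mismatch. The third, cited from the prior paper, is that $C-\eta^{-1}C\eta=(I-e^{-\mathrm{ad}_{\xi^\wedge}})C$ lies in $\mathfrak{se}_2(3)$ even though $C$ does not.

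You instead regroup so the input mismatch appears directly as a left perturbation $\tilde N\eta$, which gives $J_\ell^{-1}\tilde\nu$ with no conversion identity. You then bundle $N$ and $C$ into $D$ and argue that $\eta\mapsto e^{-sD}\eta\,e^{sD}$ is a flow of automorphisms with which $\mathrm{Log}$ is equivariant. The commutator therefore contributes exactly $[\xi^\wedge,D]$, with no Jacobian.

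The two arguments agree infinitesimally: the paper's series identity, applied to $D$, is the $s$-derivative at $s=0$ of your equivariance. Your version buys two things. It is self-contained, since the fact that the $C$-contribution lies in the algebra falls out of your one-line computation of $[\xi^\wedge,C]$ rather than being imported. It also shows why the result is log-linear: the $N+C$ part generates automorphisms, and only the mismatch terms pick up Jacobians. The paper's version keeps everything in one left-trivialized expression, which suits its later block analysis of $J_r^{-1}$ and $J_\ell^{-1}$.

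Two small tidy-ups:
\begin{itemize}
\item Your separate checks on $e^{sC}$ and $e^{sN}$ do not settle conjugation by $e^{s(N+C)}$, because $N$ and $C$ do not commute. The criterion that does the work is the one you state next, $\mathrm{ad}_D(\mathfrak{se}_2(3))\subseteq\mathfrak{se}_2(3)$. Then $\mathrm{Ad}_{e^{sD}}=e^{s\,\mathrm{ad}_D}$ preserves the algebra, and hence conjugation preserves the connected group.
\item The step $\mathrm{Log}(\phi_s(\eta))=\phi_s(\xi^\wedge)$ requires the rotation part of $\phi_s(\xi^\wedge)$ to stay inside the injectivity domain. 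It does, because $e^{-s\,\mathrm{ad}_D}$ acts on $\xi_R$ by the rotation $e^{-s[\omega]_\times}$; alternatively, it follows by continuity near $s=0$.
\end{itemize}
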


\begin{proof}
Differentiating $\eta = \bar X^{-1} X$ and using
$\tfrac{d}{dt}(\bar X^{-1}) = -\bar X^{-1}\dot{\bar X}\,\bar X^{-1}$
together with~\eqref{eq:Xdot} gives
\begin{equation*}
  \dot\eta
  \;=\; \bar X^{-1}(M-\bar M)\bar X\, \eta
  \;+\; \eta\, N \;-\; \bar N\, \eta
  \;+\; \eta\, C \;-\; C\, \eta.
\end{equation*}
Defining $\tilde M =  M - \bar M$ and $\tilde N = N - \bar N$
and left-multiplying by $\eta^{-1}$ yields
\begin{equation*}
  \eta^{-1}\dot\eta
  \;=\; \mathrm{Ad}_{X^{-1}}\tilde M
  \;+\; (I - \mathrm{Ad}_{\eta^{-1}})\, N
  \;+\; \mathrm{Ad}_{\eta^{-1}}\, \tilde N
  \;+\; (C - \eta^{-1} C \eta),
\end{equation*}
where the identity $\eta^{-1}\bar X^{-1} = X^{-1}$ has been used to
collapse the gravity adjoint. As shown
in~\cite[App.~A]{condie_loglinear},
$C - \eta^{-1} C \eta = (I - e^{-\mathrm{ad}_{\xi^\wedge}})\, C \in \mathfrak{se}_2(3)$,
so the vee operator may be applied. The right-trivialized differential of
the log map gives $\dot\xi^\wedge = J_r^{-1}(\xi^\wedge)\bigl(\eta^{-1}\dot\eta\bigr)$;
combining with $\mathrm{Ad}_{\eta^{-1}} = e^{-\mathrm{ad}_{\xi^\wedge}}$
and $J_r^{-1}(\xi^\wedge)\bigl(I - e^{-\mathrm{ad}_{\xi^\wedge}}\bigr) = \mathrm{ad}_{\xi^\wedge}$
reduces the $N$ contribution to $\mathrm{ad}_{\xi^\wedge} N = -\mathrm{ad}_N \xi^\wedge$.
The input mismatch contribution is transported by the left/right
identity~\cite[\S7.1.5]{barfoot2024}
\begin{equation*}
  J_r^{-1}(\xi)\,\mathrm{Ad}^\vee_{\eta^{-1}} \;=\; J_\ell^{-1}(\xi),
\end{equation*}
which converts $J_r^{-1}(\xi)\,\mathrm{Ad}^\vee_{\eta^{-1}}\tilde\nu$ into
$J_\ell^{-1}(\xi)\,\tilde\nu$. Applying the vee operator and using
$(\mathrm{ad}_{\xi^\wedge} C)^\vee = A_C \xi$
from~\cite[App.~A]{condie_loglinear} yields~\eqref{eq:loglinear}.
\end{proof}

The matrix $-\mathrm{ad}_{\nu} + A_C$ generates the body-frame rotation
cross-products and the kinematic coupling $\dot\xi_p = \xi_v$. The term
$J_\ell^{-1}(\xi)\,\tilde \nu$ captures the body-frame input mismatch
between deputy and chief, and is the focus of
Section~\ref{sec:control_mismatch}. The term
$J_r^{-1}(\xi)\,\mathrm{Ad}^\vee_{X^{-1}}\,\tilde m$
is the inertial-frame gravity mismatch, the sole obstruction to the
dynamics being exactly group-affine in the sense
of~\cite{barrau2017}, and the focus of
Section~\ref{sec:gravity_linearization}.

\subsection{Action of the Gravity Term on Velocity Coordinates}
\label{sec:gravblock}

\begin{lemma}[Velocity-block structure of the gravity term]
\label{lem:gravblock}
The right Jacobian $J_r^{-1}(\xi)$ on $\mathrm{SE}_2(3)$ is block
upper-triangular in the $(\xi_p,\xi_v,\xi_R)$ ordering, with identical
diagonal blocks $J_r^{-1,\mathrm{SO}(3)}(\xi_R)$. Consequently,

\begin{equation}
\label{eq:gravity_block}
  J_r^{-1}(\xi)\,\mathrm{Ad}^\vee_{X^{-1}}\,\tilde m
  \;=\;
  \begin{bmatrix}
    0 \\[2pt]
    J_r^{-1,\mathrm{SO}(3)}(\xi_R)\, R^\top\bigl(g(p) - g(\bar p)\bigr) \\[2pt]
    0
  \end{bmatrix}.
\end{equation}
\end{lemma}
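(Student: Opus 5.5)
We will prove the two claims in order: first the block structure of $J_r^{-1}(\xi)$, then the formula for its action on the gravity mismatch. The steps are to compute $\mathrm{ad}_\xi$ in the $(\xi_p,\xi_v,\xi_R)$ ordering, pass this structure to $J_r^{-1}$ through its series, and then multiply out $\mathrm{Ad}^\vee_{X^{-1}}\tilde m$.

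For the first claim, we compute the matrix representation of $\mathrm{ad}_{\xi}$ directly from the hat map~\eqref{eq:xiblock}. Take two algebra elements with rotation parts $S_1=[x_R]_\times$, $S_2=[y_R]_\times$ and translation columns $(x_v,x_p)$, $(y_v,y_p)$. The $4\times 4$ zero rows below the top block make the commutator simple: its rotation part is $[x_R\times y_R]_\times$. Its velocity column is $S_1 y_v - S_2 x_v$, and its position column is $S_1 y_p - S_2 x_p$. Reading off coefficients of $y$ in the ordering $(\xi_p,\xi_v,\xi_R)$ gives
\begin{equation*}
  \mathrm{ad}_\xi =
  \begin{bmatrix}
    [\xi_R]_\times & 0 & [\xi_p]_\times\\
    0 & [\xi_R]_\times & [\xi_v]_\times\\
    0 & 0 & [\xi_R]_\times
  \end{bmatrix}.
\end{equation*}
This matrix is block upper-triangular with identical diagonal blocks $[\xi_R]_\times$. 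It is also block diagonal in its first two rows, since there is no $p$–$v$ coupling.

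Next we pass this structure to the Jacobian. The inverse right Jacobian is the power series $J_r^{-1}(\xi)=\sum_k \frac{B_k}{k!}(-\mathrm{ad}_\xi)^k$ in the Bernoulli numbers $B_k$. It converges for $\|\xi_R\|<\pi$, because the eigenvalues of $\mathrm{ad}_\xi$ are those of $[\xi_R]_\times$, namely $0$ and $\pm i\theta$. Block upper-triangular matrices with a common block pattern are closed under products, so every power of $\mathrm{ad}_\xi$ keeps the pattern. The diagonal blocks of each power are powers of $[\xi_R]_\times$, and the $(p,v)$ and $(v,p)$ blocks stay zero. Passing to the limit, $J_r^{-1}(\xi)$ is block upper-triangular with the $\mathrm{SO}(3)$ series $\sum_k \frac{B_k}{k!}(-[\xi_R]_\times)^k=J_r^{-1,\mathrm{SO}(3)}(\xi_R)$ on every diagonal block, and its $(p,v)$ block is zero.

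For the action on the gravity term, we compute $\mathrm{Ad}_{X^{-1}}\tilde M = X^{-1}\tilde M X$ at the matrix level. The matrix $\tilde M$ has only the velocity column $\tilde g = g(p)-g(\bar p)$ and zero rotation block. Left-multiplying by $X^{-1}$, whose top block is $[R^\top,\,-R^\top v,\,-R^\top p]$, puts $R^\top\tilde g$ in the velocity column. Right-multiplying by $X$ then leaves that column unchanged, because the velocity column of $\tilde M$ meets only the row-4 entries of $X$, which are $(0,1,0)$. Hence $\mathrm{Ad}^\vee_{X^{-1}}\tilde m=(0,\,R^\top\tilde g,\,0)$.

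Applying $J_r^{-1}(\xi)$ to the vector $(0,\,R^\top\tilde g,\,0)$ picks out its $v$-column. That column has a zero $(p,v)$ block, the diagonal block $J_r^{-1,\mathrm{SO}(3)}(\xi_R)$, and a zero $(R,v)$ block by upper-triangularity. This yields exactly~\eqref{eq:gravity_block}.

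The step most likely to go wrong is the index bookkeeping: confirming that the $(p,v)$ block of $\mathrm{ad}_\xi$ vanishes, and that the series argument preserves this vanishing. We will check both directly from the commutator formula computed above.
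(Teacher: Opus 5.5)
Your proof is correct and follows the same route as the paper. The block structure of $\mathrm{ad}_\xi$ passes to every power and hence to the Bernoulli series for $J_r^{-1}(\xi)$, and $\mathrm{Ad}^\vee_{X^{-1}}\tilde m = (0,\,R^\top(g(p)-g(\bar p)),\,0)$ then selects the velocity column. Your explicit check that the $(p,v)$ block of $\mathrm{ad}_\xi$ and of all its powers vanishes is genuinely needed for the zero position entry, since block upper-triangularity alone would not give it. The paper's proof leaves this implicit in the phrase ``block upper-triangular multiplication extracts only the diagonal block'' and states the vanishing $(1,2)$ block only later, in its treatment of the inverse left Jacobian.
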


\begin{proof}
The matrix $\mathrm{ad}_\xi$ on $\mathfrak{se}_2(3)$ is block
upper-triangular with diagonal blocks $[\xi_R]_\times$, so all of its
powers, and hence the Bernoulli series $J_r^{-1}(\xi) = \sum_{k=0}^\infty (B_k^+/k!)\,\mathrm{ad}_\xi^k$,
share this structure with diagonal blocks
$J_r^{-1,\mathrm{SO}(3)}(\xi_R)$~\cite[Prop.~1]{condie_loglinear}. The
vector $\mathrm{Ad}^\vee_{X^{-1}}\,\tilde m$ has only a velocity entry,
equal to $R^\top\bigl(g(p) - g(\bar p)\bigr)$, and block upper-triangular
multiplication extracts only the diagonal block.
\end{proof}


\section{Gravity Linearization and Error Bounds}
\label{sec:gravity_linearization}

This section linearizes the gravity mismatch in~\eqref{eq:gravity_block},
derives tight bounds on the linearization error, and establishes the linear
time-varying system used for planning. Throughout, $\theta = \|\xi_R\|$,
$S = [\xi_R]_\times$, $\hat\xi = \xi_R/\theta$, $r = \|\bar{p}\|$,
$\hat{r} := \bar{R}^\top\bar{p}/r$ denotes the body-frame radial unit vector,
and $G(t) := \bar{R}^\top G_I(t)\bar{R}$ is the body-frame gravity-gradient
tensor (with $G_I$ defined in~\eqref{eq:gravity_taylor}). We
assume $\theta \in (0,\pi)$, consistent with the injectivity radius of the log
map used elsewhere. The exclusion of $\theta = 0$ is made only so that
$\hat\xi$ and the quotients in $\theta$ appearing below are defined. Every
quantity in this section and in Section~\ref{sec:control_mismatch} extends
continuously to $\theta = 0$, where $J_\ell^{\mathrm{SO}(3)} = I$, $\Psi = 0$,
and the bound coefficients $\alpha,\beta$ of~\eqref{eq:alpha_beta_def} vanish.
This limiting case is used in Section~\ref{sec:hcw_recovery}.

\begin{lemma}[Left Jacobian of $\mathrm{SO}(3)$: singular values and norms]
\label{lem:jacobian_norms}
With $\gamma_1 = (1-\cos\theta)/\theta^2$ and $\gamma_2 = (\theta-\sin\theta)/\theta^3$,
the left Jacobian on $\mathrm{SO}(3)$ is
$J_\ell^{\mathrm{SO}(3)}(\xi_R) = I + \gamma_1 S + \gamma_2 S^2$
\cite[\S7.1.5]{barfoot2024}. For $\theta \in (0,2\pi)$ its singular values are
\begin{equation}
\sigma\bigl(J_\ell^{\mathrm{SO}(3)}\bigr)
  = \left\{\,1,\ \frac{2\sin(\theta/2)}{\theta},\ \frac{2\sin(\theta/2)}{\theta}\,\right\},
\label{eq:jacobian_svals}
\end{equation}
which yield the operator norms
\begin{equation}
\|J_\ell^{\mathrm{SO}(3)}\| = 1,
\qquad
\|J_\ell^{-1,\mathrm{SO}(3)}\| = \frac{\theta/2}{\sin(\theta/2)}.
\label{eq:jacobian_norms}
\end{equation}
\end{lemma}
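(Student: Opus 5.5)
The approach is to diagonalize $J_\ell^{\mathrm{SO}(3)}$ in an orthonormal frame adapted to the rotation axis. Complete $\hat\xi$ to an orthonormal basis $\{\hat\xi, e_1, e_2\}$ with $e_2 = \hat\xi\times e_1$. In this basis $S\hat\xi = 0$, so $J_\ell^{\mathrm{SO}(3)}\hat\xi = \hat\xi$. On the plane $\hat\xi^\perp$ we have $S = \theta\,\mathcal{J}$, where $\mathcal{J}$ is rotation by $\pi/2$, and $S^2 = -\theta^2 I$. Hence $J_\ell^{\mathrm{SO}(3)}$ is block diagonal: a $1$ on the axis and the $2\times 2$ block
\begin{equation*}
B = (1-\gamma_2\theta^2)\,I_2 + \gamma_1\theta\,\mathcal{J}
\end{equation*}
on the plane. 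Identifying $\hat\xi^\perp$ with $\mathbb{C}$, this block is multiplication by the complex number $c_{\mathbb C} = (1-\gamma_2\theta^2) + i\gamma_1\theta$. Consistent with the nomenclature, this is the complex representation on $\hat\xi^\perp$.

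The planar block is a scalar times a rotation, so it is normal. Its two singular values are therefore both equal to $|c_{\mathbb C}|$. Substituting the definitions gives $1-\gamma_2\theta^2 = \sin\theta/\theta$ and $\gamma_1\theta = (1-\cos\theta)/\theta$, so
\begin{equation*}
|c_{\mathbb C}|^2 = \frac{\sin^2\theta + (1-\cos\theta)^2}{\theta^2} = \frac{2-2\cos\theta}{\theta^2} = \frac{4\sin^2(\theta/2)}{\theta^2}.
\end{equation*}
This gives the repeated singular value $2\sin(\theta/2)/\theta$, which is positive for $\theta\in(0,2\pi)$. Because the basis is orthonormal and $J_\ell^{\mathrm{SO}(3)}$ is block diagonal in it, the singular values of the whole matrix are $\{1\}\cup\{|c_{\mathbb C}|,|c_{\mathbb C}|\}$, which establishes~\eqref{eq:jacobian_svals}.

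For the norms, use $\sin x < x$ for $x>0$, which gives $2\sin(\theta/2)/\theta < 1$ on $(0,2\pi)$. The largest singular value is therefore $1$, so $\|J_\ell^{\mathrm{SO}(3)}\| = 1$. Since all singular values are positive, the matrix is invertible on $(0,2\pi)$. The inverse norm is the reciprocal of the smallest singular value, namely $(\theta/2)/\sin(\theta/2)$.

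The only step requiring care is justifying that the singular values of the full matrix equal those of the blocks. This follows because the adapted basis is orthonormal, so the change of basis is orthogonal and preserves singular values. The trigonometric simplification of $|c_{\mathbb C}|^2$ is routine. No earlier result is needed beyond the closed form $J_\ell^{\mathrm{SO}(3)} = I+\gamma_1 S+\gamma_2 S^2$ quoted in the statement.
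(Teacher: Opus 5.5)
Your proposal is correct and follows essentially the same route as the paper (Appendix~\ref{app:jacobian_proof}): split $\mathbb{R}^3$ into the axis and the plane $\hat\xi^\perp$, identify the planar action with multiplication by $c_{\mathbb C}$, compute $|c_{\mathbb C}| = 2\sin(\theta/2)/\theta$, and read off the norms. Your explicit use of $\sin x < x$ to confirm that $1$ is the largest singular value is a small detail the paper leaves implicit.
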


\begin{proof}
The closed form is standard~\cite{barfoot2024,sola2018}, and the derivation of
the singular values~\eqref{eq:jacobian_svals} is deferred to
Appendix~\ref{app:jacobian_proof}. Since the singular values are strictly positive on
$(0,2\pi)$, $J_\ell^{\mathrm{SO}(3)}$ is invertible there. The operator
norms~\eqref{eq:jacobian_norms} are then a direct consequence: the spectral
norm equals the largest singular value, and the norm of the inverse the
reciprocal of the smallest.
\end{proof}

\subsection{Gravity Taylor Expansion}

Let $\rho := p - \bar{p}$. A second-order Taylor expansion of the two-body
gravitational field $g(q) = -\mu q/\|q\|^3$ ($\mu$ the gravitational parameter)
about $\bar{p}$ gives
\begin{equation}
    g(p) = g(\bar{p}) + G_I(t)\,\rho + f(\bar{p}, \rho),
    \label{eq:gravity_taylor}
\end{equation}
where
$G_I(t) = \tfrac{\mu}{r^3}\bigl(3\,\tfrac{\bar{p}\bar{p}^\top}{r^2} - I\bigr)$
is the classical gravity-gradient tensor~\cite{clohessy1960,schaub_junkins} and
\begin{equation}
    f(\bar{p}, \rho) := g(\bar{p} + \rho) - g(\bar{p}) - G_I(t)\rho
    = \int_0^1 (1-s)\, D^2 g(\bar{p} + s\rho)[\rho,\rho]\,ds
    \label{eq:gravity_remainder}
\end{equation}
is the exact integral remainder, where $D^2 g(q)$ is the Hessian of $g$,
regarded as a symmetric bilinear map, with induced norm $\|D^2 g(q)\| :=
\sup_{\|v\|=1}\|D^2 g(q)[v,v]\|$.

\begin{lemma}[Gravity linearization remainder]
\label{lem:gravity_remainder}
For the field $g$, the second-order Taylor
remainder~\eqref{eq:gravity_remainder} about $\bar{p}$, with $r = \|\bar{p}\|$
and $d = \|\rho\| < r$, satisfies the closed-form bound
\begin{equation}
    \|f(\bar{p},\rho)\| \le \frac{\mu\, d^2(3r - 2d)}{r^3(r-d)^2}.
    \label{eq:f_bound}
\end{equation}
\end{lemma}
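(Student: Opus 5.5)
The plan is to bound the integrand of the exact remainder~\eqref{eq:gravity_remainder} pointwise and then evaluate the resulting one-dimensional integral in closed form. Write $q_s := \bar p + s\rho$ for $s\in[0,1]$. Since $d<r$, the reverse triangle inequality gives $\|q_s\| \ge r - sd > 0$. So the segment avoids the singularity of $g$ at the origin, and the integral representation~\eqref{eq:gravity_remainder} is valid. The proof then reduces to two ingredients:
\begin{enumerate}
\item a sharp bound $\|D^2 g(q)\| \le C\mu/\|q\|^4$ in the induced norm defined after~\eqref{eq:gravity_remainder};
\item the identity $\int_0^1 (1-s)(r-sd)^{-4}\,ds = (3r-2d)/\bigl(6r^3(r-d)^2\bigr)$.
\end{enumerate}
With the constant $C=6$, these combine as
\begin{equation*}
\|f\| \;\le\; d^2\int_0^1 (1-s)\,\frac{6\mu}{(r-sd)^4}\,ds \;=\; \frac{\mu\,d^2(3r-2d)}{r^3(r-d)^2}.
\end{equation*}
This is exactly~\eqref{eq:f_bound}.

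\textbf{Step 1: Hessian bound.} I would differentiate $g(q) = -\mu q/\|q\|^3$ twice along a unit direction $v$. Setting $\hat q = q/\|q\|$ and $c = \hat q\cdot v$, I expect
\begin{equation*}
D^2 g(q)[v,v] = \frac{3\mu}{\|q\|^4}\bigl(2c\,v + (1-5c^2)\,\hat q\bigr),
\end{equation*}
up to an overall sign that does not affect the norm. Expanding the squared norm of the bracket, using $\|v\|=\|\hat q\|=1$ and $\hat q\cdot v = c$, gives $4c^2 + (1-5c^2)^2 + 4c^2(1-5c^2) = 1 - 2c^2 + 5c^4$. This is a convex quadratic in $c^2\in[0,1]$, so its maximum is at an endpoint; the value at $c^2=1$ is $4$, so the maximum over the range is $4$. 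Hence $\|D^2 g(q)\| = 6\mu/\|q\|^4$, attained for $v$ parallel to $q$, and the constant $C=6$ is sharp. I expect this step to be the main obstacle. The second derivative must be carried out carefully, including the symmetric cross terms and the $-5c^2$ coefficient, since any slip changes the constant and the final bound would no longer match~\eqref{eq:f_bound}.

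\textbf{Step 2: the integral.} Substituting $\|q_s\|\ge r-sd$ gives the integrand bound $6\mu d^2(1-s)/(r-sd)^4$. To integrate, I would use the decomposition $1-s = \bigl[(r-sd)-(r-d)\bigr]/d$. This splits the integral into $\int_0^1 (r-sd)^{-3}\,ds$ and $\int_0^1 (r-sd)^{-4}\,ds$, both elementary. Collecting over the common denominator $6r^3(r-d)^2$, the numerator simplifies as
\begin{equation*}
r^3 - (r+2d)(r-d)^2 = 3rd^2 - 2d^3 = d^2(3r-2d).
\end{equation*}
This yields the stated closed form exactly. As a sanity check, the bound behaves like $3\mu d^2/r^4$ as $d\to 0$, which matches $\tfrac12\|D^2g(\bar p)\|\,d^2$, consistent with the bound being tight to leading order.
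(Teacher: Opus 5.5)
Your proposal is correct and follows the paper's proof essentially step for step. You derive the same sharp Hessian norm $\|D^2 g(q)\| = 6\mu/\|q\|^4$ from the directional factor $5c^4 - 2c^2 + 1$, maximized at radial $v$, then bound $\|\bar p + s\rho\| \ge r - sd$ and evaluate the same one-dimensional integral, with your split $1-s = [(r-sd)-(r-d)]/d$ making explicit the integration the paper only states (set aside the trivial case $d=0$, where $f=0$, before dividing by $d$).
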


\begin{proof}
\emph{Hessian norm.} Differentiating $g_i = -\mu q_i\|q\|^{-3}$ twice
(using $\partial_k\|q\|^{-m} = -m q_k\|q\|^{-m-2}$) gives, with $\hat q = q/\|q\|$,
\begin{equation}
    (D^2 g)_{ijk}
    = \frac{3\mu}{\|q\|^4}\bigl[\delta_{ik}\hat q_j + \delta_{jk}\hat q_i
      + \delta_{ij}\hat q_k - 5\hat q_i\hat q_j\hat q_k\bigr].
    \label{eq:hessian}
\end{equation}
Contracting with a unit vector $v$ in the last two slots and writing
$c_v = \hat q^\top v$ (so $v = c_v\hat q + v_\perp$, $v_\perp\perp\hat q$,
$\|v_\perp\|^2 = 1-c_v^2$),
\begin{equation}
    \|D^2 g(q)[v,v]\|^2
    = \frac{9\mu^2}{\|q\|^8}\bigl[(1-3c_v^2)^2 + 4c_v^2(1-c_v^2)\bigr]
    = \frac{9\mu^2}{\|q\|^8}\bigl(5c_v^4 - 2c_v^2 + 1\bigr).
    \label{eq:hessnormsq}
\end{equation}
On $c_v^2\in[0,1]$ the factor $5c_v^4 - 2c_v^2 + 1$ is maximized at $c_v^2 = 1$ (radial
$v$), value $4$, so $\|D^2 g(q)\| = 6\mu/\|q\|^4$.

\emph{Remainder integral.} Bounding the integrand
of~\eqref{eq:gravity_remainder} by this norm---a standard technique for
linearization-error bounds~\cite{khalil2002}---with $\|\rho\| = d$ and
$\|\bar{p}+s\rho\| \ge r-sd$ (valid for $s\in[0,1]$ since $d<r$),
\begin{equation}
    \|f(\bar{p},\rho)\| \le 6\mu d^2\int_0^1 \frac{1-s}{(r-sd)^4}\,ds
    = \frac{\mu\, d^2(3r-2d)}{r^3(r-d)^2},
    \label{eq:fint}
\end{equation}
which proves~\eqref{eq:f_bound}.
\end{proof}

\begin{theorem}[Gravity term linearization and error bounds]
\label{thm:gravity_linearization}
Let $\theta = \|\xi_R\| \in (0,\pi)$, $r = \|\bar{p}\|$, $d = \|\rho\| < r$,
and let $\varphi$ be the angle between $\xi_R$ and $\hat{r}$. The gravity term
\eqref{eq:gravity_block} admits the decomposition
\begin{equation}
    J_r^{-1}(\xi)\,\mathrm{Ad}^\vee_{X^{-1}}\tilde{m}
    =
    \begin{bmatrix} 0 \\ G(t)\,\xi_p \\ 0 \end{bmatrix}
    + \delta_{\mathrm{att}} + \delta_{\mathrm{grav}},
    \label{eq:gravity_linearized}
\end{equation}
where
\begin{align}
    \delta_{\mathrm{att}} &:=
    \begin{bmatrix} 0 \\ \Psi\,\xi_p \\ 0 \end{bmatrix}, \quad
    \Psi := J_\ell^{-1,\mathrm{SO}(3)}\,G(t)\,
    J_\ell^{\mathrm{SO}(3)} - G(t),
    \label{eq:delta_att} \\[4pt]
    \delta_{\mathrm{grav}} &:=
    \begin{bmatrix} 0 \\ J_\ell^{-1,\mathrm{SO}(3)}\,
    \bar{R}^\top f(\bar{p},\rho) \\ 0 \end{bmatrix},
    \label{eq:delta_grav}
\end{align}
and the remainder terms are bounded by
\begin{align}
    \|\delta_{\mathrm{att}}\| &\leq
    \frac{3\mu}{r^3}
    \left[
        \sin\!\left(\tfrac{\theta}{2}\right)\,\sin\varphi
        + \frac{\theta - \sin\theta}{4\sin(\theta/2)}\,\bigl|\sin 2\varphi\bigr|
    \right]\|\xi_p\|,
    \label{eq:bound_att} \\[4pt]
    \|\delta_{\mathrm{grav}}\| &\leq
    \frac{\theta/2}{\sin(\theta/2)}
    \cdot
    \frac{\mu\, d^2(3r - 2d)}{r^3(r-d)^2}.
    \label{eq:bound_grav}
\end{align}
Taking $\sin\varphi \le 1$ and $|\sin 2\varphi| \le 1$
in~\eqref{eq:bound_att} gives a bound in $\theta$ alone.
\end{theorem}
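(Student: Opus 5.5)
The plan is to first rewrite the velocity entry of~\eqref{eq:gravity_block} exactly, so that the decomposition~\eqref{eq:gravity_linearized} is an algebraic identity. Both bounds then follow from norm estimates.

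\emph{Step 1 (frame bookkeeping).} Write the attitude part of $\eta=\bar X^{-1}X$ as $\bar R^\top R=\Exp(\xi_R)$. Use the standard identity $J_\ell^{\mathrm{SO}(3)}(\xi_R)=\Exp(\xi_R)\,J_r^{\mathrm{SO}(3)}(\xi_R)$, which gives $J_r^{-1,\mathrm{SO}(3)}=J_\ell^{-1,\mathrm{SO}(3)}\Exp(\xi_R)$. Then
\[
J_r^{-1,\mathrm{SO}(3)}R^\top=J_\ell^{-1,\mathrm{SO}(3)}\Exp(\xi_R)\Exp(-\xi_R)\bar R^\top=J_\ell^{-1,\mathrm{SO}(3)}\bar R^\top .
\]
Substituting the Taylor expansion~\eqref{eq:gravity_taylor} splits the velocity entry into $J_\ell^{-1,\mathrm{SO}(3)}\bar R^\top G_I\rho$ plus $J_\ell^{-1,\mathrm{SO}(3)}\bar R^\top f(\bar p,\rho)$. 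The second piece is exactly $\delta_{\mathrm{grav}}$.

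\emph{Step 2 (position coordinates).} From the closed form of the $\mathrm{SE}_2(3)$ exponential, the position block of $\eta$ is $\bar R^\top(p-\bar p)=J_\ell^{\mathrm{SO}(3)}(\xi_R)\,\xi_p$. Inserting $\bar R\bar R^\top$ gives
\[
J_\ell^{-1,\mathrm{SO}(3)}\bar R^\top G_I\rho=J_\ell^{-1,\mathrm{SO}(3)}\,G\,J_\ell^{\mathrm{SO}(3)}\xi_p=G\xi_p+\Psi\xi_p .
\]
This establishes~\eqref{eq:gravity_linearized}.

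\emph{Step 3 (bound on $\delta_{\mathrm{grav}}$).} Since $\bar R$ is orthogonal, combining the norm in~\eqref{eq:jacobian_norms} from Lemma~\ref{lem:jacobian_norms} with Lemma~\ref{lem:gravity_remainder} gives~\eqref{eq:bound_grav} immediately.

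\emph{Step 4 (bound on $\delta_{\mathrm{att}}$, the main work).} Write $\Psi=J_\ell^{-1,\mathrm{SO}(3)}\bigl(GJ_\ell^{\mathrm{SO}(3)}-J_\ell^{\mathrm{SO}(3)}G\bigr)$. Since $G=\tfrac{\mu}{r^3}(3P-I)$ with $P=\hat r\hat r^\top$, and the identity commutes with everything,
\[
\Psi=\tfrac{3\mu}{r^3}J_\ell^{-1,\mathrm{SO}(3)}\bigl(\gamma_1[P,S]+\gamma_2[P,S^2]\bigr).
\]
I will compute the two commutators explicitly.

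For the first, set $w=S\hat r$. Then $[P,S]=-(\hat r w^\top+w\hat r^\top)$ with $w\perp\hat r$ and $\|w\|=\theta\sin\varphi$, so $\|[P,S]\|=\theta\sin\varphi$.

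For the second, $S^2$ is symmetric, so with $u=S^2\hat r=\theta^2(\cos\varphi\,\hat\xi-\hat r)$ we get $[P,S^2]=\hat r u^\top-u\hat r^\top$. Only the component of $u$ orthogonal to $\hat r$ survives in this antisymmetric rank-two matrix. That component has norm $\theta^2|\cos\varphi\sin\varphi|=\tfrac{\theta^2}{2}|\sin2\varphi|$, and this is the operator norm.

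Applying the triangle inequality and $\|J_\ell^{-1,\mathrm{SO}(3)}\|=\frac{\theta/2}{\sin(\theta/2)}$ then gives the two terms of~\eqref{eq:bound_att}:
\[
\gamma_1\theta\sin\varphi\cdot\tfrac{\theta}{2\sin(\theta/2)}=\sin(\tfrac{\theta}{2})\sin\varphi,
\qquad
\gamma_2\tfrac{\theta^2}{2}|\sin2\varphi|\cdot\tfrac{\theta}{2\sin(\theta/2)}=\tfrac{\theta-\sin\theta}{4\sin(\theta/2)}|\sin2\varphi|.
\]

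The step I expect to be delicate is Step 4. The commutator norms must be computed exactly, in particular via the rank-two symmetric/antisymmetric structure, rather than crudely bounded. Otherwise the clean trigonometric factors do not emerge.
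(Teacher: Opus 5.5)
Your proposal is correct and follows the paper's proof essentially step for step. It uses the same left/right Jacobian identity (your $J_r^{-1,\mathrm{SO}(3)} = J_\ell^{-1,\mathrm{SO}(3)}\Exp(\xi_R)$ is the paper's $J_r^{-1,\mathrm{SO}(3)}\eta_R^{-1} = J_\ell^{-1,\mathrm{SO}(3)}$), the same relation $\rho = \bar R\,J_\ell^{\mathrm{SO}(3)}\xi_p$, the same reduction $\Psi = J_\ell^{-1,\mathrm{SO}(3)}\bigl(\gamma_1[G,S] + \gamma_2[G,S^2]\bigr)$, and the same exact rank-two evaluation of the commutator norms followed by submultiplicativity. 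The only difference is cosmetic: you obtain $\|[\hat r\hat r^\top,S^2]\|$ from the component of $S^2\hat r$ orthogonal to $\hat r$, whereas the paper substitutes $S^2 = \theta^2(\hat\xi\hat\xi^\top - I)$ and evaluates $[\hat r\hat r^\top,\hat\xi\hat\xi^\top]$ directly; both give $\tfrac{\theta^2}{2}|\sin 2\varphi|$.
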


\begin{proof}
\textit{Decomposition.}
By~\eqref{eq:gravity_block} the gravity term has a single nonzero (velocity)
block, $J_r^{-1,\mathrm{SO}(3)}(\xi_R)\,R^\top\bigl(g(p)-g(\bar{p})\bigr)$. With
$\rho := p - \bar{p}$, the Taylor
expansion~\eqref{eq:gravity_taylor}--\eqref{eq:gravity_remainder} gives
$g(p) - g(\bar{p}) = G_I(t)\rho + f(\bar{p},\rho)$, and the $(1,3)$-block of
$\eta = \Exp(\xi^\wedge)$ gives $\rho = \bar{R}\,J_\ell^{\mathrm{SO}(3)}\xi_p$.
Writing $R = \bar{R}\eta_R$ with $\eta_R = \Exp(\xi_R^\wedge)$, so
$R^\top = \eta_R^{-1}\bar{R}^\top$, and applying the left/right identity
$J_r^{-1,\mathrm{SO}(3)}\eta_R^{-1} = J_\ell^{-1,\mathrm{SO}(3)}$
\cite{barfoot2024,sola2018},
\begin{align}
    J_r^{-1,\mathrm{SO}(3)}R^\top\bigl(g(p) - g(\bar{p})\bigr)
    &= J_r^{-1,\mathrm{SO}(3)}R^\top\bigl(G_I(t)\bar{R}J_\ell^{\mathrm{SO}(3)}\xi_p
       + f(\bar{p},\rho)\bigr) \nonumber\\
    &= J_r^{-1,\mathrm{SO}(3)}\eta_R^{-1}\bar{R}^\top
       \bigl(G_I(t)\bar{R}J_\ell^{\mathrm{SO}(3)}\xi_p + f(\bar{p},\rho)\bigr)
       \nonumber\\
    &= J_\ell^{-1,\mathrm{SO}(3)}\,G(t)\,J_\ell^{\mathrm{SO}(3)}\xi_p
       + J_\ell^{-1,\mathrm{SO}(3)}\bar{R}^\top f(\bar{p},\rho),
    \label{eq:decomp_chain}
\end{align}
using $G(t) = \bar{R}^\top G_I(t)\bar{R}$. The gravity term
$J_r^{-1}(\xi)\,\mathrm{Ad}^\vee_{X^{-1}}\tilde{m}$ decomposes to
\begin{equation}
    J_r^{-1}(\xi)\,\mathrm{Ad}^\vee_{X^{-1}}\tilde{m}
    =
    \begin{bmatrix} 0 \\ J_\ell^{-1,\mathrm{SO}(3)}\,G(t)\,
    J_\ell^{\mathrm{SO}(3)}\,\xi_p \\ 0 \end{bmatrix}
    +
    \begin{bmatrix} 0 \\ J_\ell^{-1,\mathrm{SO}(3)}\,
    \bar{R}^\top f(\bar{p},\rho) \\ 0 \end{bmatrix}.
    \label{eq:gravity_decomp}
\end{equation}
Adding and subtracting $G(t)\xi_p$ in the velocity block and writing
$J_\ell^{-1,\mathrm{SO}(3)} G(t) J_\ell^{\mathrm{SO}(3)} = G(t) + \Psi$
gives~\eqref{eq:gravity_linearized}--\eqref{eq:delta_grav}, where
$\|\rho\| \le \|\xi_p\|$ since $\|J_\ell^{\mathrm{SO}(3)}\| = 1$
(Lemma~\ref{lem:jacobian_norms}).

\textit{Bound on $\delta_{\mathrm{att}}$.}
Inserting $I = J_\ell^{-1,\mathrm{SO}(3)} J_\ell^{\mathrm{SO}(3)}$ into the
subtracted $G(t)$ gives $\Psi = J_\ell^{-1,\mathrm{SO}(3)}[G(t),
J_\ell^{\mathrm{SO}(3)}]$. Since $J_\ell^{\mathrm{SO}(3)} = I + \gamma_1 S + \gamma_2
S^2$ and $[G(t),I]=0$,
\begin{equation}
    \Psi = J_\ell^{-1,\mathrm{SO}(3)}
    \bigl(\gamma_1[G(t),S] + \gamma_2[G(t),S^2]\bigr).
    \label{eq:delta_commutator}
\end{equation}
The two commutators have the exact spectral norms
\begin{equation}
    \|[G(t),S]\| = \tfrac{3\mu}{r^3}\theta\sin\varphi,
    \qquad
    \|[G(t),S^2]\| = \tfrac{3\mu}{2r^3}\theta^2\,|\sin 2\varphi|,
    \label{eq:commutator_norms}
\end{equation}
derived in Appendix~\ref{app:datt}. Applying submultiplicativity and the
triangle inequality to~\eqref{eq:delta_commutator} with
$\|J_\ell^{-1,\mathrm{SO}(3)}\| = (\theta/2)/\sin(\theta/2)$,
\begin{equation}
    \|\Psi\| \le \frac{\theta/2}{\sin(\theta/2)}
    \bigl(\gamma_1\|[G(t),S]\| + \gamma_2\|[G(t),S^2]\|\bigr).
    \label{eq:delta_assemble}
\end{equation}
Substituting~\eqref{eq:commutator_norms} with $\gamma_1\theta =
(1-\cos\theta)/\theta$ and $\gamma_2\theta^2 = (\theta-\sin\theta)/\theta$, and
simplifying the first term via $1-\cos\theta = 2\sin^2(\theta/2)$, gives the
bracket of~\eqref{eq:bound_att}; since $\delta_{\mathrm{att}} = \Psi\xi_p$,
$\|\delta_{\mathrm{att}}\| \le \|\Psi\|\,\|\xi_p\|$
yields~\eqref{eq:bound_att}.

\textit{Bound on $\delta_{\mathrm{grav}}$.}
By Lemma~\ref{lem:gravity_remainder}, $\|f(\bar{p},\rho)\| \le \mu
d^2(3r-2d)/[r^3(r-d)^2]$. Since $\bar{R}$ is orthogonal,
$\|\bar{R}^\top f\| = \|f\|$, and multiplying
by $\|J_\ell^{-1,\mathrm{SO}(3)}\| = (\theta/2)/\sin(\theta/2)$
gives~\eqref{eq:bound_grav}.
\end{proof}

\subsection{Linear Time-Varying Error System}
We now consider a linear time-varying error system for the matched-control case
$\tilde\nu \equiv 0$, so that $\nu = \bar\nu$ and the adjoint term
$-\ad_\nu$ of~\eqref{eq:loglinear} may be written with the reference input
$\bar\nu = (0,\bar a,\bar\omega)$; the general case $\tilde\nu \neq 0$ is treated
in Section~\ref{sec:control_mismatch}. Dropping the remainders
$\delta_{\mathrm{att}}$ and $\delta_{\mathrm{grav}}$
from~\eqref{eq:gravity_linearized} replaces the gravity term
$J_r^{-1}(\xi)\,\mathrm{Ad}^\vee_{X^{-1}}\tilde{m}$ by its linear part, whose
only nonzero (velocity) block is $G(t)\xi_p$. Substituting
into~\eqref{eq:loglinear} gives the linear time-varying (LTV) system
\begin{equation}
\dot{\xi}(t) = A(t)\,\xi(t),
\label{eq:LTV}
\end{equation}
with system matrix
\begin{equation}
A(t) =
\begin{bmatrix}
-[\bar{\omega}]_\times & I & 0 \\
G(t) & -[\bar{\omega}]_\times & -[\bar{a}]_\times \\
0 & 0 & -[\bar{\omega}]_\times
\end{bmatrix}.
\label{eq:A_matrix}
\end{equation}
Written componentwise,
\begin{subequations}
\begin{align}
\dot{\xi}_p &= \xi_v - \bar{\omega} \times \xi_p, \label{eq:xi_p_dot} \\
\dot{\xi}_v &= G(t)\,\xi_p - \bar{\omega} \times \xi_v - \bar{a} \times \xi_R,
\label{eq:xi_v_dot} \\
\dot{\xi}_R &= -\bar{\omega} \times \xi_R. \label{eq:xi_R_dot}
\end{align}
\label{eq:component_dynamics}
\end{subequations}
The state-transition matrix $\Phi(t,t_0)$ satisfies $\dot{\Phi} = A(t)\Phi$,
$\Phi(t_0,t_0) = I$, and is computed numerically for general thrusting
references.
 
\begin{remark}[Validity of the LTV approximation]
\label{rem:ltv_validity}
The approximation~\eqref{eq:LTV}--\eqref{eq:A_matrix} neglects the two
remainders bounded in Theorem~\ref{thm:gravity_linearization}, and is accurate
when both are small relative to the retained term $G(t)\xi_p$ (of magnitude up
to $2\mu\|\xi_p\|/r^3$). By~\eqref{eq:bound_att}, $\|\delta_{\mathrm{att}}\| =
O(\theta)\,\|\xi_p\|\cdot\mu/r^3$ vanishes with the attitude error
$\theta = \|\xi_R\|$; by
\eqref{eq:bound_grav}, $\|\delta_{\mathrm{grav}}\| = O(d^2/r^2)\cdot\mu/r^2$
vanishes with the relative separation $d = \|\rho\| \le \|\xi_p\|$. Both are
therefore negligible for small attitude error and small separation relative to
orbital radius, with quantitative thresholds obtained directly from
\eqref{eq:bound_att}--\eqref{eq:bound_grav}.
\end{remark}

\section{Recovery of HCW as a Special Case}
\label{sec:hcw_recovery}
The LTV system~\eqref{eq:LTV} recovers the classical Hill--Clohessy--Wiltshire
equations exactly under the standard HCW assumptions. No additional approximations
are introduced beyond those already stated in Remark~\ref{rem:ltv_validity}.

\begin{theorem}[HCW recovery]
\label{thm:hcw}
Consider a circular coasting reference orbit of radius $r_0$ with constant mean
motion $n = \sqrt{\mu/r_0^3}$ and angular velocity $\Omega = [0,0,n]^\top$.
Let the chief's body frame be the local-vertical local-horizontal (LVLH) frame,
with $\bar{b}_1$ radial, $\bar{b}_2$ along-track, and $\bar{b}_3$ along the
orbit normal, so that the body-frame radial unit vector of
Section~\ref{sec:gravity_linearization} is $\hat{r} = [1,0,0]^\top$.
Under the assumptions
\begin{equation}
    \bar{a} \equiv 0, \qquad \tilde{\nu} \equiv 0, \qquad \bar{\omega} = \Omega,
    \qquad \|\bar{p}\| = r_0 = \mathrm{const}, \qquad \xi_R(t_0) = 0,
    \label{eq:hcw_assumptions}
\end{equation}
the translational log-error dynamics~\eqref{eq:xi_p_dot}--\eqref{eq:xi_v_dot}
reduce to the classical Hill--Clohessy--Wiltshire equations
\begin{align}
    \ddot{\xi}_{p_x} - 2n\dot{\xi}_{p_y} - 3n^2\xi_{p_x} &= 0,
    \label{eq:hcw_x} \\
    \ddot{\xi}_{p_y} + 2n\dot{\xi}_{p_x} &= 0,
    \label{eq:hcw_y} \\
    \ddot{\xi}_{p_z} + n^2\xi_{p_z} &= 0,
    \label{eq:hcw_z}
\end{align}
where $\xi_R \equiv 0$ gives $J_\ell^{\mathrm{SO}(3)} = I$ (in the continuous
limit $\theta\to0$ noted in Section~\ref{sec:gravity_linearization}), so that
$\rho = p - \bar p =\bar{R}\,\xi_p$ exactly and
$\xi_p = [\xi_{p_x}, \xi_{p_y}, \xi_{p_z}]^\top$ is the relative position
resolved in the LVLH frame, matching the classical HCW coordinates.
\end{theorem}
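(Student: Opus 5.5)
The argument is a direct substitution into the component dynamics~\eqref{eq:xi_p_dot}--\eqref{eq:xi_R_dot}, followed by eliminating $\xi_v$ to get second-order equations in $\xi_p$ alone.

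\emph{Step 1: the attitude error stays zero.} Under~\eqref{eq:hcw_assumptions}, equation~\eqref{eq:xi_R_dot} reads $\dot\xi_R = -\Omega\times\xi_R$. This is linear and homogeneous, and $\xi_R(t_0)=0$, so uniqueness gives $\xi_R\equiv 0$. Two consequences follow:
\begin{itemize}
\item the coupling term $-\bar a\times\xi_R$ in~\eqref{eq:xi_v_dot} vanishes, and it does so doubly since $\bar a\equiv0$ as well;
\item in the limit $\theta\to0$ noted in Section~\ref{sec:gravity_linearization} we have $J_\ell^{\mathrm{SO}(3)}=I$, so the $(1,3)$ block of $\eta=\Exp(\xi^\wedge)$ gives $\rho=\bar R\xi_p$.
\end{itemize}
Because the chief body frame is LVLH, this identifies $\xi_p$ with the LVLH-resolved relative position.

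\emph{Step 2: evaluate the tidal tensor.} Since $\hat r=[1,0,0]^\top$ and $\mu/r_0^3=n^2$, we get $G=\bar R^\top G_I\bar R=n^2(3\hat r\hat r^\top-I)=\mathrm{diag}(2n^2,-n^2,-n^2)$, which is constant in time. The translational system then becomes
\begin{equation*}
\dot\xi_p=\xi_v-\Omega\times\xi_p,\qquad \dot\xi_v=G\xi_p-\Omega\times\xi_v.
\end{equation*}

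\emph{Step 3: eliminate $\xi_v$.} Differentiate the first equation, using that $\Omega$ is constant:
\begin{equation*}
\ddot\xi_p=\dot\xi_v-\Omega\times\dot\xi_p.
\end{equation*}
Substitute $\dot\xi_v=G\xi_p-\Omega\times\xi_v$ together with $\xi_v=\dot\xi_p+\Omega\times\xi_p$. This gives
\begin{equation*}
\ddot\xi_p=G\xi_p-2\,\Omega\times\dot\xi_p-\Omega\times(\Omega\times\xi_p).
\end{equation*}
With $\Omega=n\hat e_3$, the vector identities are $\Omega\times\dot\xi_p=n(-\dot\xi_{p_y},\dot\xi_{p_x},0)$ and $-\Omega\times(\Omega\times\xi_p)=n^2(\xi_{p_x},\xi_{p_y},0)$. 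Adding these to $G\xi_p$ term by term gives:
\begin{itemize}
\item $x$: $\ddot\xi_{p_x}=3n^2\xi_{p_x}+2n\dot\xi_{p_y}$;
\item $y$: $\ddot\xi_{p_y}=-2n\dot\xi_{p_x}+0\cdot\xi_{p_y}$;
\item $z$: $\ddot\xi_{p_z}=-n^2\xi_{p_z}$.
\end{itemize}
These are exactly~\eqref{eq:hcw_x}--\eqref{eq:hcw_z}.

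\emph{Main obstacle.} The only delicate point is sign and frame bookkeeping, not analysis. The log-coordinates are expressed in the rotating chief frame, so $\xi_v=\dot\xi_p+\Omega\times\xi_p$ is not the LVLH relative velocity. The Coriolis and centrifugal terms must therefore come out of the $-[\bar\omega]_\times$ blocks with the right signs, which is why I would carry out the elimination explicitly rather than cite it.

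Steps 1--3 establish the reduction for the LTV system~\eqref{eq:LTV}. To make the "exactly" claim precise, I would add that the dropped remainder $\delta_{\mathrm{att}}$ vanishes identically here, since $\Psi=0$ at $\theta=0$, and that the same second-order truncation defines classical HCW. The two linearizations therefore coincide, rather than merely agreeing approximately.
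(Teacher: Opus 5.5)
Your proposal is correct and matches the paper's proof essentially step for step. The shared steps are:
\begin{itemize}
\item invariance of $\xi_R \equiv 0$ from the linear homogeneous attitude equation;
\item evaluation of $G = \mathrm{diag}(2n^2,-n^2,-n^2)$;
\item elimination of $\xi_v$ via $\xi_v = \dot\xi_p + \Omega\times\xi_p$, giving $\ddot\xi_p = G\xi_p - 2\Omega\times\dot\xi_p - \Omega\times(\Omega\times\xi_p)$;
\item componentwise expansion.
\end{itemize}
Your closing observation is also the paper's: $\delta_{\mathrm{att}}$ vanishes at $\theta = 0$, so only $\delta_{\mathrm{grav}}$ is dropped, which is exactly the content of the Remark on identical approximations that accompanies the theorem.
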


\begin{proof}
\textit{Invariance of $\xi_R \equiv 0$.}
Under $\tilde{\nu} \equiv 0$ the attitude error obeys~\eqref{eq:xi_R_dot},
$\dot{\xi}_R = -\bar{\omega}\times\xi_R$, which is linear and homogeneous in
$\xi_R$. Hence $\xi_R(t_0) = 0$ implies $\xi_R(t) \equiv 0$ for all $t$: the
zero-attitude-error set is invariant, and the assumption is dynamically
consistent rather than an additional approximation.

With $\xi_R \equiv 0$ and $\bar{a}\equiv 0$, the $-[\bar{a}]_\times\xi_R$
coupling term in~\eqref{eq:xi_v_dot} vanishes and the dynamics reduce to
\begin{subequations}
\begin{align}
    \dot{\xi}_p &= \xi_v - \Omega \times \xi_p, \label{eq:inv_hcw_p} \\
    \dot{\xi}_v &= G\,\xi_p - \Omega \times \xi_v. \label{eq:inv_hcw_v}
\end{align}
\end{subequations}
\textit{Evaluation of $G$.}
With $\|\bar{p}\| = r_0$ and $\hat{r} = [1,0,0]^\top$, the body-frame tidal
tensor of Section~\ref{sec:gravity_linearization} is
\begin{equation}
    G = \frac{\mu}{r_0^3}\bigl(3\hat{r}\hat{r}^\top - I\bigr)
      = n^2\bigl(3\hat{r}\hat{r}^\top - I\bigr)
      = \begin{bmatrix} 2n^2 & 0 & 0 \\ 0 & -n^2 & 0 \\ 0 & 0 & -n^2 \end{bmatrix}.
    \label{eq:G_evaluated}
\end{equation}
\footnote{Note that $G \neq \mathrm{diag}(3n^2, 0, -n^2)$; the familiar HCW
tidal coefficients emerge only after combining $G$ with the centrifugal term
$\Omega\times(\Omega\times\xi_p)$ in the second-order form below.}

\textit{Second-order form.}
Differentiating~\eqref{eq:inv_hcw_p} and substituting~\eqref{eq:inv_hcw_v}:
\begin{equation}
    \ddot{\xi}_p = \dot{\xi}_v - \Omega\times\dot{\xi}_p
                 = G\xi_p - \Omega\times\xi_v - \Omega\times\dot{\xi}_p.
\end{equation}
Substituting $\xi_v = \dot{\xi}_p + \Omega\times\xi_p$
from~\eqref{eq:inv_hcw_p} into the $\Omega\times\xi_v$ term:
\begin{align}
    \ddot{\xi}_p
    &= G\xi_p - \Omega\times(\dot{\xi}_p + \Omega\times\xi_p)
       - \Omega\times\dot{\xi}_p \notag \\
    &= G\xi_p - 2\Omega\times\dot{\xi}_p - \Omega\times(\Omega\times\xi_p).
\end{align}
\textit{Scalar expansion.}
With $\Omega = [0,0,n]^\top$ and $\xi_p = [\xi_{p_x},\xi_{p_y},\xi_{p_z}]^\top$:
\begin{equation}
    2\Omega\times\dot{\xi}_p =
    \begin{bmatrix}-2n\dot{\xi}_{p_y}\\2n\dot{\xi}_{p_x}\\0\end{bmatrix},
    \qquad
    \Omega\times(\Omega\times\xi_p) =
    \begin{bmatrix}-n^2\xi_{p_x}\\-n^2\xi_{p_y}\\0\end{bmatrix}.
\end{equation}
Substituting~\eqref{eq:G_evaluated} and expanding componentwise:
\begin{align}
    \ddot{\xi}_{p_x} &= 2n^2\xi_{p_x} + 2n\dot{\xi}_{p_y} + n^2\xi_{p_x}
                      = 3n^2\xi_{p_x} + 2n\dot{\xi}_{p_y}, \notag\\
    \ddot{\xi}_{p_y} &= -n^2\xi_{p_y} - 2n\dot{\xi}_{p_x} + n^2\xi_{p_y}
                      = -2n\dot{\xi}_{p_x}, \notag\\
    \ddot{\xi}_{p_z} &= -n^2\xi_{p_z}.
\end{align}
Rearranging gives~\eqref{eq:hcw_x}--\eqref{eq:hcw_z}.
\end{proof}

\begin{remark}[Identical approximations]
\label{rem:hcw_exact}
The recovery in Theorem~\ref{thm:hcw} involves no attitude-induced
approximation: at $\xi_R \equiv 0$ the left Jacobian satisfies
$J_\ell^{\mathrm{SO}(3)} = I$, so $\Psi = 0$ and
$\delta_{\mathrm{att}} \equiv 0$ identically
by~\eqref{eq:delta_att}. The only term dropped in passing
from~\eqref{eq:loglinear} to~\eqref{eq:LTV} is $\delta_{\mathrm{grav}}$, the
second-order remainder of the gravity Taylor
expansion~\eqref{eq:gravity_remainder} --- precisely the term neglected in the
classical derivation of the Hill--Clohessy--Wiltshire equations. The two models
therefore make identical approximations, and~\eqref{eq:hcw_x}--\eqref{eq:hcw_z}
are an exact consequence of the $\mathrm{SE}_2(3)$ framework rather than a
separate linearization.
\end{remark}

\section{Control Mismatch Linearization}
\label{sec:control_mismatch}

Section~\ref{sec:gravity_linearization} linearized the gravity mismatch term
in~\eqref{eq:loglinear}, yielding a linear time-varying system valid under the
matched-control assumption $\tilde\nu \equiv 0$. This section relaxes that assumption. We
linearize the control mismatch term $J_\ell^{-1}(\xi)\,\tilde\nu$ to first order in $\xi$,
absorb the linear part into a modified system matrix, and bound the residual analytically.
The result is a forced linear time-varying system that handles arbitrary known control
mismatch, with matched control recovered as the special case $\tilde\nu \equiv 0$.

The mismatch vector $\tilde\nu = (0,\Delta a,\Delta\omega)^\top$ of~\eqref{eq:tildes} has
two physically distinct components,
\begin{equation}
    \Delta a := a - \bar a,
    \qquad
    \Delta\omega := \omega - \bar\omega.
    \label{eq:mismatch_components}
\end{equation}
The linearization treats these separately because they couple
into the dynamics through different blocks of $J_\ell^{-1}(\xi)$. The acceleration component
reduces to an $\SO$-only computation parallel to the gravity term of
Section~\ref{sec:gravity_linearization}; the angular velocity component activates additional
coupling blocks of $J_\ell^{-1}$ and requires a new bound.

\subsection{Block Structure of the Inverse Left Jacobian}
\label{sec:Jl_structure}

By Lemma~\ref{lem:jacobian_norms} the left Jacobian on $\SO$ is
$J_\ell^{\SO} = I + \gamma_1 S + \gamma_2 S^2$, and by Appendix~\ref{app:jacobian_proof}
$S$ annihilates the axis $\mathrm{span}\{\hat\xi\}$ and satisfies $S^2 = -\theta^2 I$ on the
plane $P = \hat\xi^\perp$. Hence $S^3 = -\theta^2 S$, so $\mathrm{span}\{I,S,S^2\}$ is closed
under multiplication and the inverse may be sought in that span. Writing
$J_\ell^{-1,\SO} = I + k_1 S + k_2 S^2$, imposing $J_\ell^{-1,\SO}J_\ell^{\SO} = I$ and
reducing by $S^3 = -\theta^2 S$ gives the two scalar conditions
$\gamma_1 + k_1(1-\gamma_2\theta^2) - k_2\gamma_1\theta^2 = 0$ and
$\gamma_2 + k_1\gamma_1 + k_2(1-\gamma_2\theta^2) = 0$, whose solution is
\begin{equation}
    J_\ell^{-1,\SO}(\xi_R) = I - \tfrac{1}{2}S + c(\theta)\,S^2,
    \qquad
    c(\theta) := \frac{\beta(\theta)}{\theta^2},
    \qquad
    \beta(\theta) := 1 - \frac{\theta}{2}\cot\frac{\theta}{2},
    \label{eq:Jl_inv_so3}
\end{equation}
in agreement with~\cite[\S7.1.5]{barfoot2024}.

In the ordering $\xi = (\xi_p,\xi_v,\xi_R)$ the matrix representation of the algebra adjoint
is
\begin{equation}
    \ad_\xi =
    \begin{bmatrix}
        S & 0 & \skewmat{\xi_p} \\
        0 & S & \skewmat{\xi_v} \\
        0 & 0 & S
    \end{bmatrix},
    \label{eq:ad_xi_matrix}
\end{equation}
block upper-triangular with identical diagonal blocks $S$ and a vanishing $(1,2)$ block. By
the argument of Lemma~\ref{lem:gravblock}, every power of $\ad_\xi$, and hence every power
series in $\ad_\xi$, inherits this structure with diagonal blocks given by the corresponding
$\SO$ series. Applying this to $J_\ell(\xi) = \sum_{k\ge0}\ad_\xi^{\,k}/(k+1)!$, the diagonal
blocks are $J_\ell^{\SO}(\xi_R)$ and the $(1,3)$ and $(2,3)$ blocks are two instances of a
single function of a translation argument $u \in \mathbb{R}^3$. Since the $(1,3)$ block of
$\ad_\xi^{\,k}$ is $\sum_{i+j=k-1}S^i\skewmat{\xi_p}S^j$, weighting by $1/(k+1)!$ and
reindexing by $(i,j)$ identifies that function~\cite[Eq.~(7.86b)]{barfoot2024} as
\begin{equation}
    Q(u,\xi_R) := \sum_{i \ge 0}\ \sum_{j \ge 0}
    \frac{S^{i}\,\skewmat{u}\,S^{j}}{(i+j+2)!}\,.
    \label{eq:Q_series}
\end{equation}
Unlike Lemma~\ref{lem:gravblock}, where the gravity mismatch has only a velocity entry and
block upper-triangular multiplication extracts the diagonal alone, both off-diagonal blocks
are active here. Block-triangular inversion gives
\begin{equation}
    J_\ell^{-1}(\xi) =
    \begin{bmatrix}
        J_\ell^{-1,\SO} & 0 & -J_\ell^{-1,\SO} Q_p J_\ell^{-1,\SO} \\[3pt]
        0 & J_\ell^{-1,\SO} & -J_\ell^{-1,\SO} Q_v J_\ell^{-1,\SO} \\[3pt]
        0 & 0 & J_\ell^{-1,\SO}
    \end{bmatrix},
    \qquad
    \begin{aligned}
        Q_p &:= Q(\xi_p,\xi_R), \\
        Q_v &:= Q(\xi_v,\xi_R).
    \end{aligned}
    \label{eq:Jl_inv_block}
\end{equation}

Only three properties of $Q$ are used below. For any $u,u_1,u_2 \in \mathbb{R}^3$,
$\lambda_1,\lambda_2 \in \mathbb{R}$ and $A \in \SO$,
\begin{equation}
    Q(\lambda_1u_1 + \lambda_2u_2,\,\xi_R)
      = \lambda_1Q(u_1,\xi_R) + \lambda_2Q(u_2,\xi_R),
    \quad
    Q(u,0) = \tfrac{1}{2}\skewmat{u},
    \quad
    Q(Au,A\xi_R) = A\,Q(u,\xi_R)\,A^\top,
    \label{eq:Q_props}
\end{equation}
and $Q$ is the derivative of the left Jacobian along the translation argument,
\begin{equation}
    Q(u,\xi_R) = \frac{d}{dt}\bigg|_{t=0} J_\ell^{\SO}\bigl(\xi_R + t\,u\bigr).
    \label{eq:Q_deriv}
\end{equation}
Each follows directly from~\eqref{eq:Q_series}: the factor $\skewmat{u}$ occurs exactly once
in every term and the hat map is linear, giving linearity in $u$; only $i = j = 0$ survives at
$\xi_R = 0$, leaving $\skewmat{u}/2!$; the identity $A\skewmat{x}A^\top = \skewmat{Ax}$
conjugates each term $S^i\skewmat{u}S^j$, giving equivariance; and differentiating
$(S + t\skewmat{u})^k$ acts on one factor at a time and returns
$\sum_{i+j=k-1}S^i\skewmat{u}S^j$, so the same weighting recovers~\eqref{eq:Q_series} from the
series for $J_\ell^{\SO}$, establishing~\eqref{eq:Q_deriv}.

Applying~\eqref{eq:Jl_inv_block} to $\tilde\nu$ and separating the contributions from
$\Delta a$ and $\Delta\omega$:
\begin{equation}
    J_\ell^{-1}(\xi)\,\tilde\nu
    = \underbrace{
      \begin{pmatrix} 0 \\ J_\ell^{-1,\SO}\Delta a \\ 0 \end{pmatrix}}_{\text{from }\Delta a}
    + \underbrace{
      \begin{pmatrix}
        -J_\ell^{-1,\SO} Q_p J_\ell^{-1,\SO}\Delta\omega \\
        -J_\ell^{-1,\SO} Q_v J_\ell^{-1,\SO}\Delta\omega \\
        J_\ell^{-1,\SO}\Delta\omega
      \end{pmatrix}}_{\text{from }\Delta\omega}.
    \label{eq:Jl_inv_split}
\end{equation}
The acceleration mismatch acts only through the $\SO$ Jacobian, since the $(1,2)$ block
of~\eqref{eq:Jl_inv_block} vanishes; the angular velocity mismatch couples to position and
velocity through the $Q$-blocks.

\subsection{Linearization and Residual Bounds}
\label{sec:control_bounds}

The linearization is organized around the matrix-valued function
\begin{equation}
    \mathcal{M}(u,\xi_R) := -J_\ell^{-1,\SO}\,Q(u,\xi_R)\,J_\ell^{-1,\SO}
                  + \tfrac{1}{2}\skewmat{u},
    \qquad u \in \mathbb{R}^3,
    \label{eq:M_def}
\end{equation}
which measures the deviation of the $Q$-route from its zero-attitude limit. Since
$J_\ell^{-1,\SO} = I$ and $Q(u,0) = \tfrac12\skewmat{u}$ at $\xi_R = 0$
by~\eqref{eq:Jl_inv_so3} and~\eqref{eq:Q_props}, the subtracted term
in~\eqref{eq:M_def} is exactly the zero-attitude value of the product, so
$\mathcal{M}(u,0) = 0$.

\begin{lemma}[Induced norm of $\mathcal{M}$]
\label{lem:M_bound}
For $\theta \in (0,\pi)$ and all $u \in \mathbb{R}^3$,
\begin{equation}
    \|\mathcal{M}(u,\xi_R)\| \le \beta'(\theta)\,\|u\|,
    \label{eq:M_bound}
\end{equation}
with equality when $u$ is parallel to $\xi_R$.
\end{lemma}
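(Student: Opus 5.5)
The plan is to recognize $\mathcal{M}$ as a directional derivative of $J_\ell^{-1,\SO}$, compute it in closed form, and then reduce the operator norm to a $2\times 2$ eigenvalue problem.

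\emph{Step 1: $\mathcal{M}$ as a directional derivative.} Differentiate the identity $J_\ell^{-1,\SO}J_\ell^{\SO}=I$ along $u$. Using \eqref{eq:Q_deriv}, this gives
\[
D_u J_\ell^{-1,\SO}=-J_\ell^{-1,\SO}\,Q(u,\xi_R)\,J_\ell^{-1,\SO},
\]
so that $\mathcal{M}(u,\xi_R)=D_uJ_\ell^{-1,\SO}(\xi_R)+\tfrac12\skewmat{u}$.

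\emph{Step 2: closed form.} Differentiate the closed form \eqref{eq:Jl_inv_so3} term by term. The $-\tfrac12 S$ term contributes exactly $-\tfrac12\skewmat{u}$, which cancels the added term. This leaves
\[
\mathcal{M}=c'(\theta)\,(\hat\xi^\top u)\,S^2+c(\theta)\bigl(S\skewmat{u}+\skewmat{u}S\bigr).
\]

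\emph{Step 3: axial/planar split.} Write $u=u_a\hat\xi+u_\perp$ with $u_\perp\in P=\hat\xi^\perp$, and use the identity $[x]_\times[y]_\times+[y]_\times[x]_\times=xy^\top+yx^\top-2(x^\top y)I$.
\begin{itemize}
\item Axial part: $S[\hat\xi]_\times+[\hat\xi]_\times S=2S^2/\theta$, so the axial contribution is $(c'+2c/\theta)u_aS^2$. Substituting $c=\beta/\theta^2$ collapses the coefficient to $\beta'(\theta)/\theta^2$. Since $S^2=-\theta^2$ on $P$, this contribution is $-\beta'u_a\Pi_P$, where $\Pi_P$ is the projector onto $P$.
\item Planar part: $c(\theta)\,\theta\,(\hat\xi u_\perp^\top+u_\perp\hat\xi^\top)=\tfrac{\beta}{\theta}(\hat\xi u_\perp^\top+u_\perp\hat\xi^\top)$.
\end{itemize}
Hence, with $m_a:=\beta'(\theta)$ and $m_p:=\beta(\theta)/\theta$,
\[
\mathcal{M}=-m_a u_a\Pi_P+m_p(\hat\xi u_\perp^\top+u_\perp\hat\xi^\top).
\]
This matrix is symmetric. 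In the orthonormal basis $(\hat\xi,\hat u_\perp,\hat\xi\times\hat u_\perp)$ it becomes block diagonal, with a $2\times2$ block $\begin{bmatrix}0&b\\ b&-a\end{bmatrix}$ and a scalar entry $-a$, where $a=m_au_a$ and $b=m_p\|u_\perp\|$. Therefore
\[
\|\mathcal{M}\|=\tfrac12\bigl(|a|+\sqrt{a^2+4b^2}\bigr).
\]
When $u\parallel\xi_R$ we have $b=0$, and this gives equality $\|\mathcal{M}\|=\beta'\|u\|$.

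\emph{Step 4: the inequality.} It suffices to show $m_a\ge 2m_p$. Given that, set $x=|u_a|$ and $y=\|u_\perp\|$. Then
\[
\|\mathcal{M}\|\le \tfrac{m_a}{2}\bigl(x+\sqrt{x^2+y^2}\bigr)\le m_a\sqrt{x^2+y^2}=\beta'\|u\|.
\]
Both $\beta$ and $\beta'$ are nonnegative, so no sign issues arise in this step.

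\emph{Main obstacle.} The step I expect to require the most care is proving $h(\theta):=\theta\beta'(\theta)-2\beta(\theta)\ge0$ on $(0,\pi)$. I would use the Laurent series of $\cot$. The expansion $\tfrac{\theta}{2}\cot\tfrac{\theta}{2}=1-\sum_{k\ge1}b_k\theta^{2k}$ has all $b_k>0$; these are Bernoulli-number coefficients $|B_{2k}|/(2k)!$, and the series converges for $|\theta|<2\pi$. Hence $\beta=\sum_{k\ge1}b_k\theta^{2k}$ and
\[
h(\theta)=\sum_{k\ge2}(2k-2)\,b_k\,\theta^{2k}>0.
\]
The $k=1$ term cancels, which is exactly why the factor $2$ is sharp near $\theta=0$. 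The same series shows $\beta'>0$, which confirms the sign assumption used in Step 4.
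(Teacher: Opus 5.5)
Your proof is correct. Its skeleton matches the paper's: the same directional-derivative identity for $-J_\ell^{-1,\SO}QJ_\ell^{-1,\SO}$, the same axial and planar coefficients $m_a=\beta'$ and $m_p=\beta/\theta$, the same symmetric $3\times3$ matrix and spectral-norm formula, and the same reduction of the inequality to $m_a\ge 2m_p$.

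The differences are in execution:
\begin{itemize}
\item You differentiate the closed form once for general $u$ by the chain rule, picking up $c'(\theta)\,\hat\xi^\top u$. The paper instead invokes equivariance and then evaluates $\mathcal{M}$ separately on $\hat\xi$ and on a unit $\hat w\in P$. Your route makes the equivariance step unnecessary.
\item For the final maximization you use a two-line chain: $4m_p^2\le m_a^2$, then $x+\sqrt{x^2+y^2}\le 2\sqrt{x^2+y^2}$. The paper uses a monotonicity-in-$\lambda$ argument. Both are fine.
\end{itemize}

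The substantive difference is how you prove $m_a\ge 2m_p$.
\begin{itemize}
\item The paper (Appendix~C) clears denominators to the elementary function $\theta^2+\theta\sin\theta+4\cos\theta-4$. It shows positivity by integrating $h'''=\sin\theta-\theta\cos\theta\ge 0$ three times from zero, using only $\tan\theta\ge\theta$.
\item You use the Bernoulli expansion $\tfrac{\theta}{2}\cot\tfrac{\theta}{2}=1-\sum_{k\ge1}\tfrac{|B_{2k}|}{(2k)!}\theta^{2k}$, valid for $|\theta|<2\pi$. This gives $\theta\beta'-2\beta=\sum_{k\ge2}(2k-2)b_k\theta^{2k}$ directly.
\end{itemize}

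Your version is shorter and yields several facts at once: $\beta>0$, $\beta'>0$, and the monotonicity of $c=\sum_k b_k\theta^{2k-2}$, which is the paper's own reformulation of the inequality. It also extends to $(0,2\pi)$, though only $(0,\pi)$ is needed, and it shows at a glance why the factor $2$ is sharp as $\theta\to0$. The paper's version is self-contained calculus. It does not depend on the sign pattern of the Bernoulli numbers or on the radius of convergence of the $\cot$ series.
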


\begin{proof}
\emph{Reduction to two directions.} By the equivariance in~\eqref{eq:Q_props}, together with
the corresponding properties of $J_\ell^{-1,\SO}$ and $\skewmat{\cdot}$,
$\mathcal{M}(Au,A\xi_R) = A\,\mathcal{M}(u,\xi_R)A^\top$ for any $A \in \SO$; since the
spectral norm is invariant under orthogonal
conjugation, $\|\mathcal{M}(u,\xi_R)\|$ depends only on $\|u\|$, $\theta$, and the angle between $u$
and $\xi_R$. Decompose $u$ with respect to the attitude-error axis as
\begin{equation}
    u = u_\parallel\,\hat\xi + u_\perp\,\hat w,
    \qquad
    u_\parallel := \hat\xi^\top u,
    \qquad
    u_\perp := \|u - u_\parallel\hat\xi\|,
    \qquad
    \hat w := \frac{u - u_\parallel\hat\xi}{u_\perp},
    \label{eq:u_decomp}
\end{equation}
so that $\hat w$ is a unit vector in the plane $P = \hat\xi^\perp$ of
Appendix~\ref{app:jacobian_proof}, the triad $\{\hat\xi,\hat w,\hat\xi\times\hat w\}$ is that
of Appendix~\ref{app:jacobian_proof}, and $\|u\|^2 = u_\parallel^2 + u_\perp^2$. If
$u_\perp = 0$, take $\hat w$ to
be any unit vector in $P$. By linearity of $\mathcal{M}$ in $u$,
$\mathcal{M}(u,\xi_R) = u_\parallel\,\mathcal{M}(\hat\xi,\xi_R)
+ u_\perp\,\mathcal{M}(\hat w,\xi_R)$.

\emph{Evaluation on the two directions.} Let $E := \skewmat{\hat\xi}$, so that $S = \theta E$
and $c\,\theta^2 = \beta$, giving $J_\ell^{-1,\SO} = I - \tfrac{\theta}{2}E + \beta(\theta)E^2$.
Set $\phi(t) := \xi_R + t\,u$. Since $\skewmat{u}$ is itself the hat of a vector,
$\skewmat{\phi(t)} = S + t\skewmat{u}$ remains skew for all $t$, so~\eqref{eq:Jl_inv_so3}
applies along the whole curve. Differentiating $J_\ell^{-1,\SO}J_\ell^{\SO} = I$ gives
$\tfrac{d}{dt}J_\ell^{-1,\SO} = -J_\ell^{-1,\SO}\bigl(\tfrac{d}{dt}J_\ell^{\SO}\bigr)J_\ell^{-1,\SO}$,
which with~\eqref{eq:Q_deriv} yields
\begin{equation}
    -J_\ell^{-1,\SO}\,Q(u,\xi_R)\,J_\ell^{-1,\SO}
    = \frac{d}{dt}\bigg|_{t=0}
      \Bigl[\, I - \tfrac{1}{2}\skewmat{\phi}
      + c\bigl(\|\phi\|\bigr)\skewmat{\phi}^2 \,\Bigr].
    \label{eq:M_derivative}
\end{equation}

For $u = \hat\xi$ one has $\phi(t) = (\theta+t)\hat\xi$ and
$\skewmat{\phi} = (\theta+t)E$, so~\eqref{eq:M_derivative} is an ordinary derivative in
$\theta$,
\begin{equation*}
    \frac{d}{d\theta}\Bigl[\, I - \tfrac{\theta}{2}E + \beta(\theta)E^2 \,\Bigr]
    = -\tfrac{1}{2}E + \beta'(\theta)\,E^2,
\end{equation*}
and since $\skewmat{\hat\xi} = E$ the linear term cancels against
$\tfrac12\skewmat{u}$ in~\eqref{eq:M_def}, leaving
\begin{equation}
    \mathcal{M}(\hat\xi,\xi_R) = m_a(\theta)\,E^2,
    \qquad
    m_a(\theta) := \beta'(\theta).
    \label{eq:m_a}
\end{equation}
For $u = \hat w$, write $P_w := \skewmat{\hat w}$, so
$\skewmat{\phi} = \theta E + t P_w$ and, by orthogonality of $\hat w$ and $\hat\xi$,
$\|\phi(t)\| = (\theta^2 + t^2)^{1/2}$ with $\tfrac{d}{dt}\|\phi\|\big|_{t=0} = 0$. The scalar
$c(\|\phi\|)$ is therefore stationary and only the matrix factors contribute:
\begin{equation*}
    \frac{d}{dt}\bigg|_{t=0}
    \Bigl[\, -\tfrac{1}{2}(\theta E + t P_w)
    + c(\|\phi\|)(\theta E + t P_w)^2 \,\Bigr]
    = -\tfrac{1}{2}P_w + c(\theta)\,\theta\,(E P_w + P_w E).
\end{equation*}
The linear term again cancels against $\tfrac12\skewmat{u}$, and evaluating the
anticommutator by
$\skewmat{x}\skewmat{y} + \skewmat{y}\skewmat{x} = y x^\top + x y^\top - 2(x^\top y)I$ at the
orthogonal pair $(\hat\xi,\hat w)$ gives
\begin{equation}
    \mathcal{M}(\hat w,\xi_R)
      = m_p(\theta)\bigl(\hat\xi\,\hat w^\top + \hat w\,\hat\xi^\top\bigr),
    \qquad
    m_p(\theta) := \theta\,c(\theta) = \frac{\beta(\theta)}{\theta}.
    \label{eq:m_p}
\end{equation}

\emph{Assembly.} Both~\eqref{eq:m_a} and~\eqref{eq:m_p} are symmetric, hence so is
$\mathcal{M}(u,\xi_R)$, and its spectral norm equals its largest eigenvalue magnitude. Since
$E^2 = \mathrm{diag}(0,-1,-1)$ in the triad of~\eqref{eq:u_decomp},
\begin{equation}
    \mathcal{M}(u,\xi_R) =
    \begin{bmatrix}
        0 & u_\perp\,m_p & 0 \\
        u_\perp\,m_p & -u_\parallel\,m_a & 0 \\
        0 & 0 & -u_\parallel\,m_a
    \end{bmatrix}.
    \label{eq:M_matrix}
\end{equation}
The $(\hat\xi\times\hat w)$ entry is decoupled, with eigenvalue magnitude
$|u_\parallel|m_a$. The
leading $2\times2$ block has characteristic polynomial
$z^2 + u_\parallel\,m_a z - u_\perp^2 m_p^2 = 0$, whose larger root in magnitude dominates the
decoupled contribution, yielding
\begin{equation}
    \|\mathcal{M}(u,\xi_R)\|
    = \tfrac{1}{2}\Bigl(|u_\parallel|\,m_a
      + \sqrt{u_\parallel^2 m_a^2 + 4 u_\perp^2 m_p^2}\Bigr).
    \label{eq:M_norm}
\end{equation}

\emph{Maximization over direction.} Set $\|u\| = 1$ and $\lambda := u_\parallel^2 \in [0,1]$, so
$u_\perp^2 = 1-\lambda$ and~\eqref{eq:M_norm} becomes
\begin{equation*}
    \|\mathcal{M}\| = \tfrac{1}{2}\left(\sqrt{\lambda}\,m_a
    + \sqrt{\lambda\,(m_a^2 - 4m_p^2) + 4m_p^2}\,\right).
\end{equation*}
Both $m_a$ and $m_p$ are positive on $(0,\pi)$, since $\beta > 0$ there gives $m_p > 0$ and
Appendix~\ref{app:ma_mp} gives $m_a > 2m_p$. The first term is therefore increasing in
$\lambda$; the radicand of the second is affine in $\lambda$ with slope $m_a^2 - 4m_p^2$,
non-negative precisely when $m_a \ge 2m_p$, so the second term is increasing in $\lambda$ as
well. The maximum over $[0,1]$ is attained at $\lambda = 1$, that is at $u$ parallel to
$\xi_R$, where $u_\perp = 0$ and $\|\mathcal{M}\| = m_a\|u\| = \beta'(\theta)\|u\|$.
\end{proof}

\begin{theorem}[Control mismatch linearization]
\label{thm:control_mismatch}
For $\theta = \|\xi_R\| \in (0,\pi)$, the action of $J_\ell^{-1}(\xi)$ on $\tilde\nu$ admits
the exact decomposition
\begin{equation}
    J_\ell^{-1}(\xi)\,\tilde\nu
    = \tilde\nu + L(\tilde\nu)\,\xi
    + \begin{pmatrix}
        \varepsilon_p(\xi,t) \\[2pt]
        \varepsilon_v(\xi,t) \\[2pt]
        \varepsilon_R(\xi,t)
      \end{pmatrix},
    \label{eq:Jl_tilde_nu_decomp}
\end{equation}
where the linear-in-$\xi$ correction matrix is
\begin{equation}
    L(\tilde\nu) := \ad_{\frac{1}{2}\tilde\nu}
    = \tfrac{1}{2}
    \begin{bmatrix}
        \skewmat{\Delta\omega} & 0 & 0 \\[2pt]
        0 & \skewmat{\Delta\omega} & \skewmat{\Delta a} \\[2pt]
        0 & 0 & \skewmat{\Delta\omega}
    \end{bmatrix},
    \label{eq:L_def}
\end{equation}
and the residual blocks $\varepsilon_p,\varepsilon_v,\varepsilon_R \in \mathbb{R}^3$ satisfy
\begin{align}
    \|\varepsilon_p(\xi,t)\| &\le \alpha(\theta)\,\|\xi_p\|\,\|\Delta\omega(t)\|,
    \label{eq:epsp_bound}\\
    \|\varepsilon_v(\xi,t)\| &\le \beta(\theta)\,\|\Delta a(t)\|
        + \alpha(\theta)\,\|\xi_v\|\,\|\Delta\omega(t)\|,
    \label{eq:epsv_bound}\\
    \|\varepsilon_R(\xi,t)\| &\le \beta(\theta)\,\|\Delta\omega(t)\|,
    \label{eq:epsR_bound}
\end{align}
with
\begin{equation}
    \beta(\theta) := 1 - \frac{\theta}{2}\cot\frac{\theta}{2}
    = \frac{\theta^2}{12} + O(\theta^4),
    \qquad
    \alpha(\theta) := \beta'(\theta)
    = \frac{\theta}{4}\csc^2\frac{\theta}{2} - \frac{1}{2}\cot\frac{\theta}{2}
    = \frac{\theta}{6} + O(\theta^3).
    \label{eq:alpha_beta_def}
\end{equation}
\end{theorem}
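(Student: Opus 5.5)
Our plan is to start from the block split~\eqref{eq:Jl_inv_split}, define the residuals by subtraction, and bound each block separately. Since $\tilde\nu = (0,\Delta a,\Delta\omega)$, we first compute $\ad_{\frac12\tilde\nu}\xi$ from~\eqref{eq:ad_xi_matrix}. Using the antisymmetry $\ad_{\tilde\nu}\xi = -\ad_\xi\tilde\nu$, this gives
\[
\ad_{\frac12\tilde\nu}\xi = \tfrac12\bigl(\Delta\omega\times\xi_p,\ \Delta\omega\times\xi_v + \Delta a\times\xi_R,\ \Delta\omega\times\xi_R\bigr),
\]
which is exactly the matrix in~\eqref{eq:L_def}. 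Subtracting $\tilde\nu + L(\tilde\nu)\xi$ from~\eqref{eq:Jl_inv_split} then defines $\varepsilon_p,\varepsilon_v,\varepsilon_R$, so the decomposition~\eqref{eq:Jl_tilde_nu_decomp} is exact by construction. What remains is to regroup each residual into a form that the available lemmas can bound.

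\emph{Attitude and acceleration parts.} For any $x\in\mathbb{R}^3$, we use $J_\ell^{-1,\SO} = I - \tfrac12 S + c(\theta)S^2$ from~\eqref{eq:Jl_inv_so3}, so that
\[
J_\ell^{-1,\SO}x - x + \tfrac12\xi_R\times x = c(\theta)S^2x.
\]
Note that $-\tfrac12 Sx = +\tfrac12 x\times\xi_R$, while $L$ contributes $\tfrac12\Delta\omega\times\xi_R$; the signs must be reconciled against the paper's $\ad$ convention, and this is a place to check carefully. The remainder $c(\theta)S^2x$ has norm at most $c\,\theta^2\|x\| = \beta(\theta)\|x\|$, since $\|S^2\| = \theta^2$. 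Taking $x=\Delta\omega$ gives~\eqref{eq:epsR_bound}, and taking $x=\Delta a$ gives the $\beta\|\Delta a\|$ term in~\eqref{eq:epsv_bound}.

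\emph{Angular-velocity coupling into position and velocity.} For the $Q$-blocks we write
\[
-J_\ell^{-1,\SO}Q_pJ_\ell^{-1,\SO}\Delta\omega = \mathcal{M}(\xi_p,\xi_R)\Delta\omega - \tfrac12\skewmat{\xi_p}\Delta\omega
\]
directly from~\eqref{eq:M_def}. The second term equals $\tfrac12\Delta\omega\times\xi_p$, which is precisely the $(1,1)$ entry of $L$. Hence $\varepsilon_p = \mathcal{M}(\xi_p,\xi_R)\Delta\omega$, and Lemma~\ref{lem:M_bound} gives $\|\varepsilon_p\|\le\beta'(\theta)\|\xi_p\|\|\Delta\omega\| = \alpha(\theta)\|\xi_p\|\|\Delta\omega\|$. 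The velocity block works identically with $u=\xi_v$, and adding the $\Delta a$ residual via the triangle inequality yields~\eqref{eq:epsv_bound}.

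\emph{Remaining checks and main obstacle.} We would verify the closed forms and Taylor expansions in~\eqref{eq:alpha_beta_def} by differentiating $\beta$ and expanding $\tfrac{x}{2}\cot\tfrac{x}{2} = 1 - x^2/12 - \cdots$. The main obstacle is sign bookkeeping: confirming that the first-order terms of both the $\SO$ Jacobian route and the $Q$ route coincide with $\ad_{\frac12\tilde\nu}\xi$ rather than its negative. To settle this, we would expand $J_\ell^{-1}(\xi) = I - \tfrac12\ad_\xi + O(\ad_\xi^2)$ from the Bernoulli series and use $-\tfrac12\ad_\xi\tilde\nu = \tfrac12\ad_{\tilde\nu}\xi$. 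This confirms globally that the linear part is $L(\tilde\nu)\xi$, so only the second-order remainders need the blockwise estimates above.
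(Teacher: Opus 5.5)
Your proposal is correct and follows the paper's proof essentially step for step. The exact $\mathrm{SO}(3)$ closed form leaves the residual $c(\theta)S^2z$, bounded by $\beta(\theta)\|z\|$, in the velocity and attitude blocks; the $Q$-blocks split as $\tfrac12\skewmat{\Delta\omega}u + \mathcal{M}(u,\xi_R)\Delta\omega$, with Lemma~\ref{lem:M_bound} supplying $\alpha = \beta'$; and the collected linear terms are identified with $\ad_{\frac12\tilde\nu}$. The sign concern you flag does not materialize, since $-\tfrac12 S\Delta\omega = \tfrac12\Delta\omega\times\xi_R$ already coincides with the rotation-block entry of $L$, and your Bernoulli-series check $-\tfrac12\ad_\xi\tilde\nu = \tfrac12\ad_{\tilde\nu}\xi$ is a tidy global confirmation of the same fact.
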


\begin{proof}
\emph{Diagonal contributions.} By~\eqref{eq:Jl_inv_so3}, for any $z \in \mathbb{R}^3$,
\begin{equation}
    J_\ell^{-1,\SO} z = z - \tfrac{1}{2}S z + \varepsilon_a(\xi_R,z),
    \qquad
    \varepsilon_a(\xi_R,z) := c(\theta)\,S^2 z,
    \label{eq:Jz_lin}
\end{equation}
which is exact, the closed form~\eqref{eq:Jl_inv_so3} terminating at second order. By the
cross-product identity $x \times y = -\,y \times x$ applied to $\xi_R \times z$, the linear
correction rewrites as $-\tfrac12 S z = \tfrac12\skewmat{z}\xi_R$, so that it is linear in the
state with the input as coefficient. Applying~\eqref{eq:Jz_lin} with $z = \Delta a$ gives the
velocity-block expansion and with $z = \Delta\omega$ the rotation-block expansion.

By Appendix~\ref{app:jacobian_proof}, $S^2$ vanishes on the axis and equals $-\theta^2 I$ on
$P = \hat\xi^\perp$, so $\|S^2\| = \theta^2$, attained on $P$. Hence
\begin{equation}
    \|\varepsilon_a(\xi_R,z)\| \le c(\theta)\,\theta^2\,\|z\| = \beta(\theta)\,\|z\|,
    \label{eq:epsa_bound}
\end{equation}
with equality for $z \in P$. Applying with $z = \Delta a$ gives the first term
of~\eqref{eq:epsv_bound}, and with $z = \Delta\omega$ gives~\eqref{eq:epsR_bound}.

\emph{Off-diagonal contributions.} By~\eqref{eq:M_def}, the position and velocity blocks of
the $\Delta\omega$ column of~\eqref{eq:Jl_inv_split} are
\begin{equation}
    -J_\ell^{-1,\SO}Q(u,\xi_R)J_\ell^{-1,\SO}\Delta\omega
    = -\tfrac{1}{2}\skewmat{u}\Delta\omega + \mathcal{M}(u,\xi_R)\Delta\omega
    = \tfrac{1}{2}\skewmat{\Delta\omega}u + \mathcal{M}(u,\xi_R)\Delta\omega,
    \label{eq:Dw_pv_expansion}
\end{equation}
again by antisymmetry of the cross product. Setting $u = \xi_p$ and $u = \xi_v$ and
collecting all linear-in-$\xi$ terms into $L(\tilde\nu)$ gives the
decomposition~\eqref{eq:Jl_tilde_nu_decomp} with the explicit forms
\begin{equation}
    \varepsilon_p = \mathcal{M}(\xi_p,\xi_R)\Delta\omega,
    \qquad
    \varepsilon_v = \varepsilon_a(\xi_R,\Delta a) + \mathcal{M}(\xi_v,\xi_R)\Delta\omega,
    \qquad
    \varepsilon_R = \varepsilon_a(\xi_R,\Delta\omega).
    \label{eq:eps_components}
\end{equation}
Lemma~\ref{lem:M_bound} applied with $u = \xi_p$ gives~\eqref{eq:epsp_bound}, and with
$u = \xi_v$ the second term of~\eqref{eq:epsv_bound}.

\emph{Form of $L$.} The collected linear terms are $\tfrac12\skewmat{\Delta\omega}\xi_p$ in
the position block, $\tfrac12\skewmat{\Delta\omega}\xi_v + \tfrac12\skewmat{\Delta a}\xi_R$ in
the velocity block, and $\tfrac12\skewmat{\Delta\omega}\xi_R$ in the rotation block. Comparing
with~\eqref{eq:ad_xi_matrix} evaluated at
$\tfrac12\tilde\nu = (0,\tfrac12\Delta a,\tfrac12\Delta\omega)$ identifies this matrix as
$\ad_{\frac12\tilde\nu}$, which is~\eqref{eq:L_def}. The small-angle orders
in~\eqref{eq:alpha_beta_def} follow from the expansion of $\cot$.
\end{proof}

\subsection{Forced Linear Time-Varying Error System}
\label{sec:control_dynamics}

\begin{proposition}[Forced log-error dynamics]
\label{prop:forced_LTV}
Let $X(t)$ and $\bar X(t)$ evolve under~\eqref{eq:eom-actual} and its overlined counterpart,
and let $\theta = \|\xi_R\| \in (0,\pi)$. Then the log-error~\eqref{eq:xidef} satisfies
\begin{equation}
    \dot\xi = \tilde A(t)\,\xi + \tilde\nu(t)
    + \varepsilon(\xi,t) + \delta_{\mathrm{att}} + \delta_{\mathrm{grav}},
    \label{eq:dynamics_full}
\end{equation}
where $\varepsilon = (\varepsilon_p,\varepsilon_v,\varepsilon_R)^\top$ is the residual of
Theorem~\ref{thm:control_mismatch}, $\delta_{\mathrm{att}}$ and $\delta_{\mathrm{grav}}$ are
the remainders of Theorem~\ref{thm:gravity_linearization}, and
\begin{equation}
    \tilde A(t) := A(t) - L(\tilde\nu(t))
    =
    \begin{bmatrix}
        -\skewmat{\omega_{\mathrm m}} & I & 0 \\[2pt]
        G(t) & -\skewmat{\omega_{\mathrm m}} & -\skewmat{a_{\mathrm m}} \\[2pt]
        0 & 0 & -\skewmat{\omega_{\mathrm m}}
    \end{bmatrix},
    \qquad
    \begin{aligned}
        a_{\mathrm m} &:= \tfrac{1}{2}(a + \bar a), \\
        \omega_{\mathrm m} &:= \tfrac{1}{2}(\omega + \bar\omega).
    \end{aligned}
    \label{eq:Atilde_block}
\end{equation}
Dropping the three bounded remainders gives the forced linear time-varying system
\begin{equation}
    \dot\xi(t) = \tilde A(t)\,\xi(t) + \tilde\nu(t),
    \label{eq:forced_LTV}
\end{equation}
which reduces to the matched-control system~\eqref{eq:LTV} when $\tilde\nu \equiv 0$. In
particular the attitude row of $\tilde A(t)$ vanishes in its position and velocity block
columns, so the attitude error remains autonomous.
\end{proposition}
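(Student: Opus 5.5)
The plan is to begin from the exact log-error dynamics~\eqref{eq:loglinear} of Proposition~\ref{prop:loglinear}. We substitute the two exact decompositions already established, and then do a short block-matrix bookkeeping.

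First, the gravity term $J_r^{-1}(\xi)\,\mathrm{Ad}^\vee_{X^{-1}}\tilde m$ is replaced by the right-hand side of~\eqref{eq:gravity_linearized}. This is an identity for $\theta\in(0,\pi)$ and $d<r$, so it yields the linear piece $(0,G(t)\xi_p,0)$ plus $\delta_{\mathrm{att}}+\delta_{\mathrm{grav}}$. Second, the control term $J_\ell^{-1}(\xi)\tilde\nu$ is replaced by the exact decomposition~\eqref{eq:Jl_tilde_nu_decomp} of Theorem~\ref{thm:control_mismatch}, giving $\tilde\nu+L(\tilde\nu)\xi+\varepsilon$. With these substitutions, \eqref{eq:loglinear} becomes
\begin{equation*}
\dot\xi = \bigl(-\ad_\nu + A_C + \mathcal{G}(t) + L(\tilde\nu)\bigr)\xi + \tilde\nu + \varepsilon + \delta_{\mathrm{att}}+\delta_{\mathrm{grav}},
\end{equation*}
where $\mathcal{G}(t)$ is the $9\times9$ matrix with $G(t)$ in its $(2,1)$ block and zeros elsewhere.

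The key step is identifying the system matrix. We note that in~\eqref{eq:loglinear} the adjoint is taken at the \emph{actual} input $\nu=\bar\nu+\tilde\nu$, not at the reference input. By linearity of the adjoint, together with $L(\tilde\nu)=\ad_{\frac12\tilde\nu}$ from~\eqref{eq:L_def}, we get
\begin{equation*}
-\ad_\nu+L(\tilde\nu) = -\ad_{\bar\nu}-\ad_{\tilde\nu}+\ad_{\frac12\tilde\nu} = -\ad_{\bar\nu+\frac12\tilde\nu} = -\ad_{\nu_{\mathrm m}}, \qquad \nu_{\mathrm m}=\tfrac12(\nu+\bar\nu)=(0,a_{\mathrm m},\omega_{\mathrm m}).
\end{equation*}
We then evaluate $-\ad_{\nu_{\mathrm m}}$ with the block formula~\eqref{eq:ad_xi_matrix} and add $A_C$ and $\mathcal{G}(t)$. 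This reproduces the block form of~\eqref{eq:Atilde_block}: the diagonal blocks are $-[\omega_{\mathrm m}]_\times$, the $(2,3)$ block is $-[a_{\mathrm m}]_\times$, the $(1,3)$ block vanishes because the position entry of $\nu_{\mathrm m}$ is zero, the $(1,2)$ block is $I$, and the $(2,1)$ block is $G(t)$.

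The main obstacle is reconciling this with the stated identity $\tilde A = A(t)-L(\tilde\nu)$. The matrix $A(t)$ in~\eqref{eq:A_matrix} was written with $\bar\nu$, whereas $\tilde A$ must be $-\ad_{\bar\nu}-\ad_{\frac12\tilde\nu}+A_C+\mathcal{G}$. The sign therefore depends on interpreting $A(t)$ as $-\ad_{\bar\nu}+A_C+\mathcal G$, which gives $A-\ad_{\frac12\tilde\nu}=A-L$. We will check this sign carefully against the explicit $\ad$ formula, because it is exactly what produces the mean-input re-centering.

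Dropping $\varepsilon$, $\delta_{\mathrm{att}}$ and $\delta_{\mathrm{grav}}$ then gives~\eqref{eq:forced_LTV}. When $\tilde\nu\equiv0$ we have $\nu_{\mathrm m}=\bar\nu$ and $L=0$, so $\tilde A=A$ and the matched-control system~\eqref{eq:LTV} is recovered. Finally, the third block row of $\ad_{\nu_{\mathrm m}}$ is $(0,0,[\omega_{\mathrm m}]_\times)$ by~\eqref{eq:ad_xi_matrix}, and neither $A_C$ nor $\mathcal G$ touches that row. Hence the attitude row of $\tilde A$ has zero position and velocity columns, and the linearized attitude error evolves autonomously.
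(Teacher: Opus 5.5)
Your proposal is correct and follows the paper's proof essentially step for step. You substitute the exact decompositions of Theorems~\ref{thm:gravity_linearization} and~\ref{thm:control_mismatch} into~\eqref{eq:loglinear}, use linearity of $\ad$ with $L(\tilde\nu)=\ad_{\frac12\tilde\nu}$ to turn $-\ad_\nu+L(\tilde\nu)$ into $-\ad_{\bar\nu}-\ad_{\frac12\tilde\nu}=-\ad_{\nu_{\mathrm m}}$, and then read off the blocks. The sign you flag as the main obstacle is not actually in doubt: comparing~\eqref{eq:A_matrix} with~\eqref{eq:ad_xi_matrix} at $\bar\nu=(0,\bar a,\bar\omega)$ shows directly that $A(t)=A_C-\ad_{\bar\nu}$ plus the $G(t)$ block, so $\tilde A=A-L(\tilde\nu)$ follows immediately.
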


\begin{proof}
Substituting the decomposition~\eqref{eq:Jl_tilde_nu_decomp} of
Theorem~\ref{thm:control_mismatch} and the gravity
linearization~\eqref{eq:gravity_linearized} of Theorem~\ref{thm:gravity_linearization}
into~\eqref{eq:loglinear} gives~\eqref{eq:dynamics_full} with
$\tilde A = A_C - \ad_\nu + L(\tilde\nu)$, carrying $G(t)$ in the velocity--position block.

For the first form of~\eqref{eq:Atilde_block}, note that $A(t)$ of~\eqref{eq:A_matrix} is
written with the reference input $\bar\nu$, whereas~\eqref{eq:loglinear} carries $-\ad_\nu$,
formed from the deputy input. Since $\ad$ is linear in its subscript,
$\ad_\nu = \ad_{\bar\nu} + \ad_{\tilde\nu}$, and combining with
$L(\tilde\nu) = \ad_{\frac12\tilde\nu}$ from~\eqref{eq:L_def} leaves
$-\ad_{\bar\nu} - \ad_{\frac12\tilde\nu}$, that is, $A(t) - L(\tilde\nu)$.

For the second form, subtract~\eqref{eq:L_def} from~\eqref{eq:A_matrix} block by block. The
angular-velocity blocks become $-\skewmat{\bar\omega + \frac12\Delta\omega}$ and the
thrust--attitude block $-\skewmat{\bar a + \frac12\Delta a}$; since
$\Delta\omega = \omega - \bar\omega$ and $\Delta a = a - \bar a$
by~\eqref{eq:mismatch_components}, these equal $-\skewmat{\omega_{\mathrm m}}$ and
$-\skewmat{a_{\mathrm m}}$.

Finally, $\tilde\nu \equiv 0$ gives $L(\tilde\nu) = 0$ by~\eqref{eq:L_def}, each component of
$\varepsilon$ vanishes by~\eqref{eq:eps_components}, and $a_{\mathrm m} = \bar a$,
$\omega_{\mathrm m} = \bar\omega$, so~\eqref{eq:Atilde_block} reduces
to~\eqref{eq:A_matrix}.
\end{proof}

\begin{remark}[Mean-input form]
\label{rem:mean_input}
Writing $\nu_{\mathrm m} := \tfrac12(\nu + \bar\nu) = (0,a_{\mathrm m},\omega_{\mathrm m})$,
the two identities of the proof combine into
\begin{equation}
    \tilde A(t) = A_C - \ad_{\nu_{\mathrm m}(t)} + G\text{-block},
    \label{eq:Atilde_mean}
\end{equation}
which is~\eqref{eq:A_matrix} with $\bar\nu$ replaced by $\nu_{\mathrm m}$. Absorbing the
first-order Jacobian correction therefore does not add structure to the system matrix; it
re-centers the linearization at the mean of the two vehicles' inputs rather than at the
reference's. The matched-control case $\nu = \bar\nu$ recovers $\nu_{\mathrm m} = \bar\nu$
immediately.
\end{remark}

\begin{remark}[Validity of the linearization]
\label{rem:validity_control}
The forced LTV system~\eqref{eq:forced_LTV} requires $\theta = \|\xi_R\|$ small enough that
$\beta(\theta)$ and $\alpha(\theta)$ are small. The restriction $\theta < \pi$ avoids the
$\SO$ exponential coordinate singularity, and is benign for proximity operations where
$\theta$ is typically a few degrees; the coupling correction is in fact validated out to
$\theta = 20^\circ$ in Section~\ref{sec:num_attitude}, where it still recovers more than
$92\%$ of the terminal error. Combined with Remark~\ref{rem:ltv_validity}, the
assumptions for~\eqref{eq:forced_LTV} are $\theta \ll 1$ (control linearization and
attitude-induced gravity) and $\|\xi_p\|/r \ll 1$ (higher-order gravity). A sustained
angular-velocity mismatch is a maneuver rather than a steady state: a mismatch of
$1.45\times10^{-3}$~rad/s drives the relative rotation angle to $\pi$ within $81$~min, at
which point the logarithmic coordinate saturates and this regime is left.

By~\eqref{eq:alpha_beta_def}, $\beta(\theta) = O(\theta^2)$ and $\alpha(\theta) = O(\theta)$,
so all three residuals vanish with the attitude error. The three blocks carry different
physical units, however, so each must be compared with the retained term in its own row
of~\eqref{eq:Jl_tilde_nu_decomp}: $\tfrac12\skewmat{\Delta\omega}\xi_p$ for
$\varepsilon_p$, $\Delta a$ for $\varepsilon_v$, and $\Delta\omega$ for $\varepsilon_R$.
Because $\varepsilon_p$ alone scales with the separation $\|\xi_p\|$, it is the dominant
residual at proximity-operations scales. Quantitative thresholds follow directly
from~\eqref{eq:epsp_bound}--\eqref{eq:epsR_bound}. The numerical study of
Section~\ref{sec:results} exercises both the matched-control case, for which
$\varepsilon \equiv 0$, and mismatched control, where Section~\ref{sec:num_mismatch}
validates all three residual bounds along the transfer.
\end{remark}

\begin{remark}[Sharpness]
\label{rem:sharpness_control}
Neither constant in~\eqref{eq:alpha_beta_def} can be reduced. Equality holds
in~\eqref{eq:epsR_bound} whenever $\Delta\omega$ lies in the plane $P = \hat\xi^\perp$, and
in~\eqref{eq:epsp_bound} whenever $\xi_p$ is parallel to $\xi_R$ and $\Delta\omega$ also lies in $P$, so no smaller $\beta$ or
$\alpha$ satisfies the bounds for all admissible states and inputs. Unlike the bound~\eqref{eq:bound_att} of
Theorem~\ref{thm:gravity_linearization}, which retains the angle $\varphi$ between $\xi_R$ and
$\hat r$, the bounds~\eqref{eq:epsp_bound}--\eqref{eq:epsR_bound} are stated with the direction
dependence already maximized: $\varphi$ is computable along a known reference trajectory,
whereas the orientation of $\xi_p$ and $\xi_v$ relative to $\xi_R$ is not known in advance,
and the worst case costs only a factor of two since $m_a = 2m_p$ to leading order. The
elementary bound $\alpha(\theta) \le \tfrac13\tan(\theta/2)$ holds on $(0,\pi)$, the two
agreeing to within $0.04\%$ at $\theta = 5^\circ$.
\end{remark}

%

\section{Two-Impulse $\Delta v$ Planning}
\label{sec:dv_planning}

Sections~\ref{sec:gravity_linearization}--\ref{sec:control_mismatch} produced a
forced linear time-varying system, \eqref{eq:forced_LTV}, together with
certified bounds on everything discarded in reaching it. This section derives a
two-impulse rendezvous planner from that system. Because the rendezvous condition
is exact in log coordinates, the only approximation entering the planner is in
propagation, so the bounds of Theorems~\ref{thm:gravity_linearization}
and~\ref{thm:control_mismatch} apply to it directly. The resulting planner
differs from classical formulations in a single term: an attitude--translation
coupling contribution that vanishes whenever both vehicles coast, and that
dominates the terminal error when either thrusts.

The derivation is carried out once, over the general system~\eqref{eq:forced_LTV};
matched control, coasting references, and HCW follow as special cases in
Sec.~\ref{sec:dv_mismatch}. The planner requires $\bar R(t)$, $\bar a(t)$, and
$\bar\omega(t)$ along the reference, and---when the control is mismatched---the
deputy's own input profile $\nu(t)$ over the transfer horizon, since $\tilde A$
is built from the mean inputs $\nu_{\mathrm m}$ of
Remark~\ref{rem:mean_input}. This is available in cooperative proximity
operations, where the maneuver plans are known by construction.

\subsection{Problem Statement}
\label{sec:dv_problem}

Given the log-error state $\xi(t_0)$ and a transfer horizon $T = t_0 + \Delta t$,
find impulsive velocity corrections $\Delta\xi_v(t_0)$ and $\Delta\xi_v(T)$ such
that
\begin{equation}
    \xi_p(T) = 0,
    \qquad
    \xi_v(T^+) = 0 .
    \label{eq:rendezvous_condition}
\end{equation}

Conditions~\eqref{eq:rendezvous_condition} are stated in log coordinates but are
equivalent to physical rendezvous without approximation. The $(1,3)$ and $(1,2)$
blocks of $\eta = \Exp(\xi^\wedge)$ give
\begin{equation}
    \rho = p - \bar p = \bar R\, J_\ell^{\SO}(\xi_R)\,\xi_p,
    \qquad
    v - \bar v = \bar R\, J_\ell^{\SO}(\xi_R)\,\xi_v,
    \label{eq:rho_xi_relation}
\end{equation}
and both factors are invertible: $\bar R \in \SO$ always, and
$J_\ell^{\SO}(\xi_R)$ for $\theta \in (0,2\pi)$ by
Lemma~\ref{lem:jacobian_norms}, a range implied by the injectivity restriction
$\theta < \pi$ in force throughout. The maps
$\xi_p \mapsto \rho$ and $\xi_v \mapsto v - \bar v$ are therefore linear
bijections, so $\xi_p(T) = 0 \iff p(T) = \bar p(T)$ and
$\xi_v(T^+) = 0 \iff v(T^+) = \bar v(T)$.

\subsection{State Transition Matrix and Forced Response}
\label{sec:dv_stm}

Let $\tilde\Phi(t,t_0)$ denote the STM of the homogeneous part
of~\eqref{eq:forced_LTV}, satisfying
$\dot{\tilde\Phi} = \tilde A(t)\tilde\Phi$, $\tilde\Phi(t_0,t_0) = I$, and
partitioned conformally with $\xi = (\xi_p,\xi_v,\xi_R)$ as
\begin{equation}
\tilde\Phi =
\begin{bmatrix}
\tilde\Phi_{pp} & \tilde\Phi_{pv} & \tilde\Phi_{pR} \\
\tilde\Phi_{vp} & \tilde\Phi_{vv} & \tilde\Phi_{vR} \\
\tilde\Phi_{Rp} & \tilde\Phi_{Rv} & \tilde\Phi_{RR}
\end{bmatrix}.
\label{eq:STM_partition}
\end{equation}
The known forcing $\tilde\nu(t)$ contributes the particular integral
\begin{equation}
b(T,t_0) := \int_{t_0}^{T} \tilde\Phi(T,\tau)\,\tilde\nu(\tau)\,d\tau,
\qquad b = (b_p,\,b_v,\,b_R)^\top ,
\label{eq:forcing_integral}
\end{equation}
so that the free response of~\eqref{eq:forced_LTV} from $\xi(t_0)$ is
$\xi(T) = \tilde\Phi(T,t_0)\,\xi(t_0) + b(T,t_0)$.

The sparsity of $\tilde A(t)$ transfers to $\tilde\Phi$. Since the attitude row
of~\eqref{eq:Atilde_block} vanishes in its position and velocity block columns,
\begin{equation}
    \tilde\Phi_{Rp} = \tilde\Phi_{Rv} = 0,
    \qquad
    \xi_R(T) = \tilde\Phi_{RR}(T,t_0)\,\xi_R(t_0) + b_R(T,t_0),
    \label{eq:attitude_autonomous}
\end{equation}
which is computable in advance of any impulse selection. Conversely, for
thrusting references the term $-\skewmat{a_{\mathrm m}}\xi_R$ in the velocity row
propagates attitude error into translation, so $\tilde\Phi_{pR} \neq 0$ and
$\tilde\Phi_{vR} \neq 0$ in general. When $a_{\mathrm m} \equiv 0$ these blocks
vanish and the translational and rotational subsystems decouple completely.

Computationally, $\tilde\Phi$ and $b$ are obtained from a single joint
integration,
\begin{equation}
\frac{d}{dt}
\begin{bmatrix} \tilde{\Phi}(t,t_0) \\ b(t,t_0) \end{bmatrix}
=
\begin{bmatrix}
\tilde{A}(t)\,\tilde{\Phi}(t,t_0) \\
\tilde{A}(t)\,b(t,t_0) + \tilde{\nu}(t)
\end{bmatrix},
\qquad
\tilde{\Phi}(t_0,t_0) = I,\quad b(t_0,t_0) = 0,
\label{eq:joint_integration}
\end{equation}
adding nine scalar states to the $81$-state STM ODE.

Both $\tilde\Phi$ and $b$ depend only on the reference arc and the deputy's
planned input profile, not on the log-error state, so for a known maneuver plan
they may be computed once and reused across replans from updated navigation
solutions. Conversely, any change to the planned inputs of either vehicle---a
chief thrust update, or a deputy reorientation that alters $\omega$---changes
$\nu_{\mathrm m}$ and hence $\tilde A$, requiring reintegration.

\subsection{Impulse Structure in Log Coordinates}
\label{sec:dv_impulse}

An impulsive maneuver changes the inertial velocity discontinuously while
leaving position and attitude fixed. The following lemma records how such a jump
appears in the log-error state, and---in the same stroke---the exact map back to
a physical velocity increment.

\begin{lemma}[Impulse action on the log-error state]
\label{lem:impulse_action}
Let the deputy receive an impulsive inertial velocity increment
$\delta v \in \mathbb{R}^3$ at time $t_0$, with $\theta = \|\xi_R(t_0)\| < \pi$.
Then $\xi_p$ and $\xi_R$ are continuous across the impulse, and
\begin{equation}
    \xi_v(t_0^+) = \xi_v(t_0^-) + \Delta\xi_v(t_0),
    \qquad
    \Delta\xi_v(t_0) = J_\ell^{-1,\SO}\bigl(\xi_R(t_0)\bigr)\,\bar R(t_0)^\top\,\delta v .
    \label{eq:impulse}
\end{equation}
Conversely, the physical increment realizing a commanded $\Delta\xi_v$ is
\begin{equation}
    \delta v = \bar R\, J_\ell^{\SO}(\xi_R)\,\Delta\xi_v
    \quad (\text{inertial frame}),
    \qquad
    \delta v^{\mathcal B} = J_r^{\SO}(\xi_R)\,\Delta\xi_v
    \quad (\text{deputy body frame}),
    \label{eq:impulse_map}
\end{equation}
both exact.
\end{lemma}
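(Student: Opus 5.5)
The plan is to work directly with the block structure of $\eta = \bar X^{-1}X$ and its exponential, not with the log-error dynamics. An impulse changes only the deputy's inertial velocity, $v(t_0^+) = v(t_0^-) + \delta v$, and leaves $p$, $R$ and the whole chief state $\bar X$ unchanged. Writing out $\eta = \bar X^{-1}X$ blockwise gives
\begin{equation*}
\eta_R = \bar R^\top R,\qquad \eta_v = \bar R^\top(v-\bar v),\qquad \eta_p = \bar R^\top(p-\bar p).
\end{equation*}
So across the impulse $\eta_R$ and $\eta_p$ are continuous, and $\eta_v$ jumps by exactly $\bar R(t_0)^\top\delta v$.

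The next step is to pass these statements to the log coordinates. The matrix exponential of $\xi^\wedge$ on $\mathrm{SE}_2(3)$ has rotation block $\Exp(\xi_R^\wedge)$. Its translational blocks are $J_\ell^{\SO}(\xi_R)\xi_v$ and $J_\ell^{\SO}(\xi_R)\xi_p$, which are the relations \eqref{eq:rho_xi_relation} already used in the problem statement. For $\theta<\pi$ the logarithm of $\eta_R$ is unique, so $\xi_R$ is a function of $\eta_R$ alone and is therefore continuous across the impulse. Lemma~\ref{lem:jacobian_norms} makes $J_\ell^{\SO}(\xi_R)$ invertible, so
\begin{equation*}
\xi_p = J_\ell^{-1,\SO}(\xi_R)\,\eta_p, \qquad \xi_v = J_\ell^{-1,\SO}(\xi_R)\,\eta_v .
\end{equation*}
The first identity gives continuity of $\xi_p$. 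The second is linear in $\eta_v$, and its coefficient depends only on the unchanged $\xi_R$. The jump is therefore $\Delta\xi_v = J_\ell^{-1,\SO}(\xi_R)\,\bar R^\top\delta v$, which is \eqref{eq:impulse}. The same identity shows that the map $\delta v\mapsto\Delta\xi_v$ is a linear bijection, so inverting it gives the inertial form $\delta v = \bar R J_\ell^{\SO}\Delta\xi_v$.

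For the body-frame form, write $\delta v^{\mathcal B} = R^\top\delta v = \eta_R^\top \bar R^\top \delta v = \eta_R^{-1} J_\ell^{\SO}\Delta\xi_v$ with $\eta_R = \Exp(\xi_R^\wedge)$. Then apply the standard $\mathrm{SO}(3)$ identity $J_\ell^{\SO} = \eta_R J_r^{\SO}$, equivalently $J_r^{\SO}(\xi_R) = J_\ell^{\SO}(-\xi_R)$ with $\eta_R^{\top} J_\ell^{\SO}(\xi_R) = J_\ell^{\SO}(-\xi_R)$. This is the same left/right identity $J_r^{-1,\SO}\eta_R^{-1}=J_\ell^{-1,\SO}$ already used in the proof of Theorem~\ref{thm:gravity_linearization}, inverted. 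It yields $\delta v^{\mathcal B} = J_r^{\SO}(\xi_R)\Delta\xi_v$. No approximation enters at any step, which gives exactness.

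The main point needing care is to justify that the translational blocks of $\Exp$ on $\mathrm{SE}_2(3)$ carry exactly $J_\ell^{\SO}$, with no coupling between $\xi_p$ and $\xi_v$. This follows from the vanishing $(1,2)$ block of $\ad_\xi$ in \eqref{eq:ad_xi_matrix}: the series $\sum (\xi^\wedge)^k/k!$ acts on the last two columns through the powers of $S$ only. The other point is that continuity of $\xi_R$ uses uniqueness of $\mathrm{Log}$ for $\theta<\pi$.
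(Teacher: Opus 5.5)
Your proposal is correct and follows the paper's proof essentially step for step. Both compare $\eta=\bar X^{-1}X$ blockwise with $\Exp(\xi^\wedge)$, note that the impulse leaves $\eta_p,\eta_R$ fixed and shifts $\eta_v$ by $\bar R^\top\delta v$, apply $J_\ell^{-1,\SO}$ and invert it, and use the left/right Jacobian identity for the body-frame form. One small imprecision: the absence of $\xi_p$--$\xi_v$ coupling in the translational blocks of $\Exp(\xi^\wedge)$ follows from the zero lower rows of $\xi^\wedge$ in \eqref{eq:xiblock}, so that $(\xi^\wedge)^k$ has translational columns $S^{k-1}\xi_v$ and $S^{k-1}\xi_p$, which is the reason the paper gives. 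The $(1,2)$ block of $\ad_\xi$ belongs to the adjoint representation and is not the right citation, though your direct series remark already supplies the correct argument.
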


\begin{proof}
The blocks of $\eta = \bar X^{-1}X$ are
$\eta_R = \bar R^\top R$, $\eta_v = \bar R^\top(v - \bar v)$, and
$\eta_p = \bar R^\top(p - \bar p)$; the blocks of $\eta = \Exp(\xi^\wedge)$ are
$\eta_R = \Exp(\xi_R^\wedge)$, $\eta_v = J_\ell^{\SO}(\xi_R)\xi_v$, and
$\eta_p = J_\ell^{\SO}(\xi_R)\xi_p$, the latter two sharing a single Jacobian
because the two translational columns of~\eqref{eq:xiblock} enter identically.
An impulse leaves $p$ and $R$ unchanged, hence leaves $\eta_p$ and $\eta_R$
unchanged, hence leaves $\xi_R$ and---since $J_\ell^{\SO}$ depends on $\xi_R$
alone and is invertible---leaves $\xi_p$ unchanged. It shifts
$\eta_v$ by $\bar R^\top\delta v$, giving~\eqref{eq:impulse} on applying
$J_\ell^{-1,\SO}$. Inverting yields the first expression
in~\eqref{eq:impulse_map}. For the second, write $R = \bar R\,\eta_R$, so
$\delta v^{\mathcal B} = R^\top \delta v = \eta_R^{-1}J_\ell^{\SO}\Delta\xi_v$;
the left/right identity $J_r^{-1,\SO}\eta_R^{-1} = J_\ell^{-1,\SO}$ used in the
proof of Theorem~\ref{thm:gravity_linearization} rearranges to
$\eta_R^{-1}J_\ell^{\SO} = J_r^{\SO}$.
\end{proof}

The left Jacobian appears in~\eqref{eq:impulse_map} for the same reason it
appears in~\eqref{eq:rho_xi_relation}: log coordinates and physical increments
differ by $J_\ell^{\SO}$, a distortion that is exact and computable, not an
error to be bounded. Retaining it costs one $3\times3$ product and removes the
$O(\theta)$ mapping error that an identification $\delta v = \bar R\Delta\xi_v$
would incur. By Lemma~\ref{lem:jacobian_norms} the change in magnitude is a mild
contraction,
\begin{equation}
    \frac{2\sin(\theta/2)}{\theta}\,\|\Delta\xi_v\|
    \;\le\; \|\delta v\| \;\le\; \|\Delta\xi_v\| ,
    \label{eq:impulse_norm_bounds}
\end{equation}
with the two sides agreeing to $O(\theta^2)$. The
bound~\eqref{eq:impulse_norm_bounds} constrains the magnitude alone, however,
and the magnitude is not where the distortion lies. On the plane
$\hat\xi^\perp$ the map $J_\ell^{\SO}$ is multiplication by the complex number
$c_{\mathbb C}$ of~\eqref{eq:app_c}, whose argument is exactly $\theta/2$: the
naive identification misdirects the impulse by a rotation of $\theta/2$, an
identity rather than a leading-order result, while shortening it by the
second-order factor $2\sin(\theta/2)/\theta = 1 - \theta^2/24 + O(\theta^4)$.
The transverse error therefore exceeds the magnitude error by $12/\theta$, a
factor of $137$ at $\theta = 5^\circ$, and Section~\ref{sec:num_mechanism}
measures the consequence for the planner.

\subsection{Departure and Arrival Impulses}
\label{sec:dv_solution}

\begin{proposition}[Two-impulse rendezvous solution]
\label{prop:two_impulse}
Consider the forced system~\eqref{eq:forced_LTV} with $\tilde\Phi$ and $b$
as in~\eqref{eq:STM_partition}--\eqref{eq:forcing_integral}, and suppose
$\tilde\Phi_{pv}(T,t_0)$ is invertible. Then the
conditions~\eqref{eq:rendezvous_condition} are met by a unique impulse pair. The
propagated position is
\begin{equation}
\xi_p(T) = \tilde\Phi_{pp}\,\xi_p(t_0)
+ \tilde\Phi_{pv}\bigl[\xi_v(t_0)+\Delta\xi_v(t_0)\bigr]
+ \tilde\Phi_{pR}\,\xi_R(t_0)
+ b_p(T,t_0),
\label{eq:propagated_state}
\end{equation}
and enforcing $\xi_p(T) = 0$ gives the departure impulse
\begin{equation}
\Delta\xi_v(t_0) = -\,\xi_v(t_0)
- \tilde\Phi_{pv}^{-1}\bigl[\tilde\Phi_{pp}\,\xi_p(t_0)
+ \tilde\Phi_{pR}\,\xi_R(t_0) + b_p(T,t_0)\bigr].
\label{eq:departure_impulse}
\end{equation}
The residual velocity error immediately before arrival is then independent of
$\xi_v(t_0)$,
\begin{equation}
    \xi_v(T^-) = K_p\,\xi_p(t_0) + K_R\,\xi_R(t_0) + k_b ,
    \label{eq:arrival_velocity}
\end{equation}
with
\begin{equation}
    K_p := \tilde\Phi_{vp} - \tilde\Phi_{vv}\tilde\Phi_{pv}^{-1}\tilde\Phi_{pp},
    \qquad
    K_R := \tilde\Phi_{vR} - \tilde\Phi_{vv}\tilde\Phi_{pv}^{-1}\tilde\Phi_{pR},
    \qquad
    k_b := b_v - \tilde\Phi_{vv}\tilde\Phi_{pv}^{-1}b_p ,
    \label{eq:arrival_gains}
\end{equation}
and the arrival impulse is
\begin{equation}
\Delta\xi_v(T) = -\,\xi_v(T^-).
\label{eq:arrival_impulse}
\end{equation}
The physical increments follow from Lemma~\ref{lem:impulse_action},
\begin{equation}
\delta v_0 = \bar{R}(t_0)\,J_\ell^{\SO}\bigl(\xi_R(t_0)\bigr)\,\Delta\xi_v(t_0),
\qquad
\delta v_T = \bar{R}(T)\,J_\ell^{\SO}\bigl(\xi_R(T)\bigr)\,\Delta\xi_v(T),
\label{eq:physical_dv}
\end{equation}
with $\xi_R(T)$ given in closed form by~\eqref{eq:attitude_autonomous}.
\end{proposition}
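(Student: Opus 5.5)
The argument is linear algebra on the affine propagation map of the forced LTV system~\eqref{eq:forced_LTV}. The departure impulse is modelled, by Lemma~\ref{lem:impulse_action}, as a jump in $\xi_v$ only: $\xi_p(t_0)$ and $\xi_R(t_0)$ are unchanged, and $\xi_v(t_0^+) = \xi_v(t_0) + \Delta\xi_v(t_0)$.

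\emph{Propagated position.} Starting from this post-impulse state, the response formula $\xi(T) = \tilde\Phi(T,t_0)\xi(t_0^+) + b(T,t_0)$ of Sec.~\ref{sec:dv_stm} applies. Read off with the partition~\eqref{eq:STM_partition}, its position row is exactly~\eqref{eq:propagated_state}.

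\emph{Departure impulse.} Setting~\eqref{eq:propagated_state} to zero gives a linear equation $\tilde\Phi_{pv}\bigl[\xi_v(t_0)+\Delta\xi_v(t_0)\bigr] = -\bigl[\tilde\Phi_{pp}\xi_p(t_0) + \tilde\Phi_{pR}\xi_R(t_0) + b_p\bigr]$. Because $\tilde\Phi_{pv}$ is assumed invertible, this has the unique solution~\eqref{eq:departure_impulse}.

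\emph{Arrival velocity.} The velocity row of the same propagation gives $\xi_v(T^-) = \tilde\Phi_{vp}\xi_p(t_0) + \tilde\Phi_{vv}\xi_v(t_0^+) + \tilde\Phi_{vR}\xi_R(t_0) + b_v$. I will substitute $\xi_v(t_0^+) = -\tilde\Phi_{pv}^{-1}[\tilde\Phi_{pp}\xi_p(t_0) + \tilde\Phi_{pR}\xi_R(t_0) + b_p]$. In this expression the term $\xi_v(t_0)$ has cancelled against the $-\xi_v(t_0)$ in the departure impulse, which is why the result does not depend on $\xi_v(t_0)$. Grouping the coefficients of $\xi_p(t_0)$ and $\xi_R(t_0)$, and the constant term, yields the gains $K_p$, $K_R$ and $k_b$ of~\eqref{eq:arrival_gains}.

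\emph{Arrival impulse.} By Lemma~\ref{lem:impulse_action} the arrival impulse shifts $\xi_v$ additively, so $\xi_v(T^+) = 0$ forces~\eqref{eq:arrival_impulse}, again uniquely. Position is continuous across the impulse, so $\xi_p(T) = 0$ is preserved.

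\emph{Physical increments.} These follow from~\eqref{eq:impulse_map} applied at $t_0$ and at $T$. This needs $\xi_R$ at each time, and $\xi_R$ is unaffected by impulses. At $T$ it is given by~\eqref{eq:attitude_autonomous}, because $\tilde\Phi_{Rp} = \tilde\Phi_{Rv} = 0$ means the velocity impulses never enter the attitude row.

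\emph{Main obstacle.} No step is deep. The one point needing care is uniqueness, since a priori a different pair could also work. This is settled by noting that both conditions are linear in the impulses and that each step above had exactly one solution: $\tilde\Phi_{pv}$ is invertible at departure, and the arrival impulse is a trivial additive shift. Hence the impulse pair is unique.
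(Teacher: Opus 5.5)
Your proposal is correct and matches the paper's proof step for step. You read off the position row of the forced response from the post-impulse state, solve for $\Delta\xi_v(t_0)$ using invertibility of $\tilde\Phi_{pv}$ (the affine map with invertible linear part gives uniqueness), and substitute the bracket into the velocity row to collect $K_p$, $K_R$, $k_b$. You then use continuity of $\xi_p$ and $\xi_R$ across the arrival impulse. Your added remark that $\tilde\Phi_{Rp}=\tilde\Phi_{Rv}=0$ keeps the impulses out of $\xi_R(T)$ is a valid elaboration of what the paper leaves implicit in citing Lemma~\ref{lem:impulse_action}.
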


\begin{proof}
Equation~\eqref{eq:propagated_state} is the position row of the forced response
$\xi(T) = \tilde\Phi\,\xi(t_0^+) + b$ with $\xi(t_0^+)$ given by
Lemma~\ref{lem:impulse_action}, which leaves $\xi_p(t_0)$ and $\xi_R(t_0)$
unaltered and shifts $\xi_v(t_0)$ by $\Delta\xi_v(t_0)$. Setting $\xi_p(T)=0$
and solving for $\Delta\xi_v(t_0)$ requires exactly the invertibility of
$\tilde\Phi_{pv}$, and yields~\eqref{eq:departure_impulse} after cancelling
$\tilde\Phi_{pv}^{-1}\tilde\Phi_{pv}\xi_v(t_0) = \xi_v(t_0)$; uniqueness follows
since the map $\Delta\xi_v(t_0)\mapsto\xi_p(T)$ is affine with invertible linear
part. For~\eqref{eq:arrival_velocity}, the velocity row of the same response is
$\xi_v(T^-) = \tilde\Phi_{vp}\xi_p(t_0) + \tilde\Phi_{vv}[\xi_v(t_0)+\Delta\xi_v(t_0)]
+ \tilde\Phi_{vR}\xi_R(t_0) + b_v$; substituting the bracket from
\eqref{eq:departure_impulse}, which equals
$-\tilde\Phi_{pv}^{-1}[\tilde\Phi_{pp}\xi_p(t_0)+\tilde\Phi_{pR}\xi_R(t_0)+b_p]$,
removes $\xi_v(t_0)$ and collects the coefficients~\eqref{eq:arrival_gains}.
Equation~\eqref{eq:arrival_impulse} is~\eqref{eq:rendezvous_condition} applied at
$T$, using continuity of $\xi_p$ and $\xi_R$ across the second impulse so that
$\xi_p(T^+) = 0$ is preserved.
\end{proof}

Equation~\eqref{eq:departure_impulse} decomposes into three physically distinct
actions: cancel the current velocity error, correct for the initial separation
through $\tilde\Phi_{pp}$, and correct for the attitude error through
$\tilde\Phi_{pR}$. The third term is the contribution absent from all classical
formulations.

\begin{remark}[Feasibility and degenerate transfer times]
\label{rem:feasibility}
Invertibility of $\tilde\Phi_{pv}(T,t_0)$ is the sole feasibility requirement. For the coasting circular case
the singular set is available in closed form: with $\tilde\Phi_{pv}$ reducing to
the classical $\Phi_{rv}$ of Corollary~\ref{cor:hcw_planner}, the out-of-plane
block $\sin(nT)/n$ vanishes at $nT = k\pi$ and the in-plane determinant
$n^{-2}\bigl[8 - 8\cos nT - 3nT\sin nT\bigr]$ vanishes at $nT = 2k\pi$, so within the first orbit the
degenerate horizons are $nT = k\pi$, $k \in \mathbb{Z}^+$---the classical
$\pi$-transfers. For thrusting or eccentric references these roots perturb away
from the closed-form values and must be located numerically. Beyond feasibility,
$\mathrm{cond}(\tilde\Phi_{pv})$ is a quantitative conditioning measure rather
than a binary test: by~\eqref{eq:departure_impulse}
and~\eqref{eq:impulse_norm_bounds}, $\|\delta v_0\|$ grows in proportion to
$\|\tilde\Phi_{pv}^{-1}\|$, so horizons near a degenerate value incur a
propellant penalty well before the solve itself becomes ill-posed. The
proportionality is asymptotic: Section~\ref{sec:num_horizon} recovers it near the
$k = 2$ root but not near $k = 1$ under forcing, where the forced response rather
than the conditioning sets the cost. This motivates the quarter-orbit horizons
used in Sec.~\ref{sec:results}.
\end{remark}

\subsection{Special Cases}
\label{sec:dv_mismatch}

The planner of Proposition~\ref{prop:two_impulse} specializes along two
independent axes---whether the control is matched, and whether the reference
thrusts---with HCW at the intersection.

\begin{corollary}[Matched control]
\label{cor:matched}
If $\tilde\nu \equiv 0$ then $\tilde A = A$ by
Proposition~\ref{prop:forced_LTV}, hence $\tilde\Phi = \Phi$, the STM
of~\eqref{eq:LTV}, and $b \equiv 0$ by~\eqref{eq:forcing_integral}. The
departure impulse~\eqref{eq:departure_impulse} reduces to
\begin{equation}
\Delta\xi_v(t_0) = -\,\xi_v(t_0)
- \Phi_{pv}^{-1}\bigl[\Phi_{pp}\,\xi_p(t_0) + \Phi_{pR}\,\xi_R(t_0)\bigr],
\label{eq:departure_impulse_matched}
\end{equation}
and the residuals $\varepsilon$ of Theorem~\ref{thm:control_mismatch} vanish
identically, so the only propagation error is the gravity remainder bounded in
Theorem~\ref{thm:gravity_linearization}. Moreover $\|\xi_R\|$ is constant along
the transfer, since $\tfrac{d}{dt}\|\xi_R\|^2 = 2\,\xi_R^\top(-\bar\omega\times\xi_R) = 0$
by~\eqref{eq:xi_R_dot}; the bound coefficients $\alpha(\theta)$, $\beta(\theta)$
and the $\theta$-dependence of~\eqref{eq:bound_att} are therefore fixed by
$\xi_R(t_0)$ and hold uniformly on $[t_0,T]$ rather than requiring monitoring.
\end{corollary}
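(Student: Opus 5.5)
The corollary has four claims, which I would verify in the order stated, each by direct substitution into results already established.

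\emph{Step 1: system matrix and forcing.} Setting $\tilde\nu\equiv 0$ in~\eqref{eq:L_def} gives $L(\tilde\nu)=0$. By~\eqref{eq:Atilde_block} this yields $\tilde A(t)=A(t)$; equivalently $a_{\mathrm m}=\bar a$ and $\omega_{\mathrm m}=\bar\omega$, which is the last paragraph of the proof of Proposition~\ref{prop:forced_LTV}. Both $\tilde\Phi$ and $\Phi$ solve the same linear matrix ODE with the same initial value $I$, so uniqueness of solutions gives $\tilde\Phi=\Phi$. The integrand of~\eqref{eq:forcing_integral} vanishes identically, so $b\equiv 0$. Substituting $b_p=0$ and $\tilde\Phi=\Phi$ into~\eqref{eq:departure_impulse} gives~\eqref{eq:departure_impulse_matched} verbatim. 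Invertibility of $\Phi_{pv}$ is inherited from the hypothesis of Proposition~\ref{prop:two_impulse}.

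\emph{Step 2: vanishing residuals.} I would read this off the explicit forms~\eqref{eq:eps_components}. Each residual is linear in $\Delta a$ or $\Delta\omega$: $\varepsilon_a(\xi_R,z)=c(\theta)S^2z$ is linear in $z$, and $\mathcal M(u,\xi_R)\Delta\omega$ is linear in $\Delta\omega$. Hence $\varepsilon\equiv 0$, and~\eqref{eq:dynamics_full} leaves only $\delta_{\mathrm{att}}+\delta_{\mathrm{grav}}$ as dropped terms. Both are bounded by Theorem~\ref{thm:gravity_linearization}. If a minor point needs stating, it is that $\delta_{\mathrm{att}}$ is also gravity-induced, so ``gravity remainder'' refers to both terms of that theorem.

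\emph{Step 3: constancy of $\|\xi_R\|$.} This is the only step that needs care. The identity $\dot\xi_R=-\bar\omega\times\xi_R$ in~\eqref{eq:xi_R_dot} was derived for the linearized system. I would justify it for the true dynamics by going back to~\eqref{eq:loglinear} with $\tilde\nu=0$. There, Lemma~\ref{lem:gravblock} shows the gravity term has zero attitude block. The attitude row of $-\mathrm{ad}_\nu+A_C$ is exactly $-[\bar\omega]_\times\xi_R$, by~\eqref{eq:ad_xi_matrix} and antisymmetry. So the equation $\dot\xi_R=-\bar\omega\times\xi_R$ holds exactly and is not an approximation. Then
\[
\tfrac{d}{dt}\|\xi_R\|^2 = 2\,\xi_R^\top(-\bar\omega\times\xi_R) = 0,
\]
since $\xi_R^\top(\bar\omega\times\xi_R)=0$ by orthogonality of the cross product.

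\emph{Step 4: uniform bound coefficients.} Because $\theta(t)=\theta(t_0)$, the coefficients $\alpha(\theta)$, $\beta(\theta)$ and the $\theta$-dependent factors in~\eqref{eq:bound_att}–\eqref{eq:bound_grav} are constants on $[t_0,T]$. The impulses do not disturb this, since $\xi_R$ is continuous across them by Lemma~\ref{lem:impulse_action}. I expect the only real obstacle to be the exactness argument of Step~3. Everything else is substitution.
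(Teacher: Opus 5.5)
Your proposal is correct and follows the paper's own argument, which is contained in the statement itself: $L(0)=0$ gives $\tilde A=A$, hence $\tilde\Phi=\Phi$ and $b\equiv 0$. The residuals then vanish by~\eqref{eq:eps_components}, and $\|\xi_R\|$ is conserved because $\dot\xi_R=-\bar\omega\times\xi_R$. Your Step~3 usefully makes explicit what the paper leaves implicit when it cites the linearized equation~\eqref{eq:xi_R_dot}. Under matched control the attitude row of~\eqref{eq:loglinear} is exactly $-\bar\omega\times\xi_R$, since the gravity term has no attitude block by Lemma~\ref{lem:gravblock}. The conservation therefore holds for the true log-error, not only for the linear model.
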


\begin{corollary}[Coasting reference]
\label{cor:coasting}
If $a_{\mathrm m} \equiv 0$ then $\tilde\Phi_{pR} = \tilde\Phi_{vR} = 0$ and
$K_R = 0$: the attitude error decouples from translation entirely, and both
log-coordinate impulses become independent of $\xi_R(t_0)$. Note that
$a_{\mathrm m} = \tfrac12(a + \bar a)$, so this occurs when neither vehicle thrusts during the transfer.
\end{corollary}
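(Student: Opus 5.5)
The plan is to work directly from the block structure of $\tilde A(t)$ in~\eqref{eq:Atilde_block} with $a_{\mathrm m}\equiv 0$. The velocity--attitude block $-\skewmat{a_{\mathrm m}}$ then vanishes. Combined with the already-established vanishing of the attitude row in its position and velocity columns, this makes $\tilde A(t)$ block diagonal with respect to the splitting $(\xi_p,\xi_v)\oplus\xi_R$. The translational block is $\begin{bmatrix}-\skewmat{\omega_{\mathrm m}} & I\\ G & -\skewmat{\omega_{\mathrm m}}\end{bmatrix}$ and the attitude block is $-\skewmat{\omega_{\mathrm m}}$.

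Next I will show that the STM inherits this structure. Consider the columns $\tilde\Phi_{pR},\tilde\Phi_{vR}$ of $\dot{\tilde\Phi}=\tilde A\tilde\Phi$. They satisfy
\begin{equation*}
\frac{d}{dt}\begin{bmatrix}\tilde\Phi_{pR}\\ \tilde\Phi_{vR}\end{bmatrix}
=\begin{bmatrix}-\skewmat{\omega_{\mathrm m}} & I\\ G & -\skewmat{\omega_{\mathrm m}}\end{bmatrix}\begin{bmatrix}\tilde\Phi_{pR}\\ \tilde\Phi_{vR}\end{bmatrix}
-\begin{bmatrix}0\\ \skewmat{a_{\mathrm m}}\tilde\Phi_{RR}\end{bmatrix}.
\end{equation*}
The forcing term is zero, and the initial value is $\tilde\Phi_{pR}(t_0,t_0)=\tilde\Phi_{vR}(t_0,t_0)=0$ because $\tilde\Phi(t_0,t_0)=I$. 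The system is therefore linear and homogeneous with zero initial data, so uniqueness of solutions gives $\tilde\Phi_{pR}=\tilde\Phi_{vR}\equiv 0$ on $[t_0,T]$.

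Substituting into~\eqref{eq:arrival_gains} gives $K_R=\tilde\Phi_{vR}-\tilde\Phi_{vv}\tilde\Phi_{pv}^{-1}\tilde\Phi_{pR}=0$. In~\eqref{eq:departure_impulse} the $\xi_R(t_0)$ term drops, so $\Delta\xi_v(t_0)$ depends only on $\xi_p(t_0)$, $\xi_v(t_0)$ and $b_p$. By~\eqref{eq:arrival_velocity} and~\eqref{eq:arrival_impulse}, $\Delta\xi_v(T)=-(K_p\xi_p(t_0)+k_b)$, which also has no $\xi_R(t_0)$ dependence.

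A subtlety is whether $b_p,b_v$ could carry $\xi_R$ dependence. They do not, because $b$ is built from $\tilde\Phi$ and $\tilde\nu$ only. Block diagonality does mean $b_p,b_v$ receive no contribution from $\Delta\omega$ through the attitude channel, but this is not needed for the claim. I will state that independence refers to the log-coordinate impulses; the physical increments~\eqref{eq:physical_dv} still involve $J_\ell^{\SO}(\xi_R)$.

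Finally, I will address the remark on when the hypothesis holds. $a_{\mathrm m}=\tfrac12(a+\bar a)$ vanishes identically if both $a\equiv0$ and $\bar a\equiv0$, i.e.\ neither vehicle thrusts. I expect the main obstacle to be the precision of this remark rather than any calculation. Strictly, $a_{\mathrm m}\equiv0$ also holds for equal and opposite body-frame thrusts, so the corollary's ``this occurs when'' should be read as a sufficient condition. I will note that case explicitly rather than claim necessity.
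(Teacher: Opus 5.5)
Your proof is correct and follows the paper's reasoning. The paper notes that with $a_{\mathrm m}\equiv 0$ the block $-\skewmat{a_{\mathrm m}}$ of $\tilde A$ vanishes and that this sparsity transfers to $\tilde\Phi$; it then reads $K_R=0$ and the $\xi_R(t_0)$-independence of both impulses directly off~\eqref{eq:arrival_gains}, \eqref{eq:departure_impulse} and~\eqref{eq:arrival_impulse}. Your zero-forcing, zero-initial-data uniqueness argument for the $R$-column blocks simply makes the ``sparsity transfers'' step explicit, and your caveats are accurate and consistent with the corollary's restriction to log-coordinate impulses: $a=\bar a=0$ is sufficient rather than necessary, and the physical increments still carry $J_\ell^{\SO}(\xi_R)$.
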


\begin{corollary}[HCW two-impulse planner]
\label{cor:hcw_planner}
Under the assumptions~\eqref{eq:hcw_assumptions} of Theorem~\ref{thm:hcw}, the
planner reduces to the classical HCW two-impulse solution. Writing
$s_\Delta = \sin n\Delta t$, $c_\Delta = \cos n\Delta t$, the relevant block is
\begin{equation}
    \Phi_{pv} = \Phi_{rv} = \frac{1}{n}
    \begin{bmatrix}
        s_\Delta & 2(1-c_\Delta) & 0 \\
        -2(1-c_\Delta) & 4s_\Delta - 3n\Delta t & 0 \\
        0 & 0 & s_\Delta
    \end{bmatrix},
    \label{eq:hcw_Phi_rv}
\end{equation}
the classical position--velocity partition of the HCW state transition matrix,
and~\eqref{eq:departure_impulse_matched} becomes
\begin{equation}
    \Delta\xi_v(t_0) = -\,\dot\xi_p(t_0) - \Phi_{rv}^{-1}\Phi_{rr}\,\xi_p(t_0),
    \label{eq:hcw_departure}
\end{equation}
the textbook two-impulse formula in LVLH coordinates.
\end{corollary}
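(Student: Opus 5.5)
The plan is to chain the earlier specializations: Corollary~\ref{cor:matched}, then Corollary~\ref{cor:coasting}, then the HCW reduction of Theorem~\ref{thm:hcw}.

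\emph{Reduction of the departure formula.} Under~\eqref{eq:hcw_assumptions} we have $\tilde\nu\equiv0$, so Corollary~\ref{cor:matched} gives $\tilde\Phi=\Phi$, $b\equiv0$, and the departure impulse~\eqref{eq:departure_impulse_matched}. Since $\bar a\equiv0$ and $\tilde\nu\equiv0$, we also have $a_{\mathrm m}\equiv0$. Corollary~\ref{cor:coasting} then kills $\Phi_{pR}$, so the attitude term drops; in any case $\xi_R(t_0)=0$ would remove it. Moreover, $\xi_R\equiv0$ by the invariance shown in the proof of Theorem~\ref{thm:hcw}.

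\emph{Identifying the blocks.} The translational subsystem~\eqref{eq:inv_hcw_p}--\eqref{eq:inv_hcw_v} has constant coefficients, since $\Omega$ and $G$ of~\eqref{eq:G_evaluated} are constant. Hence $\Phi_{pp},\Phi_{pv},\Phi_{vp},\Phi_{vv}$ are the blocks of $\exp(A_t\Delta t)$ with
\[
A_t=\begin{bmatrix}-\skewmat{\Omega}&I\\ G&-\skewmat{\Omega}\end{bmatrix}.
\]
This is an STM in the state $(\xi_p,\xi_v)$ rather than the classical $(\xi_p,\dot\xi_p)$. The two are related by the constant similarity $T_c=\begin{bmatrix}I&0\\ \skewmat{\Omega}&I\end{bmatrix}$, because $\xi_v=\dot\xi_p+\Omega\times\xi_p$ by~\eqref{eq:inv_hcw_p}. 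Therefore $\Phi=T_c\Phi^{\mathrm{HCW}}T_c^{-1}$.

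Block multiplication gives
\[
\Phi_{pv}=\Phi_{rv}, \qquad \Phi_{pp}=\Phi_{rr}-\Phi_{rv}\skewmat{\Omega}.
\]
The block $\Phi_{rv}$ is the classical one~\eqref{eq:hcw_Phi_rv}. This follows because Theorem~\ref{thm:hcw} shows that $\xi_p$ obeys exactly~\eqref{eq:hcw_x}--\eqref{eq:hcw_z}, whose solution from $(\xi_p(t_0),\dot\xi_p(t_0))$ is the textbook one. The entries can be quoted or checked by direct substitution.

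\emph{Assembly.} Substituting into~\eqref{eq:departure_impulse_matched} gives
\[
\Delta\xi_v=-\xi_v-\Phi_{rv}^{-1}\Phi_{rr}\xi_p+\skewmat{\Omega}\xi_p .
\]
Using $\xi_v-\Omega\times\xi_p=\dot\xi_p$, this becomes~\eqref{eq:hcw_departure}.

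One point needs attention. $\Delta\xi_v$ is a jump in $\xi_v$, while the classical formula gives a jump in $\dot\xi_p$. These coincide because $\xi_p$ is continuous across an impulse (Lemma~\ref{lem:impulse_action}), so $\Delta\dot\xi_p=\Delta\xi_v$. In addition, at $\theta=0$ the physical map~\eqref{eq:physical_dv} reduces to $\delta v=\bar R\Delta\xi_v$, which is the LVLH velocity change.

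\emph{Expected difficulty.} The main obstacle is bookkeeping the frame difference between $\xi_v$, an inertial velocity difference resolved in the rotating frame, and the LVLH rate $\dot\xi_p$. Without the $T_c$ similarity one would wrongly conclude $\Phi_{pp}=\Phi_{rr}$. I would double-check the sign of $\skewmat{\Omega}$ in $T_c$ against~\eqref{eq:inv_hcw_p}, and verify that the $\skewmat{\Omega}\xi_p$ terms cancel exactly.
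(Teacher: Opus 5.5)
Your proposal is correct and follows essentially the same route as the paper: reduce via Corollaries~\ref{cor:matched} and~\ref{cor:coasting}, relate the $(\xi_p,\xi_v)$ and $(\xi_p,\dot\xi_p)$ transition matrices by the constant similarity $T_c$ to get $\Phi_{pv}=\Phi_{rv}$ and $\Phi_{pp}=\Phi_{rr}-\Phi_{rv}\skewmat{\Omega}$, then absorb the $\skewmat{\Omega}\xi_p$ term using $\xi_v-\Omega\times\xi_p=\dot\xi_p$. Your extra remark that the jump in $\xi_v$ equals the jump in $\dot\xi_p$, because $\xi_p$ is continuous across the impulse, is a small but useful complement to the paper's closing observation that~\eqref{eq:physical_dv} reduces to $\delta v_0=\bar R(t_0)\Delta\xi_v(t_0)$.
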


\begin{proof}
By Corollaries~\ref{cor:matched} and~\ref{cor:coasting} the planner uses $\Phi$
with $\Phi_{pR} = 0$. Theorem~\ref{thm:hcw} identifies $\xi_p$ with the LVLH
relative position and gives $\xi_v = \dot\xi_p + \Omega\times\xi_p$
by~\eqref{eq:inv_hcw_p}, so the two coordinate sets are related by the constant
similarity $T_c = \left[\begin{smallmatrix} I & 0\\ \skewmat{\Omega} & I\end{smallmatrix}\right]$
and $\Phi = T_c\,\Phi^{\mathrm{HCW}}T_c^{-1}$, where $\Phi^{\mathrm{HCW}}$ has
blocks $\Phi_{rr},\Phi_{rv},\Phi_{vr},\Phi_{vv}$ in the classical
$(\xi_p,\dot\xi_p)$ ordering. Since the first block row of $T_c$ is
$[\,I\;\;0\,]$ and the second block column of $T_c^{-1}$ is $[\,0\;\;I\,]^\top$,
the $pv$ block is unchanged, $\Phi_{pv} = \Phi_{rv}$, giving
\eqref{eq:hcw_Phi_rv}, while
$\Phi_{pp} = \Phi_{rr} - \Phi_{rv}\skewmat{\Omega}$. Substituting into
\eqref{eq:departure_impulse_matched},
\begin{equation*}
    \Delta\xi_v(t_0)
    = -\xi_v(t_0) - \bigl[\Phi_{rv}^{-1}\Phi_{rr} - \skewmat{\Omega}\bigr]\xi_p(t_0)
    = -\bigl[\xi_v(t_0) - \Omega\times\xi_p(t_0)\bigr] - \Phi_{rv}^{-1}\Phi_{rr}\,\xi_p(t_0),
\end{equation*}
and the bracket is $\dot\xi_p(t_0)$ by~\eqref{eq:inv_hcw_p}, giving
\eqref{eq:hcw_departure}. Finally $\xi_R \equiv 0$ gives
$J_\ell^{\SO} = I$, so~\eqref{eq:physical_dv} reduces to
$\delta v_0 = \bar R(t_0)\Delta\xi_v(t_0)$ and the log-coordinate impulse is the
LVLH-resolved physical increment.
\end{proof}

\begin{corollary}[Thrusting chief, coasting deputy]
\label{cor:thrusting_chief}
If the chief executes $\bar a(t) \neq 0$ while the deputy coasts between
impulses, with angular velocity matching the chief's, then $\tilde\nu(t) = (0,\,-\bar a(t),\,0)^\top$ and
$a_{\mathrm m} = \tfrac12\bar a$ are both determined by the reference trajectory
alone, so the planner requires no information beyond it. Only $b_v$ is directly
forced in~\eqref{eq:joint_integration}; $b_p$ accumulates from $b_v$ through
$\tilde\Phi_{pv}$ over the transfer and is the term distinguishing
\eqref{eq:departure_impulse} from its matched-control
form~\eqref{eq:departure_impulse_matched} in this scenario.
\end{corollary}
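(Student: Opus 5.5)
The corollary follows by direct substitution into the definitions, so the plan is to read off $\tilde\nu$ and $a_{\mathrm m}$ under the stated hypotheses and then inspect the forcing row of the joint integration \eqref{eq:joint_integration}. First, note that a deputy coasting between impulses has body-frame acceleration $a(t)\equiv 0$ on $(t_0,T)$. The impulses themselves are not inputs in $\nu$: by Lemma~\ref{lem:impulse_action} they act as jumps in $\xi_v$ and are handled separately in Proposition~\ref{prop:two_impulse}. The matched angular velocity gives $\omega=\bar\omega$. From \eqref{eq:tildes} and \eqref{eq:mismatch_components} we then get $\Delta a = -\bar a$ and $\Delta\omega = 0$, so $\tilde\nu = (0,-\bar a,0)^\top$. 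Likewise, the definitions in \eqref{eq:Atilde_block} give $a_{\mathrm m} = \tfrac12\bar a$ and $\omega_{\mathrm m} = \bar\omega$.

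Second, these are functions of $\bar a(t)$ and $\bar\omega(t)$ alone. Together with $G(t)$, which is built from $\bar R,\bar p$, they determine $\tilde A(t)$ entirely from the reference arc. Hence $\tilde\Phi$ and $b$, obtained from \eqref{eq:joint_integration}, require no deputy-specific input profile, in contrast with the general remark preceding Section~\ref{sec:dv_problem}.

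Third, for the structure of $b$, I would write the $b$-equation of \eqref{eq:joint_integration} blockwise using \eqref{eq:Atilde_block}:
\begin{equation*}
\dot b_p = -\skewmat{\bar\omega} b_p + b_v,\qquad
\dot b_v = G b_p - \skewmat{\bar\omega} b_v - \tfrac12\skewmat{\bar a} b_R - \bar a,\qquad
\dot b_R = -\skewmat{\bar\omega} b_R .
\end{equation*}
Since $b_R(t_0)=0$, the last equation gives $b_R\equiv 0$. The only inhomogeneous term is then $-\bar a$ in the $b_v$ row, so $b_v$ alone is directly forced. Next I would rewrite $b_p$ through the integral \eqref{eq:forcing_integral}: $b_p(T,t_0) = -\int_{t_0}^T \tilde\Phi_{pv}(T,\tau)\bar a(\tau)\,d\tau$, using $\tilde\nu_p=\tilde\nu_R=0$. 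This expresses $b_p$ as the accumulation of the velocity forcing through the $pv$ block, which is the claimed mechanism.

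Finally, comparing \eqref{eq:departure_impulse} with \eqref{eq:departure_impulse_matched} shows that the departure impulse differs in the additive term $-\tilde\Phi_{pv}^{-1}b_p$, together with the replacement of $\bar\nu$ by $\nu_{\mathrm m}$ inside $\tilde\Phi$. The latter is not a new term but a re-centering of the system matrix (Remark~\ref{rem:mean_input}). I expect the main obstacle to be interpretive rather than technical: making precise that the impulses are excluded from $\nu$, so that $a\equiv 0$ holds on the open interval. I would handle this by citing Lemma~\ref{lem:impulse_action} and the proof of Proposition~\ref{prop:two_impulse}, where the impulses enter only through $\xi_v(t_0^+)$ and $\xi_v(T^+)$.
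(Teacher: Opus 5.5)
Your proposal is correct and takes the same route the paper implicitly takes: the paper states this corollary without a separate proof, as an immediate consequence of \eqref{eq:tildes}, \eqref{eq:Atilde_block}, and \eqref{eq:joint_integration}, and your blockwise $b$-equation (which gives $b_R\equiv0$) together with the integral form $b_p=-\int_{t_0}^T\tilde\Phi_{pv}(T,\tau)\bar a(\tau)\,d\tau$ supplies exactly the verification it omits. Your caveat is also accurate: $\tilde\Phi$ differs from $\Phi$ through the re-centering $a_{\mathrm m}=\tfrac12\bar a$, so $b_p$ is the new additive term but not the only difference. This is consistent with Table~\ref{tab:fourvariant}, where the forcing accounts for most but not all of the gap.
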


\subsection{Remarks on Scope and Extension}
\label{sec:dv_remarks}

\begin{remark}[Attitude--translation coupling]
\label{rem:coupling}
The $\tilde\Phi_{pR}\xi_R(t_0)$ term in~\eqref{eq:departure_impulse} is the
structural difference between this planner and every formulation derived from an
unforced reference. It is absent whenever $a_{\mathrm m} = 0$
(Corollary~\ref{cor:coasting}), which is why classical planners have never
needed it, and it scales as $\|a_{\mathrm m}\|\Delta t\,\theta$ rather than with
the tidal magnitude $n^2\|\xi_p\|$.

Comparing the two accelerations entering the velocity row
of~\eqref{eq:Atilde_block} gives a criterion, evaluable before any planning, for
whether the additional machinery of this framework is warranted over a classical
planner:
\begin{equation}
    \kappa := \frac{\|a_{\mathrm m}\|\,\theta}{n^2\,\|\xi_p\|} .
    \label{eq:coupling_ratio}
\end{equation}

Since $\lVert a_{\mathrm m} \times \xi_R \rVert \le \lVert a_{\mathrm m} \rVert\,\theta$
and $\lVert G(t)\,\xi_p \rVert \ge n^2 \lVert \xi_p \rVert$, $\kappa$ bounds the ratio
of the two accelerations from above; the coupling vanishes when $\xi_R$ is
parallel to $a_{\mathrm m}$.
For $\kappa \ll 1$ the coupling is a correction to the tidal term and an
HCW-class planner suffices; for $\kappa \gtrsim 1$ the coupling can be the dominant
term in the translational error dynamics and no planner built on an unforced
reference can capture it, at any separation. Both quantities are known in
advance: $a_{\mathrm m}$ from the maneuver plan, $\theta$ and $\|\xi_p\|$ from
navigation. The ratio $\|a_{\mathrm m}\|/n^2$ is a property of the vehicle and
orbit alone, and it is what sets the scale. For a chemical thruster
($10^{-2}$~m/s$^2$) on a $500$~km circular orbit it is $8.2\times10^{3}$~m, so at
a separation of $114$~m the coupling overtakes the tidal term at
$\theta \approx 0.8^\circ$. In geosynchronous orbit the mean motion is smaller by
a factor of fifteen and the ratio grows by more than two orders of magnitude:
even for low-thrust electric propulsion ($10^{-4}$~m/s$^2$) it is
$1.9\times10^{4}$~m, placing the crossover at $\theta \approx 0.35^\circ$.
Geosynchronous station-keeping with a nearby deputy is therefore the regime in
which classical planners fail most severely. Conversely, low-thrust propulsion in
low Earth orbit gives $\|a_{\mathrm m}\|/n^2 \approx 82$~m, for which
$\kappa < 1$ throughout the range of attitude errors where the linearization of
Remark~\ref{rem:validity_control} is valid: there the coupling is real but
subdominant, and a classical planner loses little. Where $\kappa \gtrsim 1$,
neglecting the coupling produces a terminal position error proportional to
$\xi_R(t_0)$.

The criterion states whether the coupling can dominate; it does not by
itself predict the size of the benefit. Wherever the proposed planner's residual
is linear in $\theta$, both the classical and the proposed terminal errors scale
with $\theta$, so their ratio is independent of it and follows
$\kappa/\theta = \|a_{\mathrm m}\|/(n^2\|\xi_p\|)$, as
Section~\ref{sec:num_attitude} confirms.
\end{remark}

\begin{remark}[Impulsive approximation]
\label{rem:impulsive}
Lemma~\ref{lem:impulse_action} models each maneuver as instantaneous. The
requirement is the usual one---burn duration short relative to the transfer
horizon---with the additional condition, specific to thrusting references, that
$\bar a$ and $\bar\omega$ be effectively constant over the burn, so that
$\bar R$ and $\xi_R$ in~\eqref{eq:physical_dv} are well defined at a single
instant. Finite-burn corrections may be applied by the standard practice of
centering the burn arc on $t_0$; Section~\ref{sec:num_envelope} quantifies both
the admissible burn duration and the rotation mechanism identified here.
\end{remark}

\begin{remark}[Nonzero waypoints]
\label{rem:waypoint}
Targeting a standoff point $\rho_{\mathrm{target}} \neq 0$, as in inspection or
station-keeping, replaces $\xi_p(T) = 0$ by
$\xi_p(T) = \xi_p^{\mathrm{target}}$ with
\begin{equation}
    \xi_p^{\mathrm{target}}
    = J_\ell^{-1,\SO}\bigl(\xi_R(T)\bigr)\,\bar R(T)^\top\rho_{\mathrm{target}} ,
    \label{eq:waypoint_target}
\end{equation}
by~\eqref{eq:rho_xi_relation}. This is again exact, and computable in advance,
precisely because the attitude row is autonomous: $\xi_R(T)$ is available from
\eqref{eq:attitude_autonomous} before any impulse is chosen. Equation
\eqref{eq:departure_impulse} then acquires the additional term
$+\,\tilde\Phi_{pv}^{-1}\xi_p^{\mathrm{target}}$, and nothing else in the
derivation changes.
\end{remark}

%

\section{Numerical Results}\label{sec:results}

The experiments in this section validate the three theorems, the planner of
Section~\ref{sec:dv_planning}, and the conditions under which the framework is
intended to be used. Section~\ref{sec:num_setup} states the scenarios.
Sections~\ref{sec:num_hcw}--\ref{sec:num_gravity} follow the development in
order: exact recovery of the classical equations, isolation of each physical
mechanism, the attitude scaling and the criterion of
Remark~\ref{rem:coupling}, horizon conditioning and computational cost,
mismatched control, and the gravity bounds. Section~\ref{sec:num_envelope}
closes with the applicability envelope and the effect of replanning.

\subsection{Setup}\label{sec:num_setup}

All cases share a single nominal initial relative state,
\begin{equation}
    \xi_p(0) = [100,\,50,\,20]^\top~\mathrm{m},
    \qquad
    \xi_v(0) = [0.1,\,-0.05,\,0]^\top~\mathrm{m/s},
    \qquad
    \|\xi_R(0)\| = 5^\circ \ \text{about} \ \tfrac{1}{\sqrt3}[1,1,1]^\top .
    \label{eq:nominal_state}
\end{equation}
The separation $\|\xi_p(0)\| = 113.6$~m and closing rate
$\|\xi_v(0)\| = 0.112$~m/s place the transfer in the close-range phase of a
rendezvous, inside the 500~km proximity-operations regime
of~\cite{petersen2024} and at the scale at which a two-impulse plan is executed
rather than an approach corridor negotiated. The attitude error is the relative
orientation between the two body frames, not the pointing error of either
vehicle: a deputy holding its docking port or approach sensor on the chief while
the chief holds a local-vertical local-horizontal attitude is separated from it
by a finite rotation regardless of how well each vehicle knows its own attitude,
and $5^\circ$ is a modest value for that separation.

No single initial condition is representative, and each choice
in~\eqref{eq:nominal_state} that materially affects the results is accompanied
below by a sweep or an ensemble over that choice. The attitude magnitude is
swept from $0$ to $20^\circ$ in Section~\ref{sec:num_attitude}; the attitude
axis over a $200$-point lattice on the sphere and the thrust direction over a
$100$-point lattice in Appendix~\ref{app:axis}; the separation from $10$~m to
$1000$~km in Section~\ref{sec:num_gravity}; and all of them jointly, together
with the navigation and burn-duration perturbations, in the ensemble of
Section~\ref{sec:num_envelope}, where the nominal state falls at the
thirty-third percentile of terminal error. Two consequences of the nominal
choice are worth stating in advance. The position error is dominantly radial in
the local-vertical local-horizontal frame, which is the configuration in which
the remainder bound of Lemma~\ref{lem:gravity_remainder} is nearly attained
(Section~\ref{sec:num_gravity}); and the attitude axis $[1,1,1]^\top/\sqrt3$ is
a favorable draw, at the thirteenth percentile of the lattice, so
Table~\ref{tab:mechanism} carries a median-axis row alongside it.

Body-frame thrust is directed along $\hat b_1$, radial at $t_0$ in the
local-vertical local-horizontal frame, except in the mismatched-control cases of
Section~\ref{sec:num_mismatch}, where it is along-track. Nonlinear propagation
uses full two-body Newtonian gravity with integration tolerances
$\mathrm{rtol} = 10^{-12}$, $\mathrm{atol} = 10^{-14}$. Deputy states are
initialized from the reference through the $\SE$ exponential map, so that all
comparators begin from the same physical state and differ only in the applied
$\Delta v$. Three comparators are used throughout: the proposed planner, which
integrates the full $9\times9$ system matrix including the coupling blocks; a
translational-only linear time-varying planner, which zeroes $\tilde\Phi_{pR}$
and $\tilde\Phi_{vR}$; and the classical Hill--Clohessy--Wiltshire planner. On
elliptical references the Yamanaka--Ankersen state transition
matrix~\cite{yamanaka2002} is added as a fourth comparator, so that the cost of
assuming a circular orbit is separated from the value of the present framework.

Table~\ref{tab:scenarios} lists the reference scenarios. The final column gives
the criterion $\kappa$ of~\eqref{eq:coupling_ratio} evaluated at the initial
state, so that the regime of each scenario is visible before any result is
presented. The scenarios span nearly three orders of magnitude in $\kappa$ and
bracket the crossover $\kappa = 1$.

\begin{table}[t]
\caption{Reference scenarios. $\kappa$ is evaluated at
         $\|\xi_p(t_0)\| = 113.6$~m and $\theta(t_0) = 5^\circ$.}
\label{tab:scenarios}
\centering
\begin{tabular}{cllcccc}
\hline\hline
ID & Orbit & $\|a_{\mathrm m}\|$, m/s$^2$ & $n$, rad/s
   & $\|a_{\mathrm m}\|/n^2$, m & $T$ & $\kappa$ \\
\hline
S1 & LEO 500 km circular & $10^{-2}$ & $1.107\times10^{-3}$ & $8.2\times10^{3}$ & 23.6 min & $6.25$ \\
S2 & LEO 500 km circular & $10^{-4}$ & $1.107\times10^{-3}$ & $82$              & 23.6 min & $0.063$ \\
S3 & GEO circular        & $10^{-4}$ & $7.292\times10^{-5}$ & $1.9\times10^{4}$ & 90 min   & $14.4$ \\
S4 & GEO circular        & $10^{-5}$ & $7.292\times10^{-5}$ & $1.9\times10^{3}$ & 90 min   & $1.44$ \\
S5 & Molniya, $e = 0.74$ & $10^{-3}$ & ---                  & ---               & 179.1 min & $36.1$ \\
S6 & LEO 500 km circular & $0$       & $1.107\times10^{-3}$ & $0$               & 23.6 min & $0$ \\
S7 & Molniya, $e = 0.74$ & $0$       & ---                  & ---               & 179.1 min & $0$ \\
\hline\hline
\end{tabular}
\end{table}

Circular cases use quarter-orbit horizons to avoid the $\pi$-transfer
singularity~\cite{prussing1969}. The geosynchronous horizon is shortened to
90~min ($nT \approx 0.39$), which Section~\ref{sec:num_horizon} shows is better
conditioned than the quarter-orbit value.

\subsection{Exact Recovery of the Classical Equations}\label{sec:num_hcw}

Under the assumptions~\eqref{eq:hcw_assumptions} of Theorem~\ref{thm:hcw} the
system matrix $A(t)$ of~\eqref{eq:A_matrix} is constant: $\bar\omega = \Omega$,
$\bar a \equiv 0$, and the body-frame tidal tensor~\eqref{eq:G_evaluated} is
fixed in the local-vertical local-horizontal frame. The state transition matrix
is therefore $\Phi = \Exp(At)$, and Theorem~\ref{thm:hcw} is an algebraic
identity between $\Exp(At)$ and the classical closed form rather than a
statement about convergence.

Evaluated blockwise against the textbook Hill--Clohessy--Wiltshire transition
matrix~\cite{schaub_junkins} under the similarity $T_c$ of
Corollary~\ref{cor:hcw_planner}, the relative error ranges from
$1.5\times10^{-16}$ to $1.3\times10^{-15}$ over $nT \in (0,2\pi]$, which is unit
roundoff. The planner identity of Corollary~\ref{cor:hcw_planner} is recovered at
the same level: the departure impulse~\eqref{eq:departure_impulse_matched} and
the textbook two-impulse formula~\eqref{eq:hcw_departure} agree to
$O(10^{-16})$ relative. The invariance of the zero-attitude-error set used in the
proof of Theorem~\ref{thm:hcw}, and the constancy of $\|\xi_R\|$ asserted in
Corollary~\ref{cor:matched}, hold to the same precision; across every
matched-control case in this section $\|\xi_R(t)\|$ departs from its initial
value by no more than $6\times10^{-14}$~rad.

Propagating instead from a numerically integrated reference introduces an error
floor of $8.7\times10^{-13}$ that tracks the reference tolerance rather than the
identity, which quantifies how accurately the reference arc must be known before
state transition matrix error becomes visible.

\subsection{Mechanism Isolation}\label{sec:num_mechanism}

Table~\ref{tab:mechanism} reports terminal position error
$\|\bar p(T) - p(T)\|$ from nonlinear ground truth. Rows~A through~F each add
exactly one physical effect relative to the row above; the middle group places
the same measurement at three further values of $\kappa$.

\begin{table}[t]
\caption{Terminal position error from nonlinear ground truth, m. The
         Yamanaka--Ankersen comparator applies to elliptical rows only.}
\label{tab:mechanism}
\centering
\begin{tabular}{cllcrrrr}
\hline\hline
Row & Case & $\|\xi_R\|$ & $\kappa$ & Proposed & LTV & YA & HCW \\
\hline
A & S6, circular coasting   & $0^\circ$ & $0$     & $0.0022$ & $0.0022$ & ---      & $0.0022$ \\
B & \quad $+$ thrust (S1)   & $0^\circ$ & $0$     & $0.0022$ & $0.0022$ & ---      & $0.0473$ \\
C & \quad $+$ attitude (S1) & $5^\circ$ & $6.25$  & $9.10$   & $508.8$  & ---      & $508.8$ \\
  & \quad\quad median axis  & $5^\circ$ & $6.25$  & $23.2$   & $845.4$  & ---      & $845.3$ \\
\hline
  & S2, low thrust          & $5^\circ$ & $0.063$ & $5.62$   & $10.6$   & ---      & $10.6$ \\
  & S4, at the crossover    & $5^\circ$ & $1.44$  & $0.530$  & $10.1$   & ---      & $10.1$ \\
  & S3, GEO station-keeping & $5^\circ$ & $14.4$  & $0.522$  & $95.8$   & ---      & $95.8$ \\
\hline
D & S7, $+$ eccentricity    & $0^\circ$ & $0$     & $0.0174$ & $0.0174$ & $0.0177$ & $2765$ \\
E & \quad $+$ thrust (S5)   & $0^\circ$ & $0$     & $0.0179$ & $0.0181$ & $0.124$  & $2766$ \\
F & \quad $+$ attitude (S5) & $5^\circ$ & $36.1$  & $104.5$  & $4523$   & $4523$   & $7096$ \\
\hline\hline
\end{tabular}
\end{table}

Four observations follow.

First, thrust alone cannot degrade planning accuracy at zero attitude error.
With $\tilde\nu \equiv 0$ and $\xi_R \equiv 0$ the coupling block
$-\skewmat{a_{\mathrm m}}$ of~\eqref{eq:Atilde_block} multiplies zero, so
body-frame thrust reaches the translational dynamics through no channel except
the reference ephemeris entering $G(t)$. This is the transition from Row~A to
Row~B, in which the proposed and translational-only planners are unchanged at
$2.2$~mm and the classical planner accumulates only $47$~mm from the variation of
$G(t)$. It is also why the Yamanaka--Ankersen comparator degrades so little in
Row~E, from $17.7$~mm to $0.124$~m: the thrusting reference departs $46.6$~km
from the Keplerian arc that the closed form presumes, and the residual error is
entirely that ephemeris mismatch rather than a failure of the relative dynamics.
A naive estimate of the neglected thrust displacement,
$\tfrac12\|\bar a\|T^2 = 58$~km, is not the relevant quantity.

Second, once the attitude error is nonzero the coupling term is the only channel
by which thrust reaches translation, and it dominates. In Row~C the
translational-only and classical planners are indistinguishable at four digits,
since on a circular reference $G(t)$ is constant and the coupling is the sole
difference between them. The proposed planner corrects $98.2\%$ of the terminal
position error. Row~F is the strongest case: the Yamanaka--Ankersen planner,
which handles arbitrary eccentricity exactly under coasting and matches the
proposed method to within $2\%$ in Row~D, fails identically to the
circular-orbit model at $4523$~m once attitude error meets thrust, because
neither formulation contains a coupling channel at all. The gap that this
framework closes is therefore not eccentricity, not time-varying gravity, and
not any refinement that the relative-motion literature has pursued.

Third, the terminal velocity condition is met alongside the position condition.
Both parts of~\eqref{eq:rendezvous_condition} are enforced by the planner, and
the velocity improvement exceeds the position improvement in every case. In
Row~C the terminal velocity error $\|v(T^+)-\bar v(T)\|$ is
$1.18\times10^{-2}$~m/s against $0.833$~m/s for the classical planner, a factor
of $71$; on S3 it is $1.40\times10^{-4}$ against $3.40\times10^{-2}$~m/s, a
factor of $244$; and in Row~F, $1.54\times10^{-2}$ against $0.895$~m/s. At zero
attitude error all comparators meet the velocity condition to the integrator
floor of $2.4\times10^{-6}$~m/s, save the classical planner on the thrusting
circular reference, which reaches $9.0\times10^{-5}$~m/s.

Fourth, the size of the improvement follows the criterion of
Remark~\ref{rem:coupling} where the residual is linear in the attitude error.
The two geosynchronous rows differ by a factor of ten in $\kappa$ and by a factor
of $9.6$ in the ratio of classical to proposed error, from $19.1$ at
$\kappa = 1.44$ to $184$ at $\kappa = 14.4$. On the low-Earth-orbit scenarios the
same comparison is sublinear, $1.89$ at $\kappa = 0.063$ against $55.9$ at
$\kappa = 6.25$, because the proposed planner's own residual there is driven by a
mid-course excursion rather than by the attitude-induced gravity term;
Section~\ref{sec:num_attitude} separates the two regimes. The S2 row quantifies
the concession of Remark~\ref{rem:coupling}: at $\kappa \ll 1$ the coupling is a
correction to the tidal term, and a classical planner loses a factor of under
two.

Retaining the left Jacobian in the impulse map of
Lemma~\ref{lem:impulse_action} is required for the same reason that the coupling
block is. The naive identification $\delta v = \bar R\,\Delta\xi_v$ misdirects
every impulse by exactly $\theta/2$, an identity rather than a leading-order
result, and running the planner with it raises the Row~C terminal error from
$9.10$ to $14.66$~m. Omitting the Jacobian therefore roughly doubles the terminal
error, contributing on the same order as every linearization residual bounded in
this paper combined.

Figure~\ref{fig:separation} shows the inertial separation for three
representative cases of Table~\ref{tab:mechanism}.

\begin{figure}[t]
    \centering
    \includegraphics[width=\textwidth]{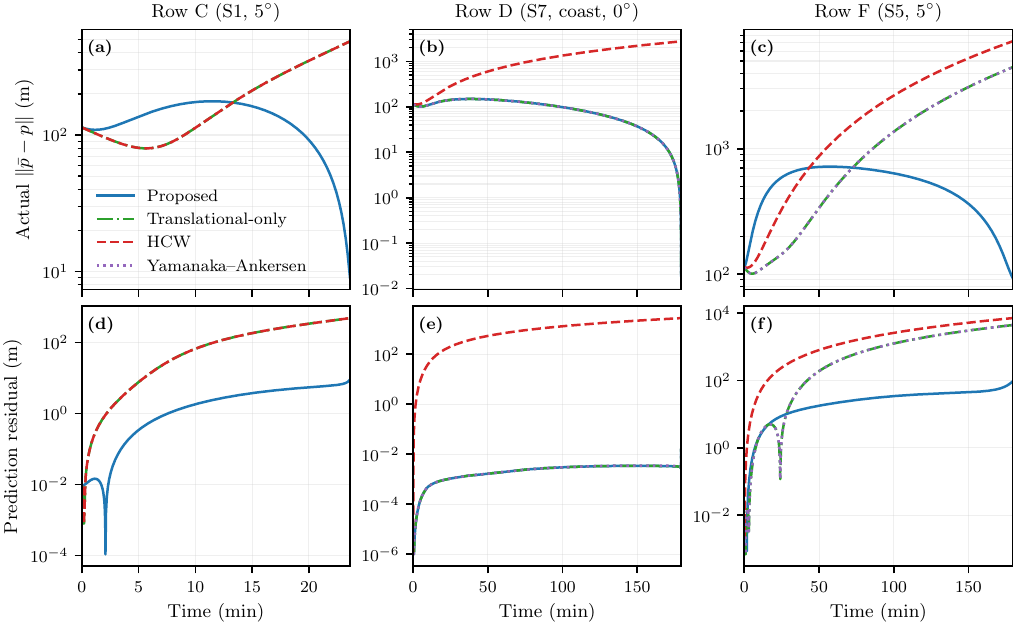}
    \caption{Inertial separation for three representative cases of
             Table~\ref{tab:mechanism}.}
    \label{fig:separation}
\end{figure}

\subsection{Attitude Scaling and the Coupling Criterion}\label{sec:num_attitude}

Figure~\ref{fig:attitude_scaling} sweeps $\|\xi_R(0)\|$ from $0$ to $20^\circ$ at
fixed thrust. The classical and translational-only errors grow linearly in
$\|\xi_R\|$ on every scenario, confirming that the coupling term scales as
$\|a_{\mathrm m}\|\,\|\xi_R\|\,T^2$. The proposed planner's residual exhibits two
distinct regimes.

At high thrust the residual scales as $\theta^{1.94}$ on S1 and $\theta^{1.38}$
on S5. At low thrust it scales as $\theta^{1.07}$ on S3 and $\theta^{0.99}$ on
S4, and is independent of thrust magnitude to within $1\%$ across a tenfold
change. The linear regime is the expected one. Under matched control the coupling
term $-\skewmat{a_{\mathrm m}}\xi_R$ is retained exactly and contributes no
residual, so the only first-order-in-$\theta$ error channel is the
attitude-induced gravity term $\delta_{\mathrm{att}}$ of
Theorem~\ref{thm:gravity_linearization}, which is $O(\theta)$
by~\eqref{eq:bound_att}.

The superlinear regime arises indirectly. A larger attitude error produces a
larger attitude contribution to the departure
impulse~\eqref{eq:departure_impulse}, hence a larger mid-course excursion, hence
a larger second-order gravity remainder~\eqref{eq:gravity_remainder}. Propagating
$\delta_{\mathrm{att}}$ through $\tilde\Phi_{pv}$ to a predicted terminal
contribution recovers exponents of $1.88$ on S1 and $1.42$ on S5, matching the
measured $1.94$ and $1.38$ to within $13\%$, whereas the unweighted integral
$\int\|\delta_{\mathrm{att}}\|\,dt$ does not. The attitude-induced gravity term
is therefore not a bound on a negligible quantity: at low thrust it sets the
accuracy floor outright, and at high thrust it drives the excursion that sets it.

\begin{figure}[t]
    \centering
    \includegraphics[width=\textwidth]{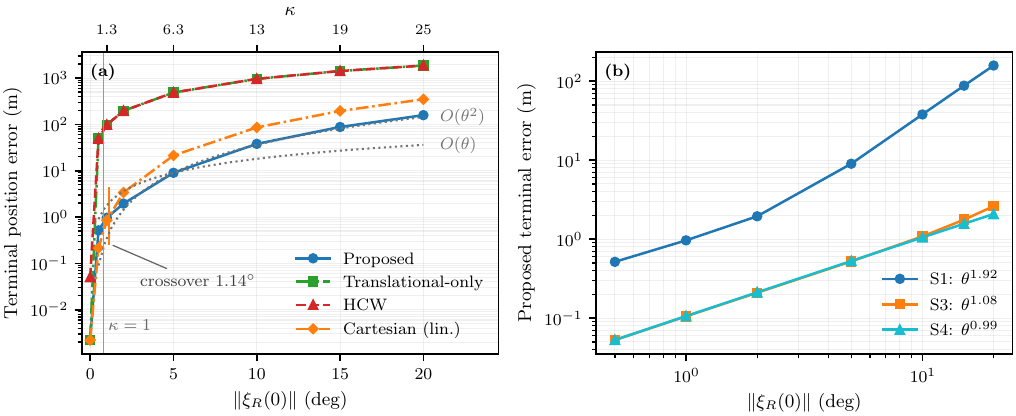}
    \caption{Terminal error versus attitude-error magnitude (left) and the
             proposed planner's residual on three scenarios (right).}
    \label{fig:attitude_scaling}
\end{figure}

The coupling correction exceeds $92\%$ at every tested attitude error on every
scenario, reaching $99.0\%$ below $2^\circ$ on S1 and remaining between $99.3\%$
and $99.5\%$ on S3 independently of $\theta$. The decline on S1 past $5^\circ$,
to $96.2\%$ at $10^\circ$ and $92.1\%$ at $20^\circ$, is the excursion growth
described above and not a failure of the coupling linearization. The validated
envelope therefore extends to $20^\circ$, well beyond the few degrees
anticipated in Remark~\ref{rem:validity_control}.

The crossover attitude at which the coupling term overtakes the tidal term,
$\kappa = 1$ in~\eqref{eq:coupling_ratio}, is measured at $0.80^\circ$ for the
chemical-thrust low-Earth-orbit case and $0.34^\circ$ for the geosynchronous
electric-propulsion case, confirming the values stated in
Remark~\ref{rem:coupling}. The criterion states whether the coupling can dominate. It does not by itself predict the magnitude of the improvement, since
the ratio of classical to proposed terminal error is independent of $\theta$:
both errors scale with it, so the attitude dependence cancels and the ratio
follows $\kappa/\theta = \|a_{\mathrm m}\|/(n^2\|\xi_p\|)$ wherever the residual
is linear.

\subsection{Conditioning and Computational Cost}\label{sec:num_horizon}

Because $\tilde\Phi(t,t_0)$ for every candidate horizon is the same trajectory,
the invertibility and conditioning of $\tilde\Phi_{pv}$ over the whole range of
transfer times follow from a single forward integration
of~\eqref{eq:joint_integration} with dense output.
Figure~\ref{fig:conditioning} shows $\mathrm{cond}(\tilde\Phi_{pv})$ over one
orbital period for a coasting and a thrusting circular reference.

\begin{figure}[t]
    \centering
    \includegraphics[width=\textwidth]{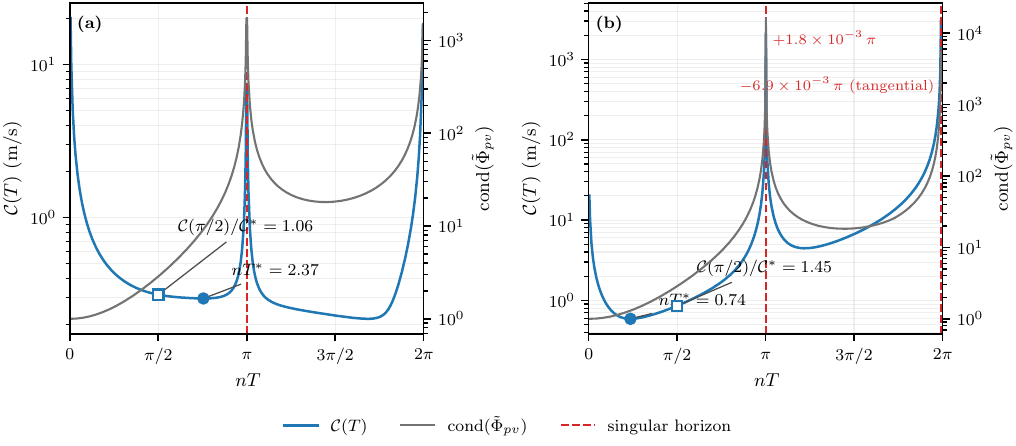}
    \caption{Conditioning of $\tilde\Phi_{pv}$ over one orbital period, with the
             degenerate horizons marked.}
    \label{fig:conditioning}
\end{figure}

The degenerate horizons of Remark~\ref{rem:feasibility} appear at $nT = k\pi$ for
the coasting reference, recovered to $1.4\times10^{-10}\pi$. Under thrust they
perturb by $+1.8\times10^{-3}\pi$ and $-6.9\times10^{-3}\pi$ on S1 and by
$+6.7\times10^{-4}\pi$ and $-2.6\times10^{-3}\pi$ on S3, and the offsets follow
the closed form $0.95\times\tfrac12\|a_{\mathrm m}\|T^2/r$ at every thrusting
root, with measured ratios between $0.93$ and $0.95$. This is the thrust
displacement over the horizon expressed as a fraction of the orbital radius, and
it supplies a starting estimate where Remark~\ref{rem:feasibility} otherwise
calls for a numerical search. The $k = 2$ root is tangential, so a bracket based
on a sign change of the determinant does not locate it.

At geosynchronous altitude the 90-minute horizon used here is better conditioned
than the six-hour quarter orbit, $\mathrm{cond}(\tilde\Phi_{pv})$ of $1.1$
against $2.9$, so the shortened horizon requires no compromise.

Accuracy is limited by the linearization rather than by integration. Terminal
error is unchanged from a step size of $0.1$~s to $60$~s and degrades by $2\%$ at
$300$~s, which is five fixed-step fourth-order Runge--Kutta steps over the
transfer; the geosynchronous case requires three. A portable statement of the
criterion is to integrate until the state transition matrix error contributes
less than $10\%$ of the linearization residual, which corresponds to two to three
steps per radian of $nT$.

The joint integration~\eqref{eq:joint_integration} requires $63$ states rather
than $81$, since the attitude row of $\tilde A$ vanishes in its position and
velocity block columns and hence $\tilde\Phi_{Rp} = \tilde\Phi_{Rv} = 0$
identically by~\eqref{eq:attitude_autonomous}. At the accuracy knee this is
$38.9$~kflop. Replanning from an updated navigation solution reuses the cached
matrices through
$\tilde\Phi(T,\tau) = \tilde\Phi(T,t_0)\,\tilde\Phi(\tau,t_0)^{-1}$, verified
against direct reintegration to $2.1\times10^{-10}$, and costs $342$~flops. Both
figures are small in absolute terms at any plausible planning cadence, and the
closed-form comparators, at $308$~flops for the classical planner, purchase their
speed with the errors of Table~\ref{tab:mechanism}. A nonlinear single-shooting
solve on the same problem converges to the integrator noise floor in $22$
simulations, roughly $2300$ times the cost of the cached solve.

\subsection{Mismatched Control}\label{sec:num_mismatch}

The experiments above use matched control, for which the residuals of
Theorem~\ref{thm:control_mismatch} vanish identically. This section relaxes that
assumption in two configurations: that of Corollary~\ref{cor:thrusting_chief},
in which the chief thrusts while the deputy coasts between impulses, so that
$\tilde\nu(t) = (0,-\bar a(t),0)^\top$ and $a_{\mathrm m} = \tfrac12\bar a$ are
determined by the reference alone; and a reorientation case in which the deputy
slews $5^\circ$ over $60$~s with matched acceleration.

On the geosynchronous scenario the forced planner of
Proposition~\ref{prop:two_impulse} achieves $1.10$~m terminal error with a peak
excursion of $359$~m, against $1436$~m for the matched-control
form~\eqref{eq:departure_impulse_matched} applied naively. The low-Earth-orbit
scenario is an envelope-stressing case rather than a proximity operation: chasing
a chief accelerating at $10^{-2}$~m/s$^2$ over a quarter orbit carries the deputy
$3.1$~km from the reference.

Table~\ref{tab:fourvariant} separates the two ways in which the forced planner
differs from the matched-control form: the system matrix $\tilde A$
of~\eqref{eq:Atilde_block} in place of $A$, and the presence of the forcing
$b$ of~\eqref{eq:forcing_integral}.

\begin{table}[t]
\caption{Contributions of the re-centered system matrix and of the forcing term,
         chief thrusting and deputy coasting on S1.}
\label{tab:fourvariant}
\centering
\begin{tabular}{cllr}
\hline\hline
Variant & System matrix & Forcing & Terminal error, m \\
\hline
P1 & $A$        & none & $9892$ \\
P2 & $\tilde A$ & none & $9743$ \\
P3 & $A$        & $b$  & $301.5$ \\
P4 & $\tilde A$ & $b$  & $198.0$ \\
\hline\hline
\end{tabular}
\end{table}

The forcing accounts for $9603$~m of the $9892$~m error of the naive variant, and
the re-centering of Remark~\ref{rem:mean_input} accounts for the remainder,
$149$~m without forcing and $104$~m with it. The re-centering is visible directly
in the coupling block: on this scenario
$\|\tilde\Phi_{pR}\|/\|\Phi_{pR}\| = 0.5000$, since
$a_{\mathrm m} = \tfrac12\bar a$ halves the attitude contribution. For a
transient reorientation the re-centering is immaterial; for sustained input
mismatch it is not.

The residuals of Theorem~\ref{thm:control_mismatch} are evaluated against
\eqref{eq:epsp_bound}--\eqref{eq:epsR_bound} at every timestep in
Table~\ref{tab:residuals}. All bounds hold.

\begin{table}[t]
\caption{Residual bounds of Theorem~\ref{thm:control_mismatch} under mismatched
         control; ratios are pointwise maxima over the transfer.}
\label{tab:residuals}
\centering
\begin{tabular}{clcccccc}
\hline\hline
& & \multicolumn{3}{c}{Ratio to bound} & \multicolumn{2}{c}{$t^\star/T$} & \\
\cline{3-5}\cline{6-7}
Case & Mismatch & $\varepsilon_p$ & $\varepsilon_v$ & $\varepsilon_R$
     & $\varepsilon_v$ & $\varepsilon_R$ & $\|\xi_R\|$ range \\
\hline
S1 & $\Delta a$              & ---      & $0.99999$ & ---      & $0.499$ & ---     & $5.00^\circ$ \\
S1 & $\Delta\omega$          & $0.6320$ & $0.7920$  & $0.9999$ & $0.021$ & $0.025$ & $4.04$--$5.00^\circ$ \\
S1 & both                    & $0.6443$ & $0.8825$  & $0.9999$ & $1.000$ & $0.025$ & $4.04$--$5.00^\circ$ \\
S3 & $\Delta a$              & ---      & $0.9502$  & ---      & $1.000$ & ---     & $5.00^\circ$ \\
\hline\hline
\multicolumn{8}{l}{\footnotesize $^a$Dashes denote blocks whose bound
  in~\eqref{eq:epsp_bound}--\eqref{eq:epsR_bound} is identically zero.}
\end{tabular}
\end{table}

Two of the bounds are attained rather than merely approached, and in both cases
at an interior instant of the transfer. Under matched angular velocity $\xi_R$
precesses at the reference rate by~\eqref{eq:xi_R_dot}, sweeping through
perpendicularity to the body-fixed acceleration mismatch, which is the
configuration in which Remark~\ref{rem:sharpness_control} predicts equality
in~\eqref{eq:epsv_bound}; the ratio reaches $0.99999$ at the midpoint of the
transfer. In the slew case the same mechanism carries $\xi_R$ perpendicular to
the slew axis $36$~s after departure, where the $\varepsilon_R$ ratio attains
unity analytically and the sampled grid records $0.9999$. The sharpness asserted
in Remark~\ref{rem:sharpness_control} is therefore realized along ordinary
transfers and not only in principle. The $\varepsilon_p$ bound does not saturate,
since equality in~\eqref{eq:epsp_bound} requires the additional coincidence that
$\xi_p$ be parallel to $\xi_R$ and $\Delta\omega$ perpendicular to it at the same instant.

The slew cases also show why the bound coefficients~\eqref{eq:alpha_beta_def}
must be evaluated along the trajectory when the control is mismatched:
$\|\xi_R\|$ falls from $5.00^\circ$ to $4.04^\circ$ over the transfer, so the
constancy established in Corollary~\ref{cor:matched} does not apply. Sustained
angular-velocity mismatch is admissible as a maneuver rather than as a steady
state. A mismatch of $1.45\times10^{-3}$~rad/s drives the relative rotation angle
to $\pi$ within $81$~min, at which point the logarithmic coordinate saturates and
the regime of Remark~\ref{rem:validity_control} is left.

\subsection{Gravity Linearization Bounds}\label{sec:num_gravity}

Table~\ref{tab:gravity} evaluates the attitude-induced error
$\|\delta_{\mathrm{att}}(t)\|$ and the higher-order gravity error
$\|\delta_{\mathrm{grav}}(t)\|$ at every timestep against the analytic bounds
of~\eqref{eq:bound_att}--\eqref{eq:bound_grav}, and reports each as a fraction of
the retained tidal term $\|G(t)\xi_p\|$.

\begin{table}[t]
\caption{Bound validation for Theorem~\ref{thm:gravity_linearization}; all cases
         at $\|\xi_R(0)\| = 5^\circ$. Every column is a transfer maximum.}
\label{tab:gravity}
\centering
\begin{tabular}{lrrccc}
\hline\hline
& \multicolumn{2}{c}{Fraction of $\|G\xi_p\|$}
& \multicolumn{2}{c}{Ratio to bound} & \\
\cline{2-3}\cline{4-5}
Case & $\delta_{\mathrm{att}}$ & $\delta_{\mathrm{grav}}$
     & $\delta_{\mathrm{att}}$ & $\delta_{\mathrm{grav}}$ & $d/r$ \\
\hline
S1, circular chemical & $5.67\times10^{-2}$ & $1.51\times10^{-4}$ & $0.9802$ & $0.9995$ & $1.2\times10^{-4}$ \\
S2, circular electric & $6.50\times10^{-2}$ & $7.83\times10^{-5}$ & $0.9787$ & $0.9763$ & $7.7\times10^{-5}$ \\
S3, geosynchronous    & $5.27\times10^{-2}$ & $1.91\times10^{-5}$ & $0.9824$ & $0.9841$ & $1.6\times10^{-5}$ \\
S5, elliptical        & $7.96\times10^{-2}$ & $1.72\times10^{-4}$ & $0.9783$ & $0.9768$ & $1.7\times10^{-4}$ \\
S1, chief thrusts     & $7.58\times10^{-2}$ & $6.67\times10^{-4}$ & $0.9834$ & $0.9922$ & $4.4\times10^{-4}$ \\
S3, chief thrusts     & $7.38\times10^{-2}$ & $1.22\times10^{-5}$ & $0.9833$ & $0.9843$ & $8.5\times10^{-6}$ \\
S1, $\|\xi_p(0)\| = 100$~km  & $6.06\times10^{-2}$ & $5.44\times10^{-2}$ & $0.9769$ & $0.9761$ & $0.053$ \\
S1, $\|\xi_p(0)\| = 1000$~km & $5.79\times10^{-2}$ & $0.9654$            & $0.9774$ & $0.9727$ & $0.652$ \\
\hline\hline
\end{tabular}
\end{table}

All bounds are satisfied in every case, with maximum ratios between $0.973$ and
$0.9995$, so the bounds are uniformly tight rather than tight in some regimes and
loose in others. The near-saturation of the $\delta_{\mathrm{grav}}$ bound on S1
is the radial worst case of Lemma~\ref{lem:gravity_remainder} genuinely
realized: the relative position swings to within $1.02^\circ$ of radial at the
instant the ratio peaks, where the directional factor $5c_v^4 - 2c_v^2 + 1$
of~\eqref{eq:hessnormsq} reaches $3.997$ of its maximum $4$. On the remaining
cases the closest approach to radial is between $5^\circ$ and $9^\circ$ and the
ratio settles near $0.976$.

The attitude-induced fraction is nearly constant across scenarios, between
$5.27\times10^{-2}$ and $7.96\times10^{-2}$. This is structural: by
\eqref{eq:delta_att} the ratio $\|\Psi\xi_p\|/\|G(t)\xi_p\|$ equals
$c\,\theta + O(\theta^3)$, with $c$ depending only on the attitude axis and on
the direction of $\xi_p$ relative to the tidal axes, independent of orbital
radius, separation, and thrust, and bounded by $c \le 3/2$. This gives the design
rule
\begin{equation}
    \|\delta_{\mathrm{att}}\| \;\le\; \tfrac{3}{2}\,\theta\,\|G(t)\,\xi_p\| ,
    \label{eq:delta_att_rule}
\end{equation}
so the attitude-induced gravity error is a fixed fraction of the tidal term set
by the attitude error alone, approximately $6\%$ at $5^\circ$ and under $1\%$
below $1^\circ$.

Figure~\ref{fig:gravity_bounds} plots the time histories of both remainders
against their respective bounds. The two remainders exchange dominance with
separation. At proximity-operations
scales $\|\delta_{\mathrm{grav}}\|$ is two to four orders below
$\|\delta_{\mathrm{att}}\|$; the two are equal at $110$~km on the
low-Earth-orbit scenario at $5^\circ$, and the crossover scales linearly in
$\theta$. Below that separation the attitude error rather than the separation is
the dominant linearization error source, which is the entire regime of interest.
At $\|\xi_p(0)\| = 1000$~km the transfer reaches $d/r = 0.65$, where the bound of
Lemma~\ref{lem:gravity_remainder} remains valid but has grown to $5.5$ times the
tidal term against an actual remainder of $0.97$ times it. The bound holds and is
satisfied at only $18\%$; it is the bound rather than the linearization that
ceases to be informative as $d$ approaches $r$.

\begin{figure}[t]
    \centering
    \includegraphics[width=\textwidth]{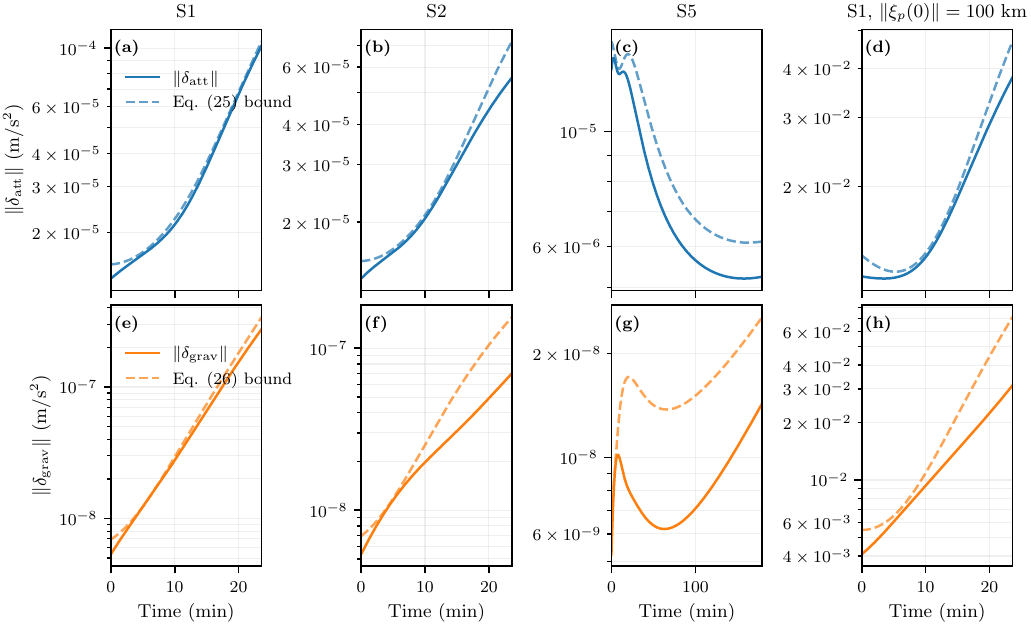}
    \caption{True linearization errors versus the analytic bounds
             of Theorem~\ref{thm:gravity_linearization}.}
    \label{fig:gravity_bounds}
\end{figure}

\subsection{Applicability Envelope and Replanning}\label{sec:num_envelope}

The planner requires the reference inputs and the relative state. Three sources
of error in those quantities are considered, in decreasing order of how binding
they are.

Knowledge of the reference thrust magnitude is the tightest requirement. An error
in the assumed $\|\bar a\|$ propagates entirely through the coupling block
$-\skewmat{\bar a}$ of~\eqref{eq:A_matrix}, and the terminal error grows by
$5.1$~m per percent at $5^\circ$ on the low-Earth-orbit scenario while vanishing
identically at $\theta = 0$. Retaining the linearization residual as the dominant
error therefore requires the reference thrust to be known to $1$ to $2\%$.

Knowledge of the relative attitude is a weaker requirement in practice. Terminal
error grows linearly at $116$~m per degree on S1 and $18$~m per degree on S3, so
approximately $20\%$ of the coupling correction is forfeited per degree of
knowledge error and $1^\circ$ retains roughly $80\%$ of it. Star-tracker attitude
determination delivers arcsecond-class knowledge, so in cooperative operations
where both vehicles share attitude over a crosslink this condition is met with
substantial margin. Knowledge errors in relative position and velocity are
benign, mapping at $1.42$~m per metre and $1.21$~m per millimetre per second with
no coupling amplification.

Finite burn duration is accommodated by centering the burn arc on the impulse
epoch, as anticipated in Remark~\ref{rem:impulsive}, which holds the terminal
error within $10\%$ of the impulsive floor up to $8\%$ of the transfer horizon at
departure and $6\%$ at arrival. The binding mechanism is the one that remark
identifies, namely the rotation of the reference frame during the burn, which
exceeds the finite-burn effect itself by a factor of $2.6$ at every burn duration
and disappears when the thrust direction is held inertially rather than in the
body frame. At arrival the first condition to fail is the post-burn velocity
residual of~\eqref{eq:rendezvous_condition}, which grows at $0.34$~mm/s per
second of burn duration.

Applied simultaneously at realistic magnitudes, these perturbations give a median
single-shot terminal error of $19.6$~m and a ninetieth percentile of $36.0$~m,
dominated by the attitude knowledge error at the departure epoch. Because
$\tilde\Phi$ and $b$ are cached, replanning from an updated navigation solution
is nearly free, and two interior replans reduce the median to $1.55$~m and the
ninetieth percentile to $2.70$~m while adding $4\%$ to the total impulse
magnitude. Figure~\ref{fig:envelope} shows the two distributions. Each replan
removes the error injected at earlier epochs, so only the
last epoch's error survives, attenuated by the remaining horizon to $8$ to $9\%$
of its single-shot sensitivity. Under perfect knowledge the same two replans
reduce the terminal error from $9.10$~m to $0.57$~m for a $1.4\%$ increase. The
accuracy floor of Table~\ref{tab:mechanism} is therefore a property of
single-shot planning rather than of the framework.

\begin{figure}[t]
    \centering
    \includegraphics[width=0.6\textwidth]{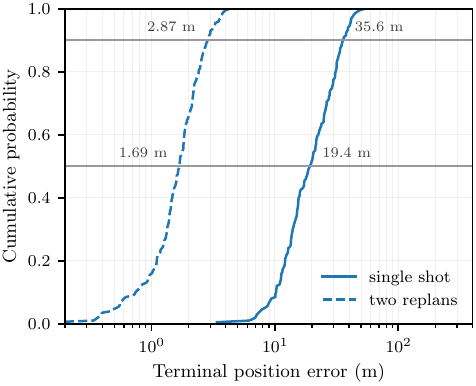}
    \caption{Terminal position error under simultaneous knowledge and
             burn-duration perturbation, with and without two replans.}
    \label{fig:envelope}
\end{figure}

Three perturbations are not represented in~\eqref{eq:loglinear}, lying outside
both $\tilde\nu$ and $\tilde m$ of~\eqref{eq:tildes}. The largest in low Earth
orbit is the second zonal harmonic, contributing $0.59$~m over a transfer at
$500$~km, roughly $6\%$ of the linearization residual, and raising the
zero-attitude-error floor of Rows~A and~B from $2.2$~mm to $0.50$~m. Those floors
are therefore limited by the two-body truth model rather than by the framework.
Differential drag contributes $4.5$~cm at the same altitude and grows to $1.20$~m
at $300$~km, and differential solar radiation pressure contributes $0.149$~m at
geosynchronous altitude, some $28\%$ of the residual there. All three become
dominant on formation-keeping timescales, which places them within the scope of
future reachable-set work.


\section{Conclusions}
\label{sec:conclusions}

This paper has developed a two-impulse rendezvous planning framework from
log-error dynamics on $\mathrm{SE}_2(3)$ with linearized gravity and linearized
control mismatch. The gravity
mismatch is decomposed into an attitude-induced component $\delta_{\mathrm{att}}$
and a higher-order gravity component $\delta_{\mathrm{grav}}$, each with tight
analytic bounds showing both are small fractions of the first-order tidal term at
typical proximity-operations scales.

Under circular coasting Keplerian assumptions the resulting linear time-varying
system is shown to recover the HCW equations exactly, without additional
approximation. This establishes HCW as a special case of the
$\mathrm{SE}_2(3)$ geometric framework rather than an independent model.

Relaxing the matched-control assumption, the body-frame input mismatch is
linearized to first order with sharp residual bounds in each block. Absorbing
the linear part re-centers the system matrix at the mean of the two vehicles'
inputs, producing a forced linear time-varying system that reduces to the
matched-control case exactly when the mismatch vanishes.

The resulting two-impulse planning method generalizes naturally to thrusting
reference trajectories and accounts for attitude--translation coupling through
off-diagonal blocks of the state transition matrix---an effect absent in classical
HCW-based planners, and absent equally from the Yamanaka--Ankersen state
transition matrix, which treats arbitrary eccentricity exactly under coasting yet
fails identically to the circular-orbit model once attitude error meets thrust.

Future work will investigate robustness to modeling uncertainty using LMI-based
reachable set analysis and recovery of the
Yamanaka--Ankersen state transition matrix as a special case
for elliptical coasting references.


\appendix
\numberwithin{equation}{section}
\renewcommand{\theequation}{\thesection\arabic{equation}}

\section{Singular Values of the $\mathrm{SO}(3)$ Left Jacobian}
\label{app:jacobian_proof}

We work from the closed form $J_\ell^{\mathrm{SO}(3)} = I + \gamma_1 S + \gamma_2 S^2$
\cite[\S7.1.5]{barfoot2024}, a polynomial in the skew matrix $S = [\xi_R]_\times$.
Since $S$ is skew, $\mathbb{R}^3$ splits orthogonally into the axis
$\mathrm{span}\{\hat\xi\}$, $\hat\xi = \xi_R/\theta$, and the plane
$P = \hat\xi^\perp$; both are invariant under $S$ (hence under any polynomial
in $S$), so we analyze $J_\ell^{\mathrm{SO}(3)}$ on each separately.

\emph{Axis.} $S\hat\xi = \xi_R \times \hat\xi = 0$, so $S^2\hat\xi = 0$ and
$J_\ell^{\mathrm{SO}(3)}\hat\xi = \hat\xi$. The axis contributes the singular
value $1$.

\emph{Plane.} Choose an orthonormal basis $\{w, w'\}$ of $P$ with $w' =
\hat\xi \times w$ (so $\{\hat\xi, w, w'\}$ is right-handed). Then, using the
cross-product identity $\hat\xi \times (\hat\xi \times w) = \hat\xi(\hat\xi^\top
w) - w = -w$ for $w \perp \hat\xi$,
\begin{equation}
    Sw = \xi_R \times w = \theta\, w', \qquad
    Sw' = \xi_R \times (\hat\xi \times w) = \theta\,(\hat\xi \times w')
        = -\theta\, w.
    \label{eq:app_plane_action}
\end{equation}
Thus on $P$, in the basis $\{w, w'\}$, $S$ acts as $\theta
\left[\begin{smallmatrix}0 & -1\\ 1 & 0\end{smallmatrix}\right]$---that is, as
$\theta$ times a $90^\circ$ rotation---and consequently $S^2 = -\theta^2 I$ on
$P$. This last relation is exactly what allows the plane to be identified with
$\mathbb{C}$ via $w \leftrightarrow 1$, $w' \leftrightarrow i$, under which
$S \leftrightarrow i\theta$; $S^2 \leftrightarrow (i\theta)^2 = -\theta^2$ is
consistent with $S^2 = -\theta^2 I$. Under this identification
$J_\ell^{\mathrm{SO}(3)}$ acts on $P$ as multiplication by the complex number
\begin{equation}
    c_\mathbb{C} = 1 + \gamma_1(i\theta) + \gamma_2(i\theta)^2
      = (1 - \gamma_2\theta^2) + i\,\gamma_1\theta
      = \tfrac{\sin\theta}{\theta} + i\,\tfrac{1-\cos\theta}{\theta},
    \label{eq:app_c}
\end{equation}
where the last equality uses $\gamma_1\theta = (1-\cos\theta)/\theta$ and $1 -
\gamma_2\theta^2 = 1 - (\theta - \sin\theta)/\theta = \sin\theta/\theta$. Its modulus
is
\begin{equation}
    |c_\mathbb{C}| = \frac{1}{\theta}\sqrt{\sin^2\theta + (1-\cos\theta)^2}
        = \frac{1}{\theta}\sqrt{2 - 2\cos\theta}
        = \frac{2\sin(\theta/2)}{\theta},
    \label{eq:app_cmod}
\end{equation}
using $\sin^2\theta + (1-\cos\theta)^2 = 2 - 2\cos\theta$ and the half-angle
identity $2 - 2\cos\theta = 4\sin^2(\theta/2)$. Its argument follows from the
same two identities as
$\arg c_\mathbb{C} = \arctan\bigl[(1-\cos\theta)/\sin\theta\bigr] = \theta/2$,
exactly and not merely to leading order; this is the rotation invoked
in~\eqref{eq:impulse_norm_bounds} and in Section~\ref{sec:num_mechanism}.

\emph{Singular values on the plane.} Multiplication by
$c_\mathbb{C} = |c_\mathbb{C}|\,e^{i\arg c_\mathbb{C}}$
is the composition of a rotation (by $\arg c_\mathbb{C}$, an isometry) and a uniform
scaling by $|c_\mathbb{C}|$; equivalently, as a real $2\times 2$ map it is
$|c_\mathbb{C}|\left[\begin{smallmatrix}\cos\phi & -\sin\phi\\ \sin\phi &
\cos\phi\end{smallmatrix}\right]$ with $\phi = \arg c_\mathbb{C}$, whose singular values
are both $|c_\mathbb{C}|$. Hence $J_\ell^{\mathrm{SO}(3)}$ has both plane singular values
equal to $|c_\mathbb{C}|$.

Combining the axis and plane contributions, the singular values of
$J_\ell^{\mathrm{SO}(3)}$ are $\{1,\, |c_\mathbb{C}|,\, |c_\mathbb{C}|\} = \{1,\,
2\sin(\theta/2)/\theta,\, 2\sin(\theta/2)/\theta\}$, which
establishes~\eqref{eq:jacobian_svals}. \qed

\section{Commutator Norms for the Attitude-Induced Gravity Bound}
\label{app:datt}

This appendix derives the two commutator norms~\eqref{eq:commutator_norms} used
in the bound on $\delta_{\mathrm{att}}$. In the body frame $G(t) =
\bar{R}^\top G_I(t)\bar{R} = \tfrac{\mu}{r^3}(3\hat{r}\hat{r}^\top - I)$, which
is symmetric with eigenvalues $\tfrac{\mu}{r^3}\{2,-1,-1\}$, so $\|G(t)\| =
2\mu/r^3$. Its isotropic part commutes with $S$, so
\begin{equation}
    [G(t),B] = \tfrac{3\mu}{r^3}[\hat{r}\hat{r}^\top,B]
    \qquad\text{for any } B \in \mathbb{R}^{3\times3}.
    \label{eq:app_comm_identity}
\end{equation}

\emph{First commutator.} With $z := S \hat{r} = \xi_R \times \hat{r}$ (so
$z \perp \hat{r}$ and $\|z\| = \theta\sin\varphi$), using
$S\,\hat{r}\hat{r}^\top = (S\hat{r})\hat{r}^\top$ and its transpose
$\hat{r}\hat{r}^\top S = -\hat{r}(S\hat{r})^\top$ (from $S^\top = -S$),
\begin{equation}
    [G(t),S] = \tfrac{3\mu}{r^3}[\hat{r}\hat{r}^\top,S]
    = -\tfrac{3\mu}{r^3}\bigl(\hat{r} z^\top + z \hat{r}^\top\bigr).
    \label{eq:app_comm1}
\end{equation}
The symmetric matrix $\hat{r} z^\top + z \hat{r}^\top$ acts only on
$\mathrm{span}\{\hat{r},\hat z\}$ ($\hat z := z/\|z\|$), where in that
orthonormal basis it equals
$\|z\|\left[\begin{smallmatrix}0&1\\1&0\end{smallmatrix}\right]$, with
eigenvalues $\pm\|z\|$. Hence
$\|[G(t),S]\| = \tfrac{3\mu}{r^3}\|z\| = \tfrac{3\mu}{r^3}\theta\sin\varphi$.

\emph{Second commutator.} Since $S^2 = \theta^2(\hat\xi\hat\xi^\top - I)$ and
the central $-I$ commutes with $G(t)$, using $\hat{r}^\top\hat\xi = \cos\varphi$,
\begin{equation}
    [G(t),S^2] = \tfrac{3\mu}{r^3}\theta^2[\hat{r}\hat{r}^\top,\hat\xi\hat\xi^\top]
    = \tfrac{3\mu}{r^3}\theta^2\cos\varphi\,
      \bigl(\hat{r}\hat\xi^\top - \hat\xi \hat{r}^\top\bigr).
    \label{eq:app_comm2}
\end{equation}
Resolving $\hat\xi = \cos\varphi\,\hat{r} + \sin\varphi\,w$ with $w \perp \hat{r}$
unit, the symmetric $\cos\varphi\,\hat{r}\hat{r}^\top$ contribution cancels in
the antisymmetric bracket, leaving $\hat{r}\hat\xi^\top - \hat\xi \hat{r}^\top =
\sin\varphi\,(\hat{r}w^\top - w\hat{r}^\top)$. On $\mathrm{span}\{\hat{r},w\}$
this acts as
$\left[\begin{smallmatrix}0&1\\-1&0\end{smallmatrix}\right]$, of unit spectral
norm. Therefore, using $2\sin\varphi\cos\varphi = \sin2\varphi$,
$\|[G(t),S^2]\| = \tfrac{3\mu}{r^3}\theta^2|\cos\varphi|\sin\varphi =
\tfrac{3\mu}{2r^3}\theta^2|\sin2\varphi|$. This
establishes~\eqref{eq:commutator_norms}. \qed

\section{Monotonicity of the Inverse-Jacobian Coefficient}
\label{app:ma_mp}

This appendix establishes the inequality $m_a > 2m_p$ used in the proof of
Lemma~\ref{lem:M_bound}, where $m_a = \beta'$ and $m_p = \beta/\theta$
by~\eqref{eq:m_a} and~\eqref{eq:m_p}.

Differentiating $\beta(\theta) = 1 - (\theta/2)\cot(\theta/2)$ gives
$\beta' = -\tfrac12\cot\tfrac{\theta}{2} + \tfrac{\theta}{4}\csc^2\tfrac{\theta}{2}$, so
\begin{equation}
    m_a - 2m_p = \beta' - \frac{2\beta}{\theta}
    = \tfrac{1}{2}\cot\tfrac{\theta}{2}
      + \tfrac{\theta}{4}\csc^2\tfrac{\theta}{2} - \frac{2}{\theta}
    = \frac{h(\theta)}{4\,\theta\,\sin^2(\theta/2)},
    \label{eq:app_ma_mp_identity}
\end{equation}
where, using $2\sin\tfrac{\theta}{2}\cos\tfrac{\theta}{2} = \sin\theta$ and
$2\sin^2\tfrac{\theta}{2} = 1 - \cos\theta$,
\begin{equation}
    h(\theta) := \theta^2 + \theta\sin\theta + 4\cos\theta - 4.
    \label{eq:app_h_def}
\end{equation}
The denominator of~\eqref{eq:app_ma_mp_identity} is positive on $(0,\pi)$, so it suffices to
show $h > 0$ there. Successive differentiation gives
\begin{equation}
    h' = 2\theta + \theta\cos\theta - 3\sin\theta,
    \qquad
    h'' = 2 - 2\cos\theta - \theta\sin\theta,
    \qquad
    h''' = \sin\theta - \theta\cos\theta,
    \label{eq:app_h_derivs}
\end{equation}
with $h(0) = h'(0) = h''(0) = h'''(0) = 0$. On $[0,\pi/2]$ one has
$h''' = \cos\theta\,(\tan\theta - \theta) \ge 0$, since $\cos\theta \ge 0$ and
$\tan\theta \ge \theta$ there; on $(\pi/2,\pi]$ both $\sin\theta \ge 0$ and
$-\theta\cos\theta \ge 0$. Hence $h''' \ge 0$ on $[0,\pi]$, with strict inequality on
$(0,\pi)$. Integrating three times from zero gives $h'' \ge 0$, then $h' \ge 0$, then
$h > 0$ on $(0,\pi)$, and therefore $m_a > 2m_p$ there.

Equivalently, since $m_a - 2m_p = \theta^2 c'(\theta)$ by differentiation of
$c = \beta/\theta^2$, the inequality states that the coefficient $c$
of~\eqref{eq:Jl_inv_so3} is strictly increasing on $(0,\pi)$. As $\theta \to 0^+$,
$h(\theta) = \theta^6/360 + O(\theta^8)$, so the ratio $m_a/m_p$ approaches $2$ from above and
the inequality is tight in the small-angle limit. \qed


\section{Attitude Axis and Thrust Direction Dependence}
\label{app:axis}

The nominal state~\eqref{eq:nominal_state} fixes the attitude-error axis and
the thrust direction. Both affect the terminal error at fixed magnitude, and
this appendix reports the distributions.

Holding $\|\xi_R(0)\| = 5^\circ$ and sweeping the axis over a $200$-point
Fibonacci lattice on the sphere, the S1 terminal error of the proposed planner
ranges from $1.88$ to $39.8$~m with a median of $23.2$~m. The nominal axis
$[1,1,1]^\top/\sqrt3$ gives $9.10$~m, at the thirteenth percentile. The bound
ratio of~\eqref{eq:bound_att} remains between $0.973$ and $0.99996$ across all
$200$ axes, with no correlation to the terminal error, so the direction
dependence retained in~\eqref{eq:bound_att} through $\varphi$ tracks the true
attitude-induced error for every axis and not only for the tabulated one.

Sweeping the thrust direction over a $100$-point lattice at fixed attitude
axis gives a $13.8$-fold spread in terminal error, with the nominal radial
choice just below the median. Neither the angle to $\hat r$ nor the angle to
$a_{\mathrm m}$ explains the pattern alone, so the dependence is a joint
function of the two directions rather than a single geometric factor.


\bibliographystyle{aiaa}
\bibliography{references}

\end{document}